\newtheorem{thm}{Theorem}[section]
\newtheorem{cor}[thm]{Corollary}
\newtheorem{lem}[thm]{Lemma}
\newtheorem{prop}[thm]{Proposition}
\newtheorem{hyp}[thm]{Hypothesis}
\theoremstyle{definition}
\theoremstyle{remark}
\newtheorem{rem}[thm]{Remark}
\newcommand{\CC}{\mathbb{C}}                % Complex numbers
\newcommand{\RR}{\mathbb{R}}                % Real numbers
\newcommand{\ZZ}{\mathbb{Z}}                % Relative integers
\newcommand{\Ela}{\mathbb{E}\mathrm{la}}    % Space of elasticity tensors
\newcommand{\HH}{\mathbb{H}}                % Space of harmonic tensors
\newcommand{\GL}{\mathrm{GL}}               % General linear group
\newcommand{\SO}{\mathrm{SO}}               % Special orthogonal group
\newcommand{\OO}{\mathrm{O}}                % Orthogonal group
\newcommand{\GS}{\mathfrak{S}}              % Symmetric group
\newcommand{\DD}{\mathbb{D}}                % Dihedral group
\newcommand{\octa}{\mathbb{O}}              % Cubic (octahedral) group
\newcommand{\ico}{\mathbb{I}}               % Icosahedral group
\newcommand{\tetra}{\mathbb{T}}             % tetrahedral group
\newcommand{\triv}{\mathds{1}}              % trivial group
\newcommand{\gl}{\mathfrak{gl}}             % General linear algebra
\newcommand{\card}{\operatorname{card}}
\newcommand{\diag}{\operatorname{diag}}
\newcommand{\Id}{\mathrm{Id}}
\newcommand{\NF}{\operatorname{NF}}
\newcommand{\tr}{\operatorname{tr}}
\newcommand{\dev}{\operatorname{dev}}
\newcommand{\norm}[1]{\lVert#1\rVert}                   % norm
\newcommand{\abs}[1]{\lvert#1\rvert}                    % modulus
\newcommand{\set}[1]{\left\{#1\right\}}                 % set
\newcommand{\sprod}[2]{\langle #1 , #2 \rangle}         % scalar product
\newcommand{\strata}[1]{\Sigma_{[#1]}}	                % open stratum
\newcommand{\cstrata}[1]{\overline{\Sigma}_{[#1]}}	    % closed stratum
\newenvironment{menumerate}{%
    	\begin{enumerate}} {\end{enumerate}}
\begin{document}

\title{Invariant-based approach to symmetry class detection}

%  Information for first author
\author{N. Auffray}
\address{LMSME, Universit\'{e} Paris-Est, Laboratoire Mod\'{e}lisation et Simulation Multi Echelle, MSME UMR 8208 CNRS, 5 bd Descartes, 77454 Marne-la-Vall\'{e}e, France}
\email{Nicolas.auffray@univ-mlv.fr}

%  Information for second author
\author{B. Kolev}
\address{LATP, CNRS \& Universit\'{e} de Provence, 39 Rue F. Joliot-Curie, 13453 Marseille Cedex 13, France}
\email{kolev@cmi.univ-mrs.fr}

%  Information for third author
\author{M. Petitot}
\address{LIFL, Universit\'{e} des Sciences et Technologies de Lille I, 59655 Villeneuve d'Ascq CEDEX, France}
\email{Michel.Petitot@lifl.fr}

%\thanks{} % Acknowledgements
\subjclass[2000]{74B05, 15A72}%
\keywords{Anisotropy, Symmetry classes, Invariant theory}%

\date{\today}%
%\dedicatory{}%
%\commby{}%

% ----------------------------------------------------------------

\begin{abstract}
In this paper, the problem of the identification of the symmetry class of a given tensor is asked. Contrary to classical approaches which are based on the spectral properties of the linear operator describing the elasticity, our setting is based on the invariants of the irreducible tensors appearing in the harmonic decomposition of the elasticity tensor \cite{FV96}. To that aim we first introduce a  geometrical description of $\Ela$, the  space of elasticity tensors. This framework is used to derive invariant-based conditions that characterize symmetry classes. For low order symmetry classes, such conditions are given on a triplet of quadratic forms extracted from the harmonic decomposition of $C$, meanwhile for higher-order classes conditions are provided in terms of elements of $\HH^{4}$, the higher irreducible space in the decomposition of $\Ela$. Proceeding in such a way some well known conditions appearing in the Mehrabadi-Cowin theorem for the existence of a symmetry plane \cite{CM87} are retrieved, and  a set of algebraic relations on $\HH^{4}$ characterizing the orthotropic ($[\DD_{2}]$), trigonal ($[\DD_{3}]$), tetragonal ($[\DD_{4}]$), transverse isotropic ($[\SO(2)]$) and cubic ($[\octa$]) symmetry classes are provided. Using a genericity assumption on the elasticity tensor under study, an algorithm to identify the symmetry class of a large set of tensors is finally provided.
\end{abstract}

\maketitle

% --------------------------- Table of content ----------------

\begin{scriptsize}
\setcounter{tocdepth}{2}
\tableofcontents
\end{scriptsize}

% ----------------------------------------------------------------
% ----------------------------------------------------------------

\section*{Introduction}
\label{sec:introduction}

% ----------------------------------------------------------------

\subsection*{Physical motivation}

In the theory of linear elasticity, the \emph{stress tensor} $\sigma$ and the \emph{strain tensor} $\varepsilon$ are related, at a fixed temperature, by the \emph{Hooke's law}
\begin{equation*}\label{eq:hooke}
    \sigma^{ij} = C^{ijkl}\varepsilon_{kl}
\end{equation*}
If the medium under study is homogeneous, its elastic behavior is fully characterized by a 4th-order \emph{elasticity tensor}  $\mathbf{C}$. The infinitesimal \emph{strain tensor} $\varepsilon$ is defined as the symmetric displacement gradient
\begin{equation*}
    \varepsilon_{ij}=\frac{1}{2}(u_{i,j}+u_{j,i})
\end{equation*}
and, assuming that the material is not subjected to volumic couple, the associated stress tensor is the classical symmetric Cauchy's one.
Hence, both $\sigma$ and $\varepsilon$ belong to the $6$-D vector space $S^{2}(\RR^{3})$, where $S^{2}$ denotes the symmetric tensor product. As a consequence, the elasticity tensor is endowed with \emph{minor symmetries}:
\begin{equation*}
    C^{ijkl} = C^{jikl} = C^{ijlk} .
\end{equation*}
In the case of hyperelastic materials, the stress-strain relation is furthermore assumed to derive from an elastic potential. Therefore the elasticity tensor is the second-derivative of the potential energy with respect to strain tensor. Thus, as a consequence of Schwarz's theorem, the hyperelasticity tensor possesses the \emph{major symmetry}:
\begin{equation*}
    C^{ijkl}= C^{klij} .
\end{equation*}
The space of  elasticity tensors is therefore the $21$-D vector space $\Ela = S^{2}S^{2}(\RR^{3})$. To each hyperelastic material corresponds an elasticity tensor  $\mathbf{C}$ but this association is not unique, there is a \emph{gauge group}. Indeed, a designation based on the components of  $\mathbf{C}$ in a fixed reference frame is relative to the choice of a \emph{fixed orientation} of the material. If this problem is invisible for isotropic materials, it becomes more prominent as the anisotropy increases. For a fully anisotropic (triclinic) material how can it be decided whether two sets of components represent the same material ? To address this problem some approaches have been proposed in the literature so far. Roughly speaking, those propositions are based on the computation of the elasticity tensor spectral decomposition. But despite of being theoretically well-founded, the symmetry class identification relies on the evaluation of the roots multiplicity of the characteristic polynomial. Such a problem is known to be highly ill-posed \cite{Zen05}. Therefore this approach is difficult to handle in practice especially working with noise corrupted data.

In order to avoid the computation of the spectral decomposition an alternative approach based on the \emph{invariants} (and \emph{covariants}) of the elasticity tensor is presently considered. Such a way to proceed is in the perspective of Boheler and al. \cite{BKO94} original work. It is interesting to notice that the general framework of the present paper is already well-known by the high energy physics community \cite{AS81,AS83,SV03}. However, these methods do not seem to have been yet applied in elasticity  and especially for a representation as sophisticated as $\HH^{4} (\RR^{3})$, which \emph{invariant algebra} is not free.
Reconstruction method based on this invariant-based approach, and numerical comparisons between approaches will be the objects of forthcoming papers.
Before introducing in depth the proposed method, we wish to draw a short picture of this approach. Therefore, the next paragraph will be devoted to present  the  method in a nutshell.

% ----------------------------------------------------------------

\subsection*{Invariant-based identification in a nutshell}
\label{subsec:normal_forms_nutshell}

As the material is rotated by an element $\mathbf{g}(=g_{i}^{j}) \in \SO(3)$, the elasticity tensor  $\mathbf{C}$ moves under the action of the 3D rotation group $\SO(3)$ on the space of elasticity tensors $\Ela$
\begin{equation*}\label{eq:elastic_representation}
    C^{ijkl} \mapsto g_{p}^{i}\,g_{q}^{j}\,g_{r}^{k}\,g_{s}^{l}\, C^{pqrs} .
\end{equation*}
Hence, from the point of view of linear elasticity, the classification of elastic materials can be assimilated to the description of the \emph{orbits} of $\SO(3)$-action on $\Ela$. Practically, this is a  difficult problem because the orbit space of a Lie group action is not a smooth manifold in general. Usually, the orbit space is a \emph{singular space} which can described by the mean of the \emph{isotropic stratification}~\cite{AS83} (described in \autoref{sec:orbit_spaces}). In the field of elasticity, this \emph{isotropic stratification} was first established by Forte and Vianello \cite{FV96}\footnote{The concept of symmetry class used by Forte and Vianello is equivalent to the notion of an isotropic stratum.}.

Since the rotation group $\SO(3)$ is compact, the algebra of invariant polynomial functions on $\Ela$ is finitely generated~\cite{Wey97} and \emph{separates the orbits}~\cite[Appendix C]{AS83}. Therefore, it is possible to find a finite set of polynomial invariants $\set{J_{1}, \dotsc, J_{N}}$ which define a polynomial function $J: \Ela \to \RR^{N}$ such that
\begin{equation*} \label{eq:J}
    J(\mathbf{C}) = J(\mathbf{C}^{\prime}) \quad \text{iff} \quad \mathbf{C} \approx \mathbf{C}^{\prime}.
\end{equation*}
This way, an $\SO(3)$-orbit can be identified with a point in $J(\Ela) \subset \RR^N$. A classical result from algebraic geometry states that the image of a $\RR$-vector space under a polynomial map is a \emph{semi-algebraic set}, that is a subset of $\RR^{N}$ defined by a boolean combination of polynomial equations and inequalities over $\RR$ \cite{Cos02}. The orbit space is therefore a semi-algebraic set.
The invariants $\set{J_{1}, \dotsc, J_{N}}$ are polynomials in $21$ variables ($\dim \Ela=21$). To proceed towards our goal, we will make an extensive use of the decomposition of $\Ela$ into $\SO(3)$-irreducible components, also known as the \emph{harmonic decomposition} (c.f. \autoref{sec:elasticity_tensors}):
\begin{equation*}
    \Ela := S^{2} S^{2}(\RR^{3}) \simeq 2 \HH^{0} (\RR^{3})  \oplus 2 \HH^{2}(\RR^{3})   \oplus \HH^{4} (\RR^{3})
\end{equation*}
where $\HH^{n}(\RR^{3}) $ is the space of $n$th-order \emph{harmonic tensors} on $\RR^{3}$.  Because of the lack of ambiguity, this notation will be shorten $\HH^{n}$. It is well known that $\dim \HH^n = 2n+1$ for all $n\ge 0$.  Historically this decomposition has already been used in the study of anisotropic elasticity tensors~\cite{Bac70,JCB78,Bae93,FV96}. We therefore refer to these references for a deeper insight into this topic.

It is known \cite{FV96} that $\Ela$ is divided into 8 conjugacy classes for the $\SO(3)$-action. For each isotropy class $[H]$, we build a \emph{linear slice} $\Ela^{H}$ and we show that the equivalence relation $\mathbf{C} \approx \mathbf{C}^{\prime}$ where $\mathbf{C}, \mathbf{C}' \in \strata{H}$ reduces to the linear action of a certain \emph{monodromy group} $\Gamma^{H}$ on $\Ela^{H}$. The cardinal of $\Gamma^{H}$ is equal to the number of intersection points of a generic orbit in $\strata{H}$ with the linear slice $\Ela^{H}$. The closure of each stratum $\strata{H}$, which we denote by $\cstrata{H}$ and refer to as a closed stratum, is a semialgebraic set which we describe explicitly by a finite number of equations and inequalities on the invariant polynomials $\set{J_{1}, \dotsc, J_{N}}$. For each class $[H]$ for which $\Gamma^{H}$ is finite, we give \emph{explicit relations} between invariants to check whether $\mathbf{C}$ belongs to this class.

% ----------------------------------------------------------------

\subsection*{Organization of the paper}

The first section of this paper (\autoref{sec:orbit_spaces}) is devoted to the introduction of the framework uses for this study. There is no claim for any originality here, similar material can be found in \cite{GSS88,SV03}. We have tried to make this paper accessible with
only few prerequisites. The notions of \emph{stratification} and \emph{linear slice} are defined, together with a group theoretical method to compute its associated degree. In the second section (\autoref{sec:stratification_quadratic_forms}), we illustrate these concepts on a simple example: the stratification of the space of quadratic forms on $\RR^{3}$ under the $\SO(3)$-action and we study the stratification of a $n$-uple of quadratic forms. This constitutes an extension of results on the stratification of $2$ quadratic forms. Our results are provided in terms of harmonic tensors, which is not so usual.
In \autoref{sec:elasticity_tensors}, the harmonic decomposition is introduced and the stratification of $\Ela$ discussed. The geometrical concepts introduced in \autoref{sec:orbit_spaces} will be enlighten in view of well-known results of anisotropic elasticity. In \autoref{sec:stratification_H4}, we study the strata of $\HH^{4}$ which have finite monodromy group. In each situation the stratification is established and algebraic criteria identifying symmetry classes are provided. These criteria are, up to authors' best knowledge, new. This section will be closed by a bifurcation diagram that sums up the symmetry breaking from $\HH^4$ together with the algebraic relations that induce those transitions.

According to the restriction to linear slices, only $6$ out of the $8$ elastic symmetry classes possess linear slices with finite monodromy group. Therefore, in the present paper, algebraic relations are only provided for the following anisotropic classes : isotropy ($[\SO(3)]$), cubic ($[\mathbb{O}]$), transverse isotropy ($[\OO(2)]$), tetragonal ($[\mathbb{D}_{4}]$), trigonal ($[\mathbb{D}_{3}]$), orthotropic ($[\mathbb{D}_{2}]$). Nevertheless, the combination of results from \autoref{sec:stratification_quadratic_forms} and \autoref{sec:stratification_H4} completed by an assumption on the genericity of anisotropic of the elasticity tensor understudy (in a sense that will be defined in \autoref{sec:stratifaction_ela}.) allow  the $8$ symmetry classes to be identified. This paper will be concluded in \autoref{sec:conclusion} and the extension of the method to a broader class of situations will be discussed.

% ----------------------------------------------------------------

\subsection*{Main results}

The main results of this paper are
\begin{itemize}
\item In \autoref{sec:elasticity_tensors} a rigorous geometric framework is provided that allows to recover important features of anisotropic elasticity \cite{Nor89}. Furthermore we draw a link between the conditions appearing in the well-known Mehrabadi-Cowin theorem \cite{CM87} and the elasticity tensor harmonic decomposition ;
\item In \autoref{sec:stratification_H4}, five sets of algebraic relations are provided to identify the following symmetry classes in $\HH^{4}$: orthotropic ($[\DD_{2}]$), trigonal ($[\DD_{3}]$), tetragonal ($[\DD_{4}]$), transverse isotropic ($[\SO(2)]$) and cubic ($[\octa$]). Furthermore, we provide a parametrization of the corresponding strata by rational expressions involving up to $6$ polynomial invariants, namely: $J_{2},\ldots,J_{7}$.
\item As an observation, we also prove that the invariants defined by the coefficients of the Betten polynomial \cite{Bet87} do not separate the orbits of $\Ela$.
\end{itemize}

% ----------------------------------------------------------------

\subsection*{Notations}
\label{subsec:notations}

To conclude this introduction let us specify some typographical conventions that will be used throughout this paper. Scalars will be noted with Greek letters ; minuscule letters indicate elements of $\HH^{2}$, i.e deviator ; bold minuscule letters stand for elements of $\RR^{3}\otimes\RR^{3}$, i.e. second order tensors ; elements of $\HH^{4}$ will be indicated by capital letters, meanwhile elasticity tensors (elements of $S^{2}S^{2}(\RR^{3}))$ will be noted  by bold capital letters.
Some minors exceptions to that rule may occur (such as for the stress $\sigma$ and strain $\varepsilon$ tensors) and will therefore be indicated in the text.

% ----------------------------------------------------------------
% ----------------------------------------------------------------

\section{Geometry of orbit spaces}
\label{sec:orbit_spaces}

In this section, we consider a linear representation
\begin{equation*}
    \rho: G \to \GL(V)
\end{equation*}
of a \emph{compact} real Lie group\footnote{In the following $G$ will solely represent a \emph{compact} real Lie group, therefore this precision will mostly be omitted.} $G$ on a finite dimensional $\RR$-vector space $V$. This action will be noted:
\begin{equation*}
    g\cdot v := \rho(g)v
\end{equation*}
Unless a real Lie group $G$ acts without \emph{fixed point} on $V$, the \emph{orbit space} $V/G$ is not a \emph{differentiable manifold} \cite{AS83}. This is the main reason why the description of the orbit space is difficult in general.  However, for a compact Lie group, the algebra of $G$-invariant polynomials, is known to separate the orbit (c.f. \autoref{subsec:separation_orbits}). Using this algebra, $V/G$ can be described as a semialgebraic\footnote{A semialgebraic set is a subset $S$ of $\RR^n$ defined by a finite sequence of polynomial equations (of the form $P(x_{1},\dotsc,x_{n}) = 0$) and inequalities (of the form $Q(x_{1},\dotsc,x_{n})$).} subset of $\RR^N$. Let $\set{J_{1}, J_{2}, \dotsc, J_{N}}$ be a set of polynomial invariants which separate the $G$-orbits. Then, the map
\begin{equation*}
    J : v \mapsto \big(J_{1}(v), J_{2}(v), \dotsc, J_{N}(v)\big)
\end{equation*}
induces an algebraic homeomorphism between the \emph{orbit space} $V/G$ and $J(V)\subset \RR^{N}$ which is a semialgebraic subset of $\RR^N$. In particular, such a set is a \emph{stratified space}. It is the union of manifolds of various dimensions, called \emph{strata}, each of them being the union of orbits of a given \emph{isotropy type}. This construction is the object of the present section.

% ----------------------------------------------------------------

\subsection{G-orbits}
\label{subsec:orbits}

Two vectors $v_{1}$ and $v_{2}$ are said to be $G$-related, and we write $v_{1} \approx v_{2}$ if there exists $g\in G$ such that $v_{2} = g\cdot v_{1}$. The set of all vectors $v\in V$ which are related to $v$ by $G$ is called the $G$-orbit of $v$ and is denoted by $G\cdot v$:
\begin{equation*}
    G\cdot v := \set{g\cdot v \mid g\in G}
\end{equation*}
The $G$-orbits are compact submanifolds of $V$~\cite{AS83,Bre72}.

% ----------------------------------------------------------------

\subsection{Isotropy subgroups}
\label{subsec:isotropy_subgroups}

The set of transformations of $G$ which leave the vector $v$ fixed forms a closed subgroup of $G$, which is called the \emph{isotropy subgroup} (or \emph{symmetry group}) of $v$. It will be denoted by $G_{v}$:
\begin{equation*}
    G_{v} := \set{g\in G;\; g\cdot v = v}
\end{equation*}
It can be shown that symmetry groups of $G$-related vectors are conjugate:
\begin{lem}\label{lem:isotropy_groups}
For any $v\in V$ and any $g\in G$, we have
\begin{equation*}
    G_{g\cdot v} = g\ G_{v}\ g^{-1}
\end{equation*}
\end{lem}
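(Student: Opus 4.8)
The plan is to prove the identity $G_{g\cdot v} = g\, G_v\, g^{-1}$ by the standard double-inclusion argument, unwinding the definition of the isotropy subgroup in each case. This is an elementary group-action computation, so I expect no genuine obstacle; the only care needed is to use the action axiom $(g_1 g_2)\cdot v = g_1\cdot(g_2\cdot v)$ correctly and to keep track of which element is being applied to which vector.

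First I would take $h \in G_{g\cdot v}$, so that $h\cdot(g\cdot v) = g\cdot v$ by definition. Applying $g^{-1}$ to both sides and using the action axiom gives $(g^{-1}hg)\cdot v = v$, hence $g^{-1}hg \in G_v$, i.e. $h \in g\, G_v\, g^{-1}$. This shows $G_{g\cdot v} \subseteq g\, G_v\, g^{-1}$.

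Conversely, I would take an element of the form $g k g^{-1}$ with $k \in G_v$, so $k\cdot v = v$. Then $(g k g^{-1})\cdot(g\cdot v) = (g k)\cdot v = g\cdot(k\cdot v) = g\cdot v$, so $g k g^{-1} \in G_{g\cdot v}$, giving the reverse inclusion $g\, G_v\, g^{-1} \subseteq G_{g\cdot v}$. Combining the two inclusions yields the claimed equality. The main ``obstacle'', such as it is, is purely bookkeeping: making sure the conjugation is written in the correct order and that one consistently uses $\rho$ being a homomorphism so that $\rho(g^{-1}) = \rho(g)^{-1}$; no deeper idea is required.
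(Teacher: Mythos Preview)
Your proof is correct; this is exactly the standard double-inclusion argument one expects for this elementary lemma. The paper actually states this lemma without proof, so there is nothing to compare against---your argument fills in the omitted details cleanly.
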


% ----------------------------------------------------------------

\subsection{Isotropy classes}
\label{subsec:isotropy_classes}

Two vectors $v_{1}$ and $v_{2}$ are in the same \emph{isotropy class} and we write $v_{1} \sim v_{2}$ if their isotropy subgroups are \emph{conjugate} in $G$, that is there exists $g\in G$ such that
\begin{equation*}
    G_{v_{2}} = g\,G_{v_{1}}g^{-1}
\end{equation*}
Due to lemma~\ref{lem:isotropy_groups}, two vectors $v_{1}$ and $v_{2}$ which are in the same $G$-orbit are in the same isotropy class but the converse is {generally} false. Therefore, the relation \emph{``to be in the same isotropy class''} is \emph{weaker} than the relation \emph{``to be in the same $G$-orbit''}. This observation \emph{is summed up by} the following implication
\begin{equation*}
    v_{1} \approx v_{2} \Rightarrow v_{1} \sim v_{2}
\end{equation*}
The conjugacy class of a subgroup $H$ will be denoted by $[H]$. The conjugacy class of an isotropy subgroup will be called an \emph{isotropy class}. One can show that there is only a \emph{finite} number of isotropy classes for a representation of a compact Lie group (see~\cite{Bre72}). In the field of physics, for the $\SO(3)$-action on tensor, this result is known as the Hermann theorem \cite{Her45,Auf08a}.
On the set of conjugacy classes of \emph{closed} subgroups of a $G$, there is a partial order induced by inclusion. It is defined as follows:
\begin{equation*}
    [H_{1}] \preceq [H_{2}] \quad \text{if $H_{1}$ is conjugate to a subgroup of $H_{2}$}.
\end{equation*}
\emph{Endowed} with this partial order, the set of isotropy classes is a \emph{finite lattice}\footnote{The lattice of conjugacy classes of all closed subgroups of $\SO(3)$ is described in \autoref{sec:normalizers}.}. In particular, it has a \emph{least element} and a \emph{greatest element}.

% ----------------------------------------------------------------

\subsection{Isotropic stratification}
\label{subsec:isotropic_stratification}

The set of all vectors $v$ in the same isotropy class defined by $[H]$ is denoted $\strata{H}$ and called a \emph{stratum}. To avoid any misunderstanding it worths to note that a stratum is not a vector space, but a fiber bundle.
There is only a \emph{finite} number of (non empty) strata and each of them is a smooth submanifold of $V$~\cite{AS83,Bre72}. The partial order relation on conjugacy classes induces a (reverse) partial order relation on the strata
\begin{equation*}
    [H_{1}] \preceq [H_{2}] \Leftrightarrow \Sigma_{[H_{2}]} \preceq  \Sigma_{[H_{1}]}
\end{equation*}
The set of strata inherits therefore the structure of a finite lattice. An orbit $G\cdot v$ is said to be \emph{generic} if it belongs to the least isotropy class. The \emph{generic stratum} $\strata{H_{0}}$, which is the union of generic orbits can be shown to be a \emph{dense and open} set in $V$. Moreover, for each other stratum $\strata{H}$, we have $\dim \strata{H} < \dim \strata{H_{0}}$.
The partition
\begin{equation*}
    V = \strata{H_{0}} \cup \Sigma_{[H_{1}]} \cup \dotsb \cup \Sigma_{[H_{n}]}
\end{equation*}
is called the \emph{isotropic stratification} of $(V,\rho)$. $\strata{H_{0}}$ is the \emph{generic stratum}. On the opposite side, the stratum which corresponds to the greatest isotropy subgroup is called the \emph{minimum stratum}.

\begin{rem}
  Notice that the closure of a stratum $\strata{H}$, denoted $\cstrata{H}$ and called the \emph{closed stratum} associated to $[H]$, corresponds to vectors $v\in V$ such that $v$ has \emph{at least} isotropy $[H]$, whereas $\strata{H}$ corresponds to vectors $v\in V$ such that $v$ has \emph{exactly} isotropy $[H]$.
\end{rem}

% ----------------------------------------------------------------
% ----------------------------------------------------------------

\subsection{Fixed point sets and normalizers}
\label{subsec:fixed_point_sets}

Let $H$ be any subgroup of $G$. The set of vectors $v\in V$ which are fixed by $H$
\begin{equation*}
    V^{H} := \set{ v\in V \mid h.v = v \text{ for all } h \in H},
\end{equation*}
is called the \emph{fixed point set} of $H$. For each $v\in V^{H}$, $G_{v}\subset H$. One can check that if $H_{1} \subset H_{2}$ then $V^{H_{2}} \subset V^{H_{1}}$. However, it may happen that $V^{H_{2}} = V^{H_{1}}$ but $H_{1} \ne H_{2}$.
Furthermore, it should be pointed out that, in general, given $H_{1}, H_{2}$ in the same conjugacy class $[H]$, we have $V^{H_{2}} \ne V^{H_{1}}$. Notice also that a $G$-orbit belongs to an isotropy class greater than $[H]$ iff it intersects the space $V^{H}$.

Given a subgroup $H$ of $G$, the \emph{normalizer} of $H$, defined by
\begin{equation*}
    N(H) := \set{ g\in G \mid gHg^{-1} = H}
\end{equation*}
is the maximal subgroup of $G$, in which $H$ is a normal subgroup. It can also be described as the \emph{symmetry group} of $H$ when $G$ acts on its subgroups by conjugacy.

\begin{lem}\label{lem:normalizer}
The space $V^{H}$ is $N(H)$-invariant. Moreover, if $H = G_{v_{0}}$ is the \emph{isotropy group} of some point $v_{0}\in V$, then $N(H)$ is the maximal subgroup which leaves invariant $V^{H}$.
\end{lem}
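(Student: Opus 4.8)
The statement has two parts: first, that $V^{H}$ is invariant under the normalizer $N(H)$; second, that when $H = G_{v_{0}}$ for some $v_{0}$, the group $N(H)$ is the \emph{maximal} subgroup of $G$ leaving $V^{H}$ invariant. I would treat these separately.

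For the first part, let $g \in N(H)$ and $v \in V^{H}$. I want to show $g\cdot v \in V^{H}$, i.e.\ $h\cdot(g\cdot v) = g\cdot v$ for every $h \in H$. The point is that $g^{-1}hg \in H$ since $gHg^{-1} = H$ (equivalently $g^{-1}Hg = H$), so $(g^{-1}hg)\cdot v = v$ because $v \in V^{H}$. Applying $g$ to both sides and using that the action is a group action gives $h\cdot(g\cdot v) = g\cdot\big((g^{-1}hg)\cdot v\big) = g\cdot v$. Since $h \in H$ was arbitrary, $g\cdot v \in V^{H}$, which proves $N(H)$-invariance.

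For the second part, suppose $H = G_{v_{0}}$ and let $K$ be any subgroup of $G$ such that $K\cdot V^{H} \subseteq V^{H}$; I must show $K \subseteq N(H)$, i.e.\ $gHg^{-1} = H$ for every $g \in K$. Fix $g \in K$. Since $v_{0} \in V^{H}$, the hypothesis gives $g\cdot v_{0} \in V^{H}$, hence by the remark in \autoref{subsec:fixed_point_sets} that $G_{w} \subseteq H$ for every $w \in V^{H}$, we get $G_{g\cdot v_{0}} \subseteq H$. But by \autoref{lem:isotropy_groups}, $G_{g\cdot v_{0}} = g\,G_{v_{0}}\,g^{-1} = gHg^{-1}$, so $gHg^{-1} \subseteq H$. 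Applying the same argument to $g^{-1} \in K$ (note $K$ is a subgroup, and $g^{-1}\cdot V^{H} \subseteq V^{H}$ follows since $K$ preserves $V^{H}$ and the action by $g$ is a bijection of the finite-dimensional space $V^{H}$ onto itself, forcing $g\cdot V^{H} = V^{H}$) yields $g^{-1}Hg \subseteq H$, i.e.\ $H \subseteq gHg^{-1}$. Combining the two inclusions gives $gHg^{-1} = H$, so $g \in N(H)$. Since $N(H)$ itself preserves $V^{H}$ by the first part, it is indeed the maximal such subgroup.

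\emph{The main subtlety} is the bijectivity point in the second part: from $K\cdot V^{H} \subseteq V^{H}$ one only gets $gHg^{-1} \subseteq H$ directly, and one needs the reverse inclusion as well. The cleanest fix is to observe that each $g$ acts linearly and injectively on $V$, hence on the finite-dimensional subspace $V^{H}$ (which it maps into itself), so $g\colon V^{H} \to V^{H}$ is surjective and thus $g\cdot V^{H} = V^{H}$; then applying the argument to $g^{-1}$ is legitimate. Alternatively one can note that for compact $G$ one may assume $g$ has finite order or lies in a torus, but the linear-algebra argument is simplest and is all that is needed.
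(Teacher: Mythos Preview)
Your argument follows the same route as the paper's, but the second part rests on a false inclusion. You invoke ``the remark in \autoref{subsec:fixed_point_sets} that $G_{w} \subseteq H$ for every $w \in V^{H}$''. That remark, as printed, is a typo: the correct statement is $H \subseteq G_{w}$ for $w \in V^{H}$ (every element of $H$ fixes $w$, so $H$ lies in the isotropy group of $w$; the reverse inclusion is false, e.g.\ take $w=0$, for which $G_{w}=G$). With the correct inclusion, $g\cdot v_{0} \in V^{H}$ yields $H \subseteq G_{g\cdot v_{0}} = gHg^{-1}$, i.e.\ $g^{-1}Hg \subseteq H$, which is the opposite of the inclusion you wrote. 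Because you then apply the same reasoning to $g^{-1}$, the two (correctly derived) inclusions still combine to $gHg^{-1}=H$, so the overall structure survives once the direction is fixed. The paper's proof obtains the same inclusion by a direct computation rather than via Lemma~\ref{lem:isotropy_groups}: from $h\cdot(g\cdot v_{0}) = g\cdot v_{0}$ it deduces $g^{-1}hg\cdot v_{0} = v_{0}$, hence $g^{-1}hg \in G_{v_{0}} = H$.

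A minor stylistic difference: the paper defines $K := \{g \in G : g(V^{H}) = V^{H}\}$ with equality, so that $g^{-1}\in K$ is automatic, whereas you start from $K\cdot V^{H}\subseteq V^{H}$ and invoke finite dimension to upgrade to equality. Both are fine; your linear-algebra justification is correct.
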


\begin{proof}
Let us first define
\begin{equation*}
    K := \set{g \in G ; \; g \left(V^{H}\right) = V^{H}}
\end{equation*}
Now let $g \in N(H)$ and $v \in V^{H}$. Given $h\in H$ we have $g^{-1}hg\in H$ and thus $g^{-1}hg\cdot v = v$. But then $hg\cdot v = g\cdot v$ and since this is true for all $h\in H$ we get that $g\cdot v \in V^{H}$. That is $V^{H}$ is invariant under $N(H)$, therefore $N(H) \subset K $.

Suppose now that $H = G_{v_{0}}$ for some point $v_{0}\in V$. For all $g \in K$ and $h \in H$, we have $h \cdot g \cdot v_{0} = g \cdot v_{0}$ and hence $g^{-1} \cdot h \cdot g \cdot v_{0} = v_{0}$. Therefore, $g^{-1}hg \in G_{v_{0}} = H$. In other words, $g^{-1}Hg \subset H$ but, since this is also true for $g^{-1}$, we have finally $g^{-1}Hg = H$ and therefore $K \subset N(H)$. Hence $K=N(H)$ which concludes the proof.
\end{proof}

% ----------------------------------------------------------------

\subsection{Linear slices}
\label{subsec:linear_slices}

One consequence of lemma~\ref{lem:normalizer} is that, the linear representation $\rho: G \to \GL(V)$ induces a linear representation, of $N(H)$ on $V^{H}$, obtained by the restriction
\begin{equation*}
    \rho_{N(H)} : N(H) \longrightarrow \GL(V^{H}).
\end{equation*}
This induced representation is not faithful. However, \emph{when $H$ is an isotropy group}, its kernel is exactly $H$ and we get a \emph{faithful} linear representation
\begin{equation*}
    \rho_{\Gamma^{H}}: \Gamma^{H} \longrightarrow \GL(V^{H}) \text{ where } \Gamma^{H} := N(H)/H .
\end{equation*}
Furthermore, in that case, the equivalence relation defined by $(V,\rho)$, when restricted to the set $V^{H}\cap \strata{H}$, is the same as the equivalence relation defined by $(V^{H},\rho_{N(H)})$. Indeed, let $v_{1},v_{2}$ in $V^{H}$, such that $G_{v_{1}} = G_{v_{2}} = H$ and $v_{2} = g \cdot v_{1}$. Then $ghg^{-1}\cdot v_{2} = v_{2}$, for each $h \in H$. Thus $ghg^{-1} \in G_{v_{2}} = H$ and $g \in N(H)$.

We have therefore reduced (locally, on the stratum $\strata{H}$) the problem of describing the orbit space of $V$ mod $G$ by the orbit space of $V^{H}$ mod $\Gamma^{H}$. This is especially meaningful when the group $\Gamma^{H}$ is \emph{finite}, in which case, we say that $V^{H}$ is a \emph{linear slice}. Then, each $G$-orbit intersect $V^{H}$ at most in a finite number of points and the natural map $V^{H} \rightarrow V^{H}/\Gamma^{H}$ is a finite \emph{ramified covering}\footnote{A \emph{ramified covering} is a generalization of the concept of a \emph{covering map}. It was introduced in the theory of Riemann surfaces and corresponds to the situation where a finite (or discrete) group acts on a manifold but with fixed points. A typical example is given by the map $z \mapsto z^{n}$ on the complex plane. Outside $0$, this map is a \emph{covering} but $0$ is a singularity, a \emph{point of ramification}. In higher dimension, the set of ramifications points may be more complicated.} of $V^{H}/\Gamma^{H}$. The \emph{degree} of  $V^{H}$ is defined as the cardinal of $\Gamma^{H}$. It is equal to the index $[N(H),H]$ of $H$ in its normalizer $N(H)$.
It is also the number of points in a generic fiber of the covering $V^{H} \longrightarrow V^{H}/\Gamma^{H}$.
The group $\Gamma^{H}$ is called the \emph{monodromy group} of  $V^{H}$.

% ----------------------------------------------------------------

\subsection{Strata dimensions}
\label{subsec:strata_dimension}

The restriction of the projection map $\pi : V \to V/G$ to a stratum $\strata{H}$
\begin{equation}\label{eq:bundle}
    \pi_{/\strata{H}} : \strata{H} \to \strata{H}/G
\end{equation}
is a \emph{fiber bundle} with fiber $G/H$, i.e. a space that locally looks like $\strata{H}/G\times G/H$. Therefore, we have
\begin{equation*}
    \dim \strata{H} = \dim (G/H) + \dim(\strata{H}/G)
\end{equation*}
where  $\dim (G/H) = \dim G - \dim H$.

Notice that the set $V^{H} \cap \strata{H}$ is the subspace of vectors in $V^{H}$ which isotropy class is exactly $[H]$. It is an open and dense set in $V^{H}$, i.e almost all elements of $V^{H}$ are in $V^{H} \cap \strata{H}$. These vectors correspond to the generic elements in $V^{H}$.

Now, since the equivalence relation defined by $(V,\rho)$, when restricted to the space $V^{H}$, is the same as the equivalence relation defined by $(V^{H},\rho_{\Gamma^{H}})$, the basis $\strata{H}/G$ of the bundle~\eqref{eq:bundle} is diffeomorphic to $(V^{H} \cap \strata{H})/\Gamma^{H}$  (see~\cite{Bre72} for a more rigorous justification). Thus
\begin{equation}\label{eq:quotient_strata_dimenion}
\begin{split}
  \dim\strata{H}/G & = \dim (V^{H} \cap \strata{H})/\Gamma^{H}= \dim V^{H} - \dim \Gamma^{H} \\
  \dim \strata{H} &= \dim V^{H} + \dim (G/ N(H))=\dim V^{H} + \dim G - \dim N(H)
\end{split}
\end{equation}
%Notice that this formula does not require $V^{H}$ to be of finite degree (i.e $\dim \Gamma^{H} = 0$).
Mechanical illustrations of the physical contents of those geometric concepts will be given in \autoref{subsec:isotropic_stratification_ela}.

\begin{rem}
The dimension of $V^{H}$ can be computed using the \emph{trace formula} for the \emph{Reynolds operator} (see~\cite{GSS88}). Letting $\chi_{\rho}$ be the character of the representation $(V,\rho)$, we have
\begin{equation}\label{eq:trace_formula}
    \dim V^{H} = \frac{1}{\abs{H}}\sum_{h \in H} \chi_{\rho}(h)
\end{equation}
if $H$ is a finite group (and the preceding formula has to be replaced by the Haar integral over $H$ for an infinite compact group). Explicit analytical formulas based on \eqref{eq:trace_formula} were obtained in \cite{Auf10} and \cite{GSS88}. They can be found in \autoref{sec:TrFrm}.
\end{rem}

% ----------------------------------------------------------------

\subsection{Implicit equations for closed strata}
\label{subsec:implicitization}

Let $[H]$ be a fixed isotropy class and $\strata{H} \subset V$  the corresponding stratum. Given a finite set of invariant polynomials which separate the orbits of $G$  , the closed stratum $\cstrata{H}$ can be characterized by a finite set of relations (equations and inequalities) on $\set{J_{1}, J_{2}, \dotsc, J_{N}}$. To obtain these equations, we fix a subgroup $H$ in the conjugacy class $[H]$ and choose linear coordinates $(x^{i})_{1 \le i \le q}$ on $V^{H}$. Then, we evaluate $\set{J_{1}, J_{2}, \dotsc, J_{N}}$ on $V^{H}$ (as polynomials in the $x^{i}$) and we try to obtain \emph{implicit equations} on the $J_{k}$ for the parametric system
\begin{equation}\label{eq:parametric_system1}
    J_{k} = P_{k}(x^{1}, \dotsc x^{q}), \quad k = 1, \dotsc , N,\quad P_{k}\in\RR[X_{1},\dotsc,X_{q}]
\end{equation}
satisfied by the restriction of $\set{J_{1}, J_{2}, \dotsc, J_{N}}$ to $V^{H}$.

This is a difficult task in general (one might consider~\cite{CLO07} for a full discussion about the implicitization problem). Moreover, it could happen that the algebraic variety $V_{I}$ defined by such an implicit system is bigger than the variety $V_{P}$, defined by the parametric system~\eqref{eq:parametric_system1}. Besides, we are confronted to the difficulty that we work over $\RR$ which is not algebraically closed.
To overcome these difficulties, we observe that the restrictions of $\set{J_{1}, J_{2}, \dotsc, J_{N}}$ to $V^{H}$ are $\Gamma^{H}$-invariant and can therefore be expressed as polynomial expressions of some generators $\sigma_{1}, \dotsc , \sigma_{r}$ of the invariant algebra of the monodromy group $\Gamma^{H}$ on $V^{H}$. This observation reduces the problem to consider first the \emph{implicitization problem} for the parametric system
\begin{equation}\label{eq:parametric_system2}
    J_{k} = p_{k}(\sigma_{1}, \dotsc , \sigma_{r}), \quad k = 1, \dotsc , N, \quad p_{k}\in\RR[X_{1},\dotsc,X_{r}]
\end{equation}
To solve this problem, we use a \emph{Groebner basis} \cite{CLO07} (see~\autoref{subsec:Grobner}) or a \emph{regular chain}\cite{BLOP95}.
The main observation is that, in each case we consider, the system~\eqref{eq:parametric_system2} leads to a system of relations (syzygies) which characterizes \emph{the closed stratum} $\cstrata{H}$
\begin{equation}\label{eq:method_syzygies}
    S_{j}(J_{1}, J_{2}, \dotsc, J_{N}) = 0, \quad j = 1, \dotsc l\ \quad S_{j}\in\RR[X_{1},\dotsc,X_{N}] ,
\end{equation}
and a system
\begin{equation}\label{eq:method_solutions}
    \sigma_{i} = R_{i}(J_{1}, J_{2}, \dotsc, J_{N}), \quad i = 1, \dotsc , r\ \quad R_{i}\in\RR(X_{1},\dotsc,X_{N}) ,
\end{equation}
which express the $\sigma_{i}$ as rational\footnote{The fact that the solutions are rational will not be justified here. We just observe that this is the case for all classes we have treated so far in this article.} functions of $\set{J_{1}, J_{2}, \dotsc, J_{N}}$ on \emph{the open stratum} $\strata{H}$. Beware, however, that these rational expressions may \emph{not be unique}. However if some $\sigma_{i}$ can be written as $P_{1}/Q_{1}$ as well as $P_{2}/Q_{2}$ then $P_{1}Q_{2} - P_{2}Q_{1}$ belongs to the ideal generated by the $S_{j}$.

Because the solutions are rational, the fact that the field on which we work is real or complex does not matter at this level. Therefore, for each \emph{real} solution $(J_{1}, J_{2}, \dotsc, J_{N})$ of \eqref{eq:method_syzygies}, it corresponds a \emph{unique real} solution $(\sigma_{1}, \dotsc , \sigma_{r})$ given by \eqref{eq:method_solutions}. Nevertheless, we cannot take for granted that a real solution $(\sigma_{1}, \dotsc , \sigma_{r})$ of~\eqref{eq:method_solutions} corresponds to a real point $(x^{1}, \dotsc , x^{q})$ in $V^{H}$. For this, we need to compute an additional system of inequalities on the $\sigma_{i}$, or equivalently on the $J_{k}$, which permits to exclude complex solutions of
\begin{equation*}\label{eq:relative_invariants}
    \sigma_{i} = Q_{i}(x^{1}, \dotsc x^{q}), \quad i = 1, \dotsc , r, \quad Q_{i}\in\RR[X_{1},\dotsc,X_{q}]
\end{equation*}
However, in the present paper for the linear slices we consider, the only monodromy group that we encounter are\footnote{Where $\GS_{n}$ is the the group of permutations acting on $n$ elements (symmetric group).} $\triv$, $\GS_{2}$ and $\GS_{3}$ and the action is the standard one in appropriate coordinates. In these cases, and more generally when the monodromy group is isomorphic to $\GS_{n}$ and acts by permutation on the coordinates, the invariants $\sigma_{1}, \dotsc , \sigma_{n}$ are algebraically independent, $r=q=n$ and $x^{1}, \dotsc x^{n}$ are the roots of the polynomial
\begin{equation*}
    p(z) = z^{n} - \sigma_{1}z^{n-1} + \dotsb + (-1)^{n}\sigma_{n}
\end{equation*}
Therefore, the problem reduces to find conditions on the $\sigma_{i}$ that ensure that all the roots of $p$ are real. The solution is due to Hermite, \cite{Cos02}. He has proved that the number of distinct real roots of a real polynomial $p$ of degree $n$ is equal to the \emph{signature} of the Hankel  matrix $B(p) := \left( S_{i+j-2} \right)_{1 \leq i,j \leq n}$, where $S_k := \sum_{i=1}^{n} (x^i)^k$  is the power sum of the roots of $p$. In particular, $p$ has real roots if and only if $B(p)$ is non-negative.

In the forthcoming section this method will be applied to establish the stratification of a $n$-uple of quadratic forms. Besides being just an illustration, the results obtained will be of great interest for the symmetry detection of elasticity tensors  and used in \autoref{sec:elasticity_tensors} and \autoref{sec:stratifaction_ela}.

% ----------------------------------------------------------------
% ----------------------------------------------------------------

\section{The isotropic stratification of a $n$-uple of quadratic forms}
\label{sec:stratification_quadratic_forms}

In this section, we will study the isotropic stratification of the space
\begin{equation*}
    \bigoplus_{k=1}^{n} S^{2}(\RR^{3}) = n \HH^{0} \oplus n \HH^{2}.
\end{equation*}
For $n=2$, that is for a couple of quadratic forms, we get the lower order terms of the harmonic decomposition of $\Ela$ (c.f. \autoref{sec:elasticity_tensors}). Invariant conditions which characterize each symmetry class will be given.

% ----------------------------------------------------------------

\subsection{The isotropic stratification of one quadratic form}
\label{subsec:one_quadratic_form}

The lattice of isotropy classes for the tensorial representation of $\SO(3)$ on the space of symmetric 3-matrices, $S^{2}(\RR^{3})$, is a total order
\begin{equation*}
    [\DD_{2}] \preceq [\OO(2)] \preceq [\SO(3)] \, .
\end{equation*}
Since the least isotropy class, $[\DD_{2}]$, is not trivial, the fixed point set $S^{2}(\RR^{3})^{\DD_{2}}$ (which corresponds to the subspace of diagonal matrices) is a \emph{global} linear slice. Each $\SO(3)$-orbit intersects this slice (a symmetric matrix is diagonalizable in a direct orthogonal frame). This linear slice is of dimension 3 and degree 6 (see Appendix~\ref{sec:normalizers}). Generically, an orbit intersects this slice in exactly 6 points which correspond to the permutation of the eigenvalues $\lambda_1,\, \lambda_{2},\, \lambda_{3}$. The monodromy group $N(\DD_{2})/\DD_{2}\simeq \GS_{3}$ acts on $\set{\lambda_{1}, \lambda_{2}, \lambda_{3}}$ by permutation.
The stratification problem is reduced to the computation of the number of distinct eigenvalues, namely $(3,2,1)$. The invariant algebra of $(S^{2}(\RR^{3})^{\DD_{2}}, \GS_{3})$ is the free algebra generated by the 3 symmetric elementary functions $\sigma_1, \sigma_{2}, \sigma_{3}$. According to Hermite (see \cite{Cos02})  the characteristic polynomial of $a$ has  3 distinct real roots iff the quadratic form
\begin{equation*}
B(a) := \left(
  \begin{array}{ccc}
    3 & 2\,\sigma_{{1}} & \sigma_{{2}} \\
    2\,\sigma_{{1}} & 2\,{\sigma_{1}}^{2} - 2\,\sigma_{2} & \sigma_{1}\sigma_{2} - 3\,\sigma_{3} \\
    \sigma_{2} & \sigma_{1}\sigma_{2} - 3\,\sigma_{3} & - 2\,\sigma_{1} \sigma_{3} + {\sigma_{2}}^{2} \\
  \end{array}
\right)
\end{equation*}
is positive definite.
The three principal minors of $B$ are given by
\begin{equation*}
    \left\{
    \begin{array}{rcl}
	   \Delta_1 &=& 3, \\
	   \Delta_{2} &=& 2 \sigma_1^{2} - 6\sigma_{2}, \\
	\Delta_{3} &=& -27\,{\sigma_{3}}^{2}+\left( 18\,\sigma_{1}\,\sigma_{2}-4\,{\sigma_{1}}^{3}\right) \,\sigma_{3}-4\,{\sigma_{2}}^{3}+{\sigma_{1}}^{2}\,{\sigma_{2}}^{2}.
    \end{array}
    \right.
\end{equation*}
These minors are precisely those which have been introduced by Hermite to count the number of real roots of a real polynomial of degree $3$. $\Delta_{3}$ is the discriminant of the polynomial
\begin{equation*}
    \Delta_{3} = (\lambda_1 - \lambda_{2})^{2}  (\lambda_{2} - \lambda_{3})^{2}  (\lambda_1 - \lambda_{3})^{2} ,
\end{equation*}
whereas
\begin{equation*}
    \Delta_{2} = (\lambda_1 - \lambda_{2})^{2} + (\lambda_{2} - \lambda_{3})^{2} + (\lambda_1 - \lambda_{3})^{2}
\end{equation*}
vanishes when all the roots are equal. They give directly the classification we are looking for, which is summarized in table~\ref{tab:example2}.

\begin{table}[h]
\begin{equation*}
\begin{array}{|c|c|c|c|c|c|c|c|} \hline
H 		& N(H) 		& \Gamma^{H} 	& \card \Gamma^{H} 	& \text{strata } \Sigma_H 	& \dim V^{H} 	&  \dim \strata{H}/G &\dim \Sigma_H 	 \\ \hline \hline
\DD_{2}	& \octa	  & \GS_{3} & 6	& \Delta_{3} > 0  				 & 3	& 3	& 6 \\ \hline
\OO(2)	& \OO(2)  & \triv	& 1 & \Delta_{3}=0,\ \Delta_{2}>0  	 & 2 & 2	& 4 \\ \hline 	
\SO(3) 	& \SO(3)  & \triv& 1 & \Delta_{3}=0, \Delta_{2}=0	& 1 & 1 & 1 \\ \hline
\end{array}
\end{equation*}
\caption{Isotropy classes for the representation of $\SO(3)$ on symmetric 3-matrices.}
\label{tab:example2}
\end{table}

% ----------------------------------------------------------------

\subsection{The isotropic stratification of $n$ quadratic forms}
\label{subsec:n_quadratic_form}

In this subsection, we consider a $n$-uple of quadratic forms $(\mathbf{a}_{1}, \dotsc ,\mathbf{a}_{n})$. The lattice of isotropy classes is the following totally ordered set (see \cite{Oli11} for a general algorithm to compute isotropy classes)
\begin{equation*}\label{eq:nS2_isotropy_classes}
    [\triv] \preceq [\ZZ_{2}] \preceq [\DD_{2}] \preceq [\OO(2)] \preceq [\SO(3)] \,
\end{equation*}
The stratification is summarized in table~\ref{tab:example3}.
\begin{table}[h]
\begin{equation*}
\begin{array}{|c|c|c|c|c|c|c|} \hline
H 		& N(H) 		& \Gamma^{H}  & \card \Gamma^{H} & \dim V^{H} & \dim \strata{H}/G	& \dim \strata{H} \\ \hline \hline
\triv	& \SO(3)    & \SO(3)	  & \infty			 &  6n		  & 6n-3	            & 6n              \\ \hline
\ZZ_{2}	& \OO(2)	& \OO(2) 	  & \infty			 &  4n		  & 4n-1	            & 4n+2            \\ \hline
\DD_{2}	& \octa		& \GS_{3} 	  & 6				 &  3n		  & 3n	                & 3n+3            \\ \hline
\OO(2)	& \OO(2)	& \triv	  & 1 			     &  2n		  & 2n	                & 2n+2            \\ \hline
\SO(3) 	& \SO(3) 	& \triv	  & 1 			     &  n	      & n		            & n               \\ \hline
\end{array}
\end{equation*}
\caption{Isotropy classes for the representation of $\SO(3)$ on a $n$-uple of quadratic forms.}
\label{tab:example3}
\end{table}
To establish the bifurcation conditions we need to introduce the commutator of two quadratic forms. Given two quadratic forms $(\mathbf{a},\mathbf{b})$ on the euclidean space $\RR^{3}$, the commutator
\begin{equation*}
    [\mathbf{a},\mathbf{b}]:=\mathbf{a}\mathbf{b}-\mathbf{b}\mathbf{a}
\end{equation*}
is a skew-symmetric endomorphism $\Omega$, which can be represented, in the \emph{oriented} vector space $\RR^{3}$, by a vector $\omega$. We will denote by $\omega \mapsto \Omega := j(\omega)$ this linear isomorphism between vectors of $\RR^{3}$ and skew-symmetric second order tensors on $\RR^{3}$. Recall that for every $\mathbf{g}\in\SO(3)$ and $\omega\in\RR^{3}$, we have
\begin{equation*}
    j(\mathbf{g}\omega) = \mathbf{g}j(\omega)\mathbf{g}^{t}.
\end{equation*}
In particular
\begin{equation*}
    \omega (\mathbf{a},\mathbf{b}) := j^{-1}([\mathbf{a},\mathbf{b}])
\end{equation*}
is a covariant of degree 2 of the pair $(\mathbf{a},\mathbf{b})$.

\begin{thm}\label{thm:nS2_bifurcations}
The characterization of isotropy classes in $\bigoplus_{k=0}^{n}S^{2}(\RR^{3})$ is summarized below:
\begin{enumerate}

    \item A $n$-uple $(\mathbf{a}_{1}, \dotsc ,\mathbf{a}_{n})$ has at least isotropy $[\ZZ_{2}]$ iff there exists a non-vanishing vector $\omega\in \RR^{3}$ and real numbers $\lambda_{k,l}$ such that
        \begin{equation}\label{eq:nS2_bifurcation1}
        \begin{aligned}
          \omega(\mathbf{a_{k}},\mathbf{a_{l}}) & = \lambda_{k,l}\,\omega,& & 1 \le k,l \le n, \\
          \mathbf{a_{k}}\omega \wedge \, \omega & = 0,& & 1 \le k \le n ,
        \end{aligned}
        \end{equation}
        where $\wedge$ is the classical vector product of $\RR^{3}$.

    \item A $n$-uple $(\mathbf{a}_{1}, \dotsc ,\mathbf{a}_{n})$ has at least isotropy $[\DD_{2}]$ iff
        \begin{equation}\label{eq:nS2_bifurcation2}
            \omega(\mathbf{a_{k}},\mathbf{a_{l}}) = 0, \qquad 1 \le k,l \le n .
        \end{equation}

    \item A $n$-uple $(\mathbf{a}_{1}, \dotsc ,\mathbf{a}_{n})$ has at least isotropy $[\OO(2)]$ iff
        \begin{equation}\label{eq:nS2_bifurcation3}
        \begin{aligned}
          25 \left[\tr(a_{k}^{2})\right]^{3} & = 54 \left[\tr(a_{k}^{3})\right]^{2},&  & 1 \le k \le n , \\
          \left[\tr(a_{k}a_{l})\right]^{2} & = \left[\tr(a_{k}^{2})\right]\left[\tr(a_{l}^{2})\right],&  & 1 \le k,l \le n ,
        \end{aligned}
        \end{equation}
        where $a_{k}=\dev(\mathbf{a}_{k})$ is the deviatoric part of $\mathbf{a}_{k}$.

    \item A $n$-uple $(\mathbf{a}_{1}, \dotsc ,\mathbf{a}_{n})$ has isotropy $[\SO(3)]$ iff
        \begin{equation}\label{eq:nS2_bifurcation4}
            a_{k} = 0, \qquad 1 \le k \le n ,
        \end{equation}
        where $a_{k}=\dev(\mathbf{a}_{k})$ is the deviatoric part of $\mathbf{a}_{k}$.
\end{enumerate}
\end{thm}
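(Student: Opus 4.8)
The plan is to treat each of the four cases separately by working in a well-chosen orthonormal frame and exploiting the description of the relevant fixed-point sets $S^2(\RR^3)^H$. Throughout, the key technical tool is the commutator covariant $\omega(\mathbf a,\mathbf b)=j^{-1}([\mathbf a,\mathbf b])$ and the elementary fact that $[\mathbf a,\mathbf b]=0$ iff $\mathbf a$ and $\mathbf b$ are simultaneously diagonalizable (for symmetric matrices). One direction of each equivalence — ``if the $n$-uple has at least isotropy $[H]$ then the stated relations hold'' — should be immediate: place the $\mathbf a_k$ in the form forced by $H$-invariance and compute. The substance is the converse in each case.

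\emph{Case (iv).} Since $\dev(\mathbf a_k)$ lies in $\HH^2$ and $\HH^2$ contains no nonzero $\SO(3)$-invariant vector, $[\SO(3)]$-isotropy forces $a_k=0$; conversely if all $a_k=0$ each $\mathbf a_k$ is a multiple of the identity, hence $\SO(3)$-invariant. This case is essentially a triviality and I would dispatch it first.

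\emph{Case (iii).} Here I would reduce to a single deviator at a time plus a pairwise alignment condition. For one deviator $a$, having isotropy at least $[\OO(2)]$ means $a$ is (in a suitable frame) $\diag(\mu,\mu,-2\mu)$, i.e. $a$ has a repeated eigenvalue. The condition $25[\tr(a^2)]^3=54[\tr(a^3)]^2$ is exactly the vanishing of the discriminant-type quantity $\Delta_3$ from \autoref{subsec:one_quadratic_form} rewritten in terms of the deviator (using $\sigma_1=0$); combined with the fact that a deviator cannot have three coincident eigenvalues unless it is zero, this detects the repeated-eigenvalue locus. The second family of equations, $[\tr(a_k a_l)]^2=[\tr(a_k^2)][\tr(a_l^2)]$, is a Cauchy--Schwarz equality case on the space of deviators equipped with the inner product $\langle a,b\rangle=\tr(ab)$, so it forces each $a_l$ to be a scalar multiple of $a_k$; hence all the nonzero $a_k$ share the same axis of transverse isotropy, giving common $[\OO(2)]$-isotropy. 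The converse is a direct computation with $\diag(\mu,\mu,-2\mu)$.

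\emph{Cases (i) and (ii).} These are the heart of the theorem and where I expect the main obstacle. For (ii), if $\omega(\mathbf a_k,\mathbf a_l)=0$ for all $k,l$, then the family $\{\mathbf a_k\}$ is pairwise commuting, hence (being symmetric) simultaneously diagonalizable in some orthonormal frame, so the common isotropy contains the group $\DD_2$ of diagonal sign changes — that is, isotropy is at least $[\DD_2]$; the converse is clear from the diagonal normal form. The genuinely delicate case is (i): one must show that the two conditions in \eqref{eq:nS2_bifurcation1} — that some fixed nonzero $\omega$ is a common eigenvector of every commutator $[\mathbf a_k,\mathbf a_l]\cdot$, rendered as $\omega(\mathbf a_k,\mathbf a_l)=\lambda_{k,l}\omega$, together with $\mathbf a_k\omega\wedge\omega=0$, i.e. $\omega$ is a common eigenvector of every $\mathbf a_k$ — are equivalent to the $n$-uple having at least isotropy $[\ZZ_2]$, where $\ZZ_2$ is generated by the rotation by $\pi$ about $\omega$. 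The ``if'' direction: if $R_\pi\in\SO(3)$ about the axis $\RR\omega$ fixes every $\mathbf a_k$, then $\omega$ lies in an invariant line of each $\mathbf a_k$ (since $\mathbf a_k$ commutes with $R_\pi$ and the $\pm1$-eigenspaces of $R_\pi$ are $\RR\omega$ and $\omega^\perp$), giving $\mathbf a_k\omega\wedge\omega=0$; and then each commutator $[\mathbf a_k,\mathbf a_l]$ also commutes with $R_\pi$ and is skew, so $j^{-1}[\mathbf a_k,\mathbf a_l]$ is $R_\pi$-fixed, hence proportional to $\omega$. The ``only if'' direction is the crux: from $\mathbf a_k\omega\wedge\omega=0$ one gets $\mathbf a_k\omega=\alpha_k\omega$, so every $\mathbf a_k$ preserves the splitting $\RR^3=\RR\omega\oplus\omega^\perp$ and restricts to a symmetric operator on the $2$-plane $\omega^\perp$; the rotation by $\pi$ about $\omega$ then automatically fixes every $\mathbf a_k$, giving at least $[\ZZ_2]$-isotropy, and one checks the first equation \eqref{eq:nS2_bifurcation1} is then a consequence, so it is not even needed for this implication (it is needed, together with the lattice structure, to pin down that the slice coordinates behave correctly). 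The main subtlety to handle carefully is the quantifier on $\omega$: the condition is ``there exists a nonvanishing $\omega$'' satisfying both families simultaneously, and one must make sure that the $\omega$ produced by the eigenvector condition is the same as the one diagonalizing the commutators — but this is exactly what the stated system encodes. I would organize (i) around the observation that $\mathbf a_k\omega\wedge\omega=0$ for all $k$ is precisely the statement that $\RR\omega$ is a common eigenline, and everything else follows from the elementary representation theory of the $\pi$-rotation recorded implicitly in \autoref{subsec:fixed_point_sets}.
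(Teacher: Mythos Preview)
Your proposal is correct and follows essentially the same route as the paper: case (iv) is dismissed as trivial; case (ii) reduces to simultaneous diagonalizability of commuting symmetric matrices; case (iii) is read as ``repeated eigenvalue for each deviator'' (discriminant condition) plus ``pairwise proportionality'' (Cauchy--Schwarz equality); and case (i) is the equivalence between $[\ZZ_2]$-isotropy and the existence of a common eigenline $\RR\omega$. Your treatment is in fact slightly more explicit than the paper's---you spell out the $R_\pi$-eigenspace decomposition in (i) and correctly observe that the commutator condition in \eqref{eq:nS2_bifurcation1} is redundant for the converse---but the underlying argument is the same.
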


\begin{proof}
(1) If a $n$-uple $(\mathbf{a}_{1}, \dotsc ,\mathbf{a}_{n})$ has at least isotropy $[\ZZ_{2}]$, its elements share a common principal axis (the axis of the non trivial rotation $\sigma$ in $\ZZ_{2}$). Let $\omega$ be a unit vector parallel to this axis. Then $\omega$ belongs to the kernel of the commutator of each pair $(\mathbf{a_{k}},\mathbf{a_{l}})$ and thus $\omega(\mathbf{a_{k}},\mathbf{a_{l}})$ is proportional to $\omega$. Relations~\eqref{eq:nS2_bifurcation1} are a consequence of these facts. Conversely, if relations~\eqref{eq:nS2_bifurcation1} are satisfied, then $\omega$ is a common eigenvector of the $n$-uple $(\mathbf{a}_{1}, \dotsc ,\mathbf{a}_{n})$. In that case, the isotropy group of the pair $(\mathbf{a},\mathbf{b})$ is at least in the class $[\ZZ_{2}]$.

(2) If a $n$-uple $(\mathbf{a}_{1}, \dotsc ,\mathbf{a}_{n})$ has at least isotropy $[\DD_{2}]$, then there exists a $g\in\SO(3)$ which brings all the $\mathbf{a}_{k}$ to a diagonal form. Thus they commute. Conversely, if the $\mathbf{a}_{k}$ commute, they can be diagonalized simultaneously and the isotropy group of the $n$-uple $(\mathbf{a}_{1}, \dotsc ,\mathbf{a}_{n})$ is at least in the class $[\DD_{2}]$.

(3) A $n$-uple $(\mathbf{a}_{1}, \dotsc ,\mathbf{a}_{n})$ has at least isotropy $[\OO(2)]$ iff all the $\mathbf{a}_{k}$ can be diagonalized simultaneously and, moreover, iff their respective deviator has (in a well chosen basis) a diagonal form proportional to $\mathrm{diag}(1,1,-2)$. The first relation of equations~\eqref{eq:nS2_bifurcation3} expresses that the deviator of $\mathbf{a}_{k}$ has a double eigenvalue, the second relation expresses that the deviators $a_{k}$ and $a_{l}$ are proportional (Cauchy-Schwarz relation). Conversely, if equations~\eqref{eq:nS2_bifurcation3} are satisfied, then all the deviators $a_{k}$ commute and each of them has a double eigenvalue. It is then possible to find a basis in which all the $a_{k}$ are proportional to $\mathrm{diag}(1,1,-2)$. Therefore the $n$-uple $(\mathbf{a}_{1}, \dotsc ,\mathbf{a}_{n})$ has at least isotropy $[\OO(2)]$.

(4) The last assertion is trivial.
\end{proof}

% ----------------------------------------------------------------
% ----------------------------------------------------------------

\section{The space of elasticity tensors}
\label{sec:elasticity_tensors}

% ----------------------------------------------------------------

\subsection{The harmonic decomposition of elasticity tensors}
\label{subsec:harmonic_decomposition}

To establish the isotropic stratification of a tensor space, the first step is to decompose it into a direct sum of elementary pieces. For the $\SO(3)$-action, this decomposition is known as the harmonic decomposition. The $\SO(3)$-irreducible pieces are symmetric and traceless tensors (i.e. harmonic tensors). The space of elasticity tensors admits the following harmonic decomposition \cite{Bac70,Bae93,FV96,FV06}:
\begin{equation} \label{eq:elastic_harmonic_decomposition}
    \Ela   \simeq 2 \HH^{0} \oplus 2 \HH^{2}  \oplus \HH^{4}
\end{equation}
Therefore, each $\mathbf{C} \in \Ela$ can be written as  $\mathbf{C} = (\lambda, \mu, a, b, D)$ where $\lambda, \mu \in \HH^{0}$,
$a, b \in \HH^{2}$ and $D \in \HH^{4}$ and such that for all $g\in \SO(3)$
\begin{equation*}
    \bar{\mathbf{C}} = \rho(g)\cdot \mathbf{C} \Longleftrightarrow \bar{\lambda} = \lambda,\, \bar{\mu} = \mu,\, \bar{a} = \rho_{2}(g)\cdot a,\, \bar{b} = \rho_{2}(g)\cdot b,\, \bar{D} = \rho_{4}(g)\cdot D
\end{equation*}
Elements of the set $(\lambda, \mu, a, b, D)$ are covariants (c.f. \autoref{subsec:covariants}) to  $\mathbf{C}$ : $(\lambda, \mu)$ are invariants,  $(a, b)$ are $\rho_{2}$-covariants, and $D$ is $\rho_{4}$-covariant.
The explicit harmonic decomposition is well known, given an orthonormal frame $(e_{1}, e_{2}, e_{3})$ of the Euclidean space, we get \cite{BKO94}:
\begin{equation}\label{eq:decomposition_boehler}
    C_{ijkl}  = \lambda \delta_{ij} \delta_{kl} + \mu(\delta_{ik} \delta_{jl} + \delta_{il} \delta_{jk})+ \delta_{ij} a_{kl} + \delta_{kl} a_{ij} + \delta_{ik} b_{jl} + \delta_{jl} b_{ik} + \delta_{il} b_{jk} + \delta_{jk} b_{il} + D_{ijkl}
\end{equation}
where the metric is written $q_{ij} := \delta_{ij}$. This formula can be inverted to obtain the $5$ harmonic tensors $(\lambda, \mu, a, b, D)$ from $\mathbf{C}$.
On $\Ela=S^{2}S^{2}(\RR^{3})$, there are only two different traces:
\begin{equation*}
        d_{ij}=(\tr_{12} \mathbf{C})_{ij} := \sum_{k=1}^{3} C_{kkij}, \quad
        v_{ij}=(\tr_{13} \mathbf{C})_{ij} := \sum_{k=1}^{3} C_{kikj}.
\end{equation*}
where $\mathbf{d}$ and $\mathbf{v}$ are known  as, respectively, the \emph{dilatation} tensor and the \emph{Voigt} tensor \cite{CM87,Cow89}.
Starting with \eqref{eq:decomposition_boehler} we get
\begin{equation}\label{eq:Dilat_Voigt_Tensors}
    \mathbf{d}  = (3\lambda + 2 \mu)\mathbf{q} + 3a + 4b, \quad
    \mathbf{v}  = (\lambda + 4 \mu)\mathbf{q} + 2a + 5b.
\end{equation}
Taking the traces of each equation, one obtains:
\begin{equation*}
    \tr(\mathbf{d})  =  9\lambda + 6 \mu , \quad
    \tr(\mathbf{v})  =  3\lambda + 12 \mu
\end{equation*}
and, finally:
\begin{equation*}\label{eq:decomposition_boehler_inverse}
  \begin{aligned}
   \lambda   & = \frac{1}{15}( 2  \tr(\mathbf{d})   -    \tr(\mathbf{v})),\quad  &\mu      & =  \frac{1}{30}(-   \tr(\mathbf{d})   + 3    \tr(\mathbf{v})) \\
   a          & =  \frac{1}{7}( 5\dev(\mathbf{d}) - 4 \dev(\mathbf{v})) ,\quad   &b          & = \frac{1}{7} ( -2\dev(\mathbf{d})   + 3   \dev(\mathbf{v}))
  \end{aligned}
\end{equation*}
where we write $\dev(\mathbf{a}) :=\mathbf{a} - \frac{1}{3}\tr(\mathbf{a})\, \mathbf{q}$ the \emph{deviatoric} part of a 2nd-order tensor.

The isotropy classes of elasticity tensors have been computed in \cite{FV96,BBS04}. There are exactly 8 classes: the \emph{isotropic} class $[\SO(3)]$, the \emph{cubic} class $[\octa]$, the \emph{transversely isotropic} class $[\OO(2)]$, the trigonal class $[\DD_{3}]$, the tetragonal class $[\DD_{4}]$, the orthotropic class $[\DD_{2}]$, the monoclinic class $[\ZZ_{2}]$ and the totally anisotropic class $[\triv]$ (also know as the triclinic class). We refer to \autoref{sec:normalizers} for the precise definitions of these groups. This stratification is a consequence of the index symmetries of elasticity tensors. It is not a general result that encompass all kind of 4th-order tensors. For example, the photo-elasticity tensors space which is a  4th-order tensor space solely endowed with minor symmetries is divided into 12 symmetry classes~\cite{FV97}.

% ----------------------------------------------------------------

\subsection{The isotropic stratification of $\Ela$}
\label{subsec:isotropic_stratification_ela}

The stratification of $\Ela$ is provided in the table \ref{tab:Ela}.
\begin{table}[h!]
$$
\begin{array}{|c|c|c|c|c|c|c|c|} \hline
H 		& N(H) 		& \Gamma^{H} & \card \Gamma^{H} & \dim V^{H} & \dim \strata{H}/G & \dim \strata{H} \\ \hline \hline
\triv	& \SO(3) 	& \SO(3) 	 & \infty		    & 21		 & 18	             & 21			   \\ \hline
\ZZ_{2}	& \OO(2)	& \OO(2)	 & \infty		    & 13		 & 12		         & 15			   \\ \hline
\DD_{2}	& \octa		& \GS_{3} 	 & 6				& 9			 & 9	             & 12			   \\ \hline
\DD_{3}	& \DD_{6}	& \GS_{2} 	 & 2				& 6		     & 6		         & 9			   \\ \hline
\DD_{4}	& \DD_{8}	& \GS_{2} 	 & 2				& 6			 & 6	             & 9			   \\ \hline
\OO(2)	& \OO(2)	& \triv		 & 1 			    & 5		     & 5		         & 7 			   \\ \hline 	
\octa	& \octa		& \triv		 & 1 			    & 3			 & 3	             & 6			   \\ \hline 	
\SO(3) 	& \SO(3) 	& \triv 	 & 1 			    & 2	         & 2		         & 2			   \\ \hline
\end{array}
$$
\caption{Isotropy classes for $\Ela$.}
\label{tab:Ela}
\end{table}
The dimensions of $\strata{H}$ and $\strata{H}/G$ are obtained using the formulas \eqref{eq:quotient_strata_dimenion} of \autoref{subsec:strata_dimension}. And $\dim V^{H}$ was computed using the \emph{trace formula}~\eqref{eq:trace_formula} and the explicit formulas provided \autoref{sec:TrFrm}.
Table \ref{tab:Ela} sums up the different dimensions that are classically associated to anisotropic elasticity tensors in the literature. All these dimensions are linked by the formula
\begin{equation*}\label{eq:stratum_dimension2}
	\dim \strata{H} = \dim \strata{H}/G + \dim \Gamma^{H} + e_{H},
\end{equation*}
where $\strata{H}/G$ is the orbit space for tensors in isotropy class $[H]$, $\Gamma^{H} = N(H)/H$ is the monodromy group and $e_{H} = \dim G/N(H)$. Notice that $e_{H}\simeq [H]$ is the set of distinct subgroups of $G$ which are conjugate to $H$.
These results of elasticity enlighten the physical contents of the geometric concepts introduced in \autoref{sec:orbit_spaces}.
\begin{description}
\item[$V^{H}$ ] The fixed-point set $V^{H}$ is the vector space of tensors having at least $H$ as isotropy group. Its dimension equals the dimension of the matrix space used to represent them. The space $V^{H}$ can be constructed for any closed subgroup $H$, not only for an isotropy subgroup. In elasticity, for instance, fixed-point sets have been constructed for $\ZZ_{3}$, $\ZZ_{4}$ (which are not isotropy subgroups). This is why it was believed for a long time that there exists $10$ type of elasticity. The construction of such matrix spaces is detailed, for example, in \cite{CM90} for classical elasticity, and extended to second order elasticity in \cite{ABB09}. It worthwhile emphasizing the fact that for a non-isotropy subgroup, $V^{H}$ is solely a fixed-point set and in no way a slice. It is geometrically meaningless for the description of the stratification, that is the geometry of the orbit space.

\item[$\strata{H}$ ] The set of tensors which symmetry group is conjugate to $H$ forms a stratum, or a symmetry class according to Forte and Vianello \cite{FV96}. We got the following dimension relation $\dim\strata{H}=\dim V^{H} + e_{H}$ where $e_{H} = \dim \SO(3)/N(H)$ is the number of Euler angles needed to define the appropriate coordinate system in which the elastic tensor has the minimum number of elastic constants \cite{Nor89}. The strata dimension corresponds to what Norris, for example, calls the number of independent parameters characterizing a tensor.

\item[$\strata{H}/G$ ] This is the orbit space of $\strata{H}$, i.e. each point of this space represents the $G$-orbit of a tensor which symmetry group is conjugate to $H$.
The intrinsic properties of a tensor are characterized by its orbit. In a slice $V^{H}$ the image of the orbit is given by monodromy group, and we have the dimensional relation $\dim V^{H}=\dim\strata{H}/G+\dim\Gamma^{H}$.
Therefore when $\Gamma^{H}$ is discrete, $\strata{H}/G$ and $V^{H}$ have the same dimension, this is the situation for most of elasticity symmetry classes. At the opposite, when $\Gamma^{H}$ is continuous, the dimension of the orbit is strictly smaller than the one of the slice, in elasticity this case occurs for $\triv,\ZZ_{2}$.
\end{description}

\begin{rem}
It is known for long time that for the symmetry classes $\triv,\ZZ_{2}$, there is no \emph{canonical way} to obtain a basis in which the number of non-zero components is minimal. More precisely, an element of the orbit of such a tensor in an appropriate fixed point set $V^{H}$ does not have necessarily the minimal number of non-zero components and thus, such a basis has to be computed \emph{arbitrarily}. This is due to the fact that in these cases, $\Gamma^{H}$ is continuous.  For the trivial class, $[\triv]$, $V^{H} \simeq V$ and there is no natural way to obtain a normal form. The monodromy group $\Gamma^{H}\simeq \SO(3)$ and there are three arbitrary degree of freedom to choose a \emph{nice basis}. For the other cases, $V^{H}$ is smaller than $V$ but the number of non-zero components of a tensor in $V^{H}$ can usually be lowered arbitrarily. The monodromy $\Gamma^{H}\simeq \OO(2)$ and there is one arbitrary degree of freedom to lower the number of non-zero components. In each cases, it appears clearly that one can properly choose a basis in which those angular parameters equal zeros, such a basis is non necessarily unique. In such a basis the number of independent parameters is lowered and we obtain the \emph{minimum number of elastic constants}, as discussed, for example, by Norris \cite{Nor89}. This minimum number of elastic constants is equal to the orbit dimension.
\end{rem}
The results appearing in table \ref{tab:Ela} are well-known in elasticity, but their constructions are usually ad-hoc and do not stand on rigorous mathematical foundations. Therefore, we hope to provide such a firm geometric construction. Conversely, as the geometrical concepts we use are abstract, these well-known mechanical results help to illustrate their physical contents.
As it has been said before the space of elasticity tensors can be decomposed in the following way:
\begin{equation}
    \Ela  \simeq 2 \HH^{0} \oplus 2 \HH^{2} \oplus \HH^{4}
\end{equation}
The \autoref{sec:stratification_H4} will be devoted to the stratification of $\HH^{4}$. The stratification of $2 \HH^{0} \oplus 2 \HH^{2}$ is a consequence of the results obtained in the \autoref{subsec:n_quadratic_form} in the case $n=2$. In such a case a direct link between the bifurcation conditions given \eqref{thm:nS2_bifurcations} and the ones appearing in the Cowin-Mehrabadi theorem can be drawn.

% ----------------------------------------------------------------

\subsection{The Cowin-Mehrabadi condition}
\label{subsec:cowin_mehrabadi}

The information contained in the Voigt and dilatational tensors \eqref{eq:Dilat_Voigt_Tensors} solely concerns the covariants  $(\lambda, \mu, a, b)$.  As a consequence any statement on $\mathbf{d}$ and $\mathbf{v}$ can be translated in terms of $(\lambda, \mu, a, b)$, and conversely.

The condition \eqref{eq:nS2_bifurcation1} for $(\mathbf{a},\mathbf{b})$ to have at least isotropy $[\ZZ_{2}]$ means that $\omega=[\mathbf{a},\mathbf{b}]$ is a common eigenvector of $(\mathbf{a},\mathbf{b})$. If we note $n=\omega$, it is a straightforward computation to show that this condition on $(a,b)$ is equivalent to
\begin{align*}\label{eq:Cond_dv}
   d_{ij}n_{j}=\delta n_{i}, \quad \text{and} \quad
   v_{ij}n_{j}=\nu n_{i}.\\
\end{align*}
These conditions are nothing but the two first ones of the Cowin-Mehrabadi (CM) theorem  for the existence of a symmetry plane.
Let summarize this theorem as it is formulated in \cite{Tin03}.
\begin{thm}[Cowin-Mehrabadi]
Let $n$ be a unit vector normal to a plane of material symmetry, if  $m$ is any vector on the symmetry plane a set of necessary and sufficient conditions for $n$ to be normal to the symmetry plane is
\begin{align*}
   &C_{ijkk}n_{j}=(C_{pqtt}n_{p}n_{q})n_{i},\\
   &C_{isks}n_{k}=(C_{pqrq}n_{p}n_{r})n_{i},\\
   &C_{ijks}n_{j}n_{s}n_{k}=(C_{pqrt}n_{p}n_{q}n_{r}n_{t})n_{i},\\
   &C_{ijks}m_{j}m_{s}n_{k}=(C_{pqrt}n_{p}m_{q}n_{r}m_{t})n_{i}.\\
\end{align*}
\end{thm}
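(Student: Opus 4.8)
The plan is to reduce the statement to a single group‑invariance condition and then read it off the harmonic decomposition of $\mathbf{C}$. First I would note that, since $\mathbf{C}$ has even order, the mirror symmetry $\sigma_{n}=\Id-2\,n\otimes n$ across the plane of normal $n$ acts on $\mathbf{C}$ in exactly the same way as the half‑turn $r_{n}=-\sigma_{n}=2\,n\otimes n-\Id\in\SO(3)$ about the axis $n$. Hence ``$n$ is normal to a plane of material symmetry'' is equivalent to $r_{n}\cdot\mathbf{C}=\mathbf{C}$, i.e. to $\langle r_{n}\rangle\simeq\ZZ_{2}$ being an isotropy subgroup of $\mathbf{C}$, and it suffices to prove that this is equivalent to the four displayed identities.

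For \emph{necessity}, I would substitute $\mathbf{C}=\sigma_{n}\cdot\mathbf{C}$ into each left‑hand side and use $\sigma_{n}n=-n$ together with $\sigma_{n}m=m$ for $m\perp n$: each vector $w$ so obtained satisfies $w=-\sigma_{n}w$, and since $\sigma_{n}w=w-2(n\cdot w)\,n$ this yields $w=(n\cdot w)\,n$, i.e. $w\parallel n$, with the proportionality factor equal to the full contraction on the right‑hand side. For the first two identities one can instead observe that the traces $\mathbf{d}$ and $\mathbf{v}$ inherit $\sigma_{n}$‑invariance as symmetric second‑order tensors, which forces $n$ to be an eigenvector of each — this is the point of contact with item (1) of \autoref{thm:nS2_bifurcations}.

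For \emph{sufficiency}, I would invoke the harmonic decomposition $\mathbf{C}=(\lambda,\mu,a,b,D)$ of \eqref{eq:elastic_harmonic_decomposition}: being $\SO(3)$‑equivariant, it gives $r_{n}\cdot\mathbf{C}=\mathbf{C}$ iff $r_{n}$ fixes $a$, $b$ and $D$. By \eqref{eq:Dilat_Voigt_Tensors} the first two identities say $\mathbf{d}n\parallel n$ and $\mathbf{v}n\parallel n$, i.e. $3an+4bn\parallel n$ and $2an+5bn\parallel n$; since $3\cdot5-4\cdot2=7\neq0$ this forces $an\parallel n$ and $bn\parallel n$, which for the symmetric traceless tensors $a,b$ is exactly $r_{n}$‑invariance. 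Substituting \eqref{eq:decomposition_boehler} into the third and fourth identities and using $an,bn\parallel n$ and $m\cdot n=0$, every contribution of $\lambda,\mu,a,b$ becomes automatically a multiple of $n$, so those two identities reduce to $D_{ijkl}n_{j}n_{k}n_{l}\parallel n$ and $D_{ijkl}m_{j}n_{k}m_{l}\parallel n$ for all $m\perp n$. The last step is to check these are equivalent to $r_{n}\cdot D=D$: in coordinates with $n=e_{3}$, $r_{n}$‑invariance of the totally symmetric $D$ means all components with an odd number of indices equal to $3$ vanish; the first condition gives $D_{1333}=D_{2333}=0$, and writing $m=m_{1}e_{1}+m_{2}e_{2}$ turns the second into the vanishing of two quadratic forms in $(m_{1},m_{2})$, which, after identifying coefficients through the total symmetry of $D$, is $D_{1113}=D_{1123}=D_{1223}=D_{2223}=0$; these six components are precisely the ones $r_{n}$ reverses, so $r_{n}\cdot D=D$, and the converse is immediate.

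The hard part will be this final equivalence: one must verify that the two ``$\parallel n$'' contractions of the harmonic part $D$ already pin down \emph{all} the components reversed by $r_{n}$, so that no further constraint on $D$ is required. The remaining ingredients — the mirror/half‑turn equivalence, the necessity computation, and the $2\times2$ elimination that extracts $an\parallel n$, $bn\parallel n$ from the Voigt and dilatation conditions — are routine bookkeeping.
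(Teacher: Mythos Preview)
Your proof is correct. Note, however, that the paper does \emph{not} give its own proof of this theorem: it is quoted from \cite{Tin03} (and ultimately \cite{CM87}) as a known result, and the surrounding text only \emph{observes} that the first two conditions are eigenvector conditions on the dilatation and Voigt tensors (hence on the pair $(a,b)$) while the last two are conditions on the harmonic part $D$, before restating the theorem in harmonic‑covariant form.

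What you have done is to carry out in full precisely that harmonic‑decomposition reduction the paper only sketches: the $2\times2$ elimination extracting $an\parallel n$, $bn\parallel n$ from the $\mathbf{d}$ and $\mathbf{v}$ conditions, the term‑by‑term check that the $\lambda,\mu,a,b$ contributions to the third and fourth identities automatically align with $n$, and the final component count showing that the two residual conditions on $D$ kill exactly the six components $D_{1333},D_{2333},D_{1113},D_{1123},D_{1223},D_{2223}$ reversed by $r_{n}$. So your route is the natural completion of the paper's discussion rather than a different approach; the paper simply defers the proof to the literature.
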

The two first conditions of this theorem can be rephrased in terms of the second order covariants $(a,b)$ of the harmonic decomposition. The bifurcation condition \eqref{eq:nS2_bifurcation1} provides a method to find a good candidate to be the normal of a symmetry plane $\omega=[a,b]$.
In the same spirit the third and fourth conditions are indeed conditions on $D$. Therefore the theorem can be reformulated in term of harmonic covariant using the condition \eqref{eq:nS2_bifurcation1}:
\begin{thm}[Cowin-Mehrabadi]
Let $\mathbf{C}=(\lambda, \mu, a, b, D)$ be an elasticity tensor. Let suppose there exist a  non null-vector $\omega=[a,b]$, this vector is a vector normal to a plane of material symmetry, if there exists $\omega^{\bot}$ verifying $\omega^{\bot}\cdot\omega^{\bot}=0$ and such that the following conditions are satisfied
\begin{align*}
   &\omega\wedge(D:\omega\otimes\omega)\omega=0,\\
   &\omega\wedge(D:\omega^\bot\otimes\omega^\bot)\omega=0.\\
\end{align*}
\end{thm}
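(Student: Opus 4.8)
The plan is to obtain the statement as a direct translation of the classical Cowin--Mehrabadi theorem recalled just above, by substituting the explicit harmonic decomposition \eqref{eq:decomposition_boehler} (together with the trace identities \eqref{eq:Dilat_Voigt_Tensors}) into each of its four conditions. A unit vector $n$ is normal to a plane of material symmetry if and only if the four CM conditions hold, so it suffices to show that, under the running hypothesis that $\omega=j^{-1}([a,b])\neq 0$, the first two conditions force $n$ to be parallel to $\omega$, while the last two collapse to the two displayed relations on $D$.

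For the first two CM conditions: they involve only the dilatation and Voigt tensors and read $\mathbf{d}n=\delta n$, $\mathbf{v}n=\nu n$. From \eqref{eq:Dilat_Voigt_Tensors} one has $\mathbf{d}n=(3\lambda+2\mu)n+3an+4bn$ and $\mathbf{v}n=(\lambda+4\mu)n+2an+5bn$; since the $2\times 2$ matrix with rows $(3,4)$ and $(2,5)$ is invertible and preserves $\RR n\times\RR n$, these two relations are equivalent to $an\parallel n$ and $bn\parallel n$, i.e.\ $n$ is a common eigenvector of $a$ and $b$. If such an $n$ exists and $[a,b]\neq 0$, then $[a,b]n=0$, and since $j$ identifies $\ker[a,b]$ with the line $\RR\,\omega$, we get $n\parallel\omega$. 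This is exactly the content of Theorem~\ref{thm:nS2_bifurcations}(1) and of \eqref{eq:nS2_bifurcation1} applied to the pair $(a,b)$, and it is what singles out $\omega$ as the only candidate normal (and explains the hypothesis that $[a,b]$ be non-null).

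For the last two CM conditions I will plug \eqref{eq:decomposition_boehler} into the triple contraction $C_{ijkl}n_{j}n_{k}n_{l}$ (condition~3) and into the mixed contraction $C_{ijkl}m_{j}n_{k}m_{l}$ with $m$ a vector on the symmetry plane, i.e.\ $m\cdot n=0$ (condition~4; the notation $\omega^{\bot}$ is read as $\omega^{\bot}\cdot\omega=0$). Expanding the nine summands of \eqref{eq:decomposition_boehler}: every term whose free index $i$ is carried by a Kronecker symbol contracted with $n$ yields a multiple of $n$; every term whose free index is carried by a Kronecker symbol contracted with $m$ yields a multiple of $m$ times one of the scalars $n\cdot m$, $n^{t}am$, $m^{t}bn$; and the remaining terms produce $|n|^{2}an$ or $|n|^{2}bn$. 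Using $an\parallel n$, $bn\parallel n$ and $m\cdot n=0$ (the latter also giving $n^{t}am=(an)^{t}m=0$ and $m^{t}bn=0$), all of these are parallel to $n$ or vanish. Hence, once conditions 1--2 hold, conditions 3 and 4 are equivalent to $D_{ijkl}n_{j}n_{k}n_{l}\parallel n$ and $D_{ijkl}n_{j}m_{k}m_{l}\parallel n$, which, by total symmetry of the harmonic tensor $D$ and with $n=\omega$, $m=\omega^{\bot}$, are precisely $\omega\wedge(D:\omega\otimes\omega)\omega=0$ and $\omega\wedge(D:\omega^{\bot}\otimes\omega^{\bot})\omega=0$.

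The only real work is the term-by-term bookkeeping of the two contractions, keeping straight which summands are automatically aligned with the candidate normal and which genuinely invoke the common-eigenvector property; on the statement side, the point worth stressing is that to have a genuine equivalence (and not just a sufficient condition) one must retain the first two CM conditions, equivalently $a\omega\wedge\omega=0$ and $b\omega\wedge\omega=0$, alongside the two conditions on $D$ --- this being implicit in taking $\omega=[a,b]$ as the normal. I do not expect a serious obstacle: the whole argument is elementary linear algebra once the classical Cowin--Mehrabadi theorem is taken as given.
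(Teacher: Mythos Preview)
Your proposal is correct and follows the same route as the paper: the paper does not give a formal proof but merely indicates, in the paragraph preceding the reformulated theorem, that the first two CM conditions translate into the common-eigenvector conditions on $(a,b)$ (hence $n\parallel\omega$), while the third and fourth are ``indeed conditions on $D$''. Your term-by-term expansion of \eqref{eq:decomposition_boehler} is exactly the bookkeeping that the paper leaves implicit, and your reading of the typo $\omega^{\bot}\cdot\omega^{\bot}=0$ as $\omega^{\bot}\cdot\omega=0$ is the intended one.
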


% ----------------------------------------------------------------
% ----------------------------------------------------------------

\section{The isotropic stratification of $\HH^{4}$}
\label{sec:stratification_H4}

In this section,  the isotropic stratification of $\HH^{4}$ will be established. This space is isomorphic to the higher order component of the harmonic decomposition of $\Ela$. Invariant conditions characterizing each finite monodromy symmetry class will be given.  These results provided new sets of necessary conditions to identify higher order symmetry classes, and will be used  in \autoref{sec:stratifaction_ela}.

\subsection{Lattice of isotropy}

The isotropy classes of the 9-D space $\HH^{4}$ are the same as the 8 classes of $\Ela$. The corresponding lattice\footnote{It should be noticed that it is not possible to \emph{realize} this lattice of conjugacy classes as a lattice of subgroups (for the inclusion relation). In  particular, if each conjugacy class in figure~\ref{fig:lattice1} is replaced by the corresponding subgroup defined in Appendix~\ref{sec:normalizers}, then almost all the arrows corresponds to an inclusion relation but not the arrow from $(4)$ to $(2)$ : $\DD_{3}$ is not a subgroup of $\octa$ but is conjugate to a subgroup of $\octa$.} is illustrated in figure~\ref{fig:lattice1}. With the convention that a group at the starting point of an arrow is conjugate to a subgroup of the group pointed by the arrow.

\begin{figure}[h]
  \centering
  \includegraphics{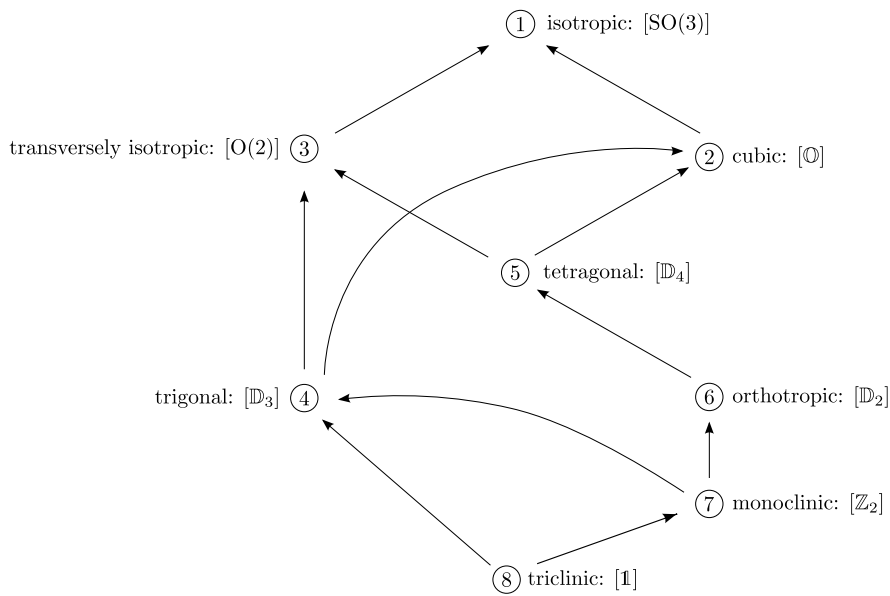}\\
  \caption{The lattice of isotropy classes for $\HH^{4}$.}
  \label{fig:lattice1}
\end{figure}

Contrary to what happens with quadratic forms\footnote{It is quite remarkable that for quadratic forms in every dimension $n$, the generic stratum has \emph{non trivial isotropy} and that the corresponding linear slice has finite monodromy $\GS_{n}$. This is the main reason why every orbit has a \emph{natural normal form}, namely a diagonal representation (up to the finite monodromy group $\GS_{n}$). In comparison, the main difficulty in linear elasticity is the fact that for a generic elasticity tensor, the isotropy group is trivial and there is no natural normal form.}, there is no nice description of the isotropic stratification. The aim of this section is to describe the strata which have \emph{finite} monodromy group $\Gamma^{H}$: the \emph{cubic} class $[\octa]$, the \emph{transversely isotropic} class $[\OO(2)]$, the trigonal class $[\DD_{3}]$, the tetragonal class $[\DD_{4}]$ and the orthotropic class $[\DD_{2}]$.
For each of them, explicit results will be provided. The algebraic procedure used to obtain their solutions was explained in~\autoref{subsec:implicitization}.

\begin{table}[h]
$$
\begin{array}{|c|c|c|c|c|c|c|c|} \hline
H 		& N(H) 		& \Gamma^{H} 	& \card \Gamma^{H} & \dim V^{H} & \dim \strata{H}/G & \dim \strata{H} \\ \hline \hline
\triv	& \SO(3) 	& \SO(3) 	& \infty		& 9		&6	& 9				 \\ \hline
\ZZ_{2}	& \OO(2)	& \OO(2)	& \infty		& 5	&4		& 7 			 \\ \hline
\DD_{2}	& \octa		& \GS_{3} 	& 6				& 3		&3	& 6				 \\ \hline
\DD_{3}	& \DD_{6}	& \GS_{2} 	& 2				& 2	&2		& 5				 \\ \hline
\DD_{4}	& \DD_{8}	& \GS_{2} 	& 2				& 2		&2	& 5				 \\ \hline
\OO(2)	& \OO(2)	& \triv		& 1 			& 1	&1		& 3 			 \\ \hline 	
\octa	& \octa		& \triv		& 1 			& 1		&1	& 4				 \\ \hline 	
\SO(3) 	& \SO(3) 	& \triv 	& 1 			& 0&	0		 & 0				\\ \hline
\end{array}
$$
\caption{Isotropy classes for the tensorial representation of $\SO(3)$ on $\HH^{4} $.}
\label{tab:H4}
\end{table}

% ----------------------------------------------------------------

\subsection{Global parametrization}
\label{subsec:global_parametrization}

On an euclidean vector space, any bilinear form can be represented by a symmetric linear operator. This applies in particular to an elasticity tensor $\mathbf{C}$ which may be considered as a quadratic form on $S^{2}(\RR^{3})$, where the Euclidean structure is induced by the standard inner product on $\RR^{3}$. It is therefore customary to represent an elasticity tensor  $\mathbf{C}$ by its corresponding symmetric linear operator $\underline{\mathbf{C}}$ on $S^{2}(\RR^{3})$ (relatively to the canonical scalar product $\sprod{a}{b} = \tr(ab)$). The corresponding matrix representation\cite{Kel56,CM90} is given by
\begin{equation*}
    \underline{\mathbf{C}} =\left(\begin {array}{cccccc}
        c_{11} & c_{12} & c_{13} & \sqrt {2}c_{14} & \sqrt {2}c_{15} & \sqrt {2}c_{16} \\
        c_{12} & c_{22} & c_{23} & \sqrt {2}c_{24} & \sqrt {2}c_{25} & \sqrt {2}c_{26} \\
        c_{13} & c_{23} & c_{33} & \sqrt {2}c_{34} & \sqrt {2}c_{35} & \sqrt {2}c_{36} \\
        \sqrt {2}c_{14} & \sqrt {2}c_{24} & \sqrt {2}c_{34} & 2\ c_{44} & 2\ c_{45} & 2\ c_{46} \\
        \sqrt {2}c_{15} & \sqrt {2}c_{25} & \sqrt {2}c_{35} & 2\ c_{45} & 2\ c_{55} & 2\ c_{56} \\
        \sqrt {2}c_{16} & \sqrt {2}c_{26} & \sqrt {2}c_{36} & 2\ c_{46} & 2\ c_{56} & 2\ c_{66}
    \end {array} \right)
\end{equation*}
where $c_{mn}$ are the components of the elasticity tensor in an orthonormal frame.
We use the conventional rule  to recode a pair of indices $(i,j)$ ($i,j = 1,2,3$) by a integer $m = \alpha(i,j)$ ($m = 1,2,\dotsc 6$), where $\alpha(i,i) := i$ for $1 \leq i\leq 3$ and $\alpha(i,j):= 9-(i+j)$ for $i\neq j$.

Given a general tensor $D \in \HH^{4} $, we denote by $\underline{D}$ the corresponding symmetric linear operator. In the canonical basis of $\RR^{3}$, $\underline{D}$ has the following matrix representation, depending on 9 real parameters:
\begin{equation*}\label{eq:generic_case}
  \underline{D} = \begin{pmatrix}
         D_{11} & \sqrt{2}\,D_{12} \cr
         \sqrt{2}\, D_{12}^{T} & 2\,D_{22} \cr
    \end{pmatrix}
\end{equation*}
where $^T$ represents the transposition,
\begin{equation*}
D_{11}= \begin{pmatrix}
  -h_{{9}}-h_{{8}}&h_{{9}}&h_{{8}}
\\ h_{{9}}&-h_{{9}}-h_{{7}}&h_{{7}}
\\ h_{{8}}&h_{{7}}&-h_{{8}}-h_{{7}}
\end{pmatrix}
    \qquad
D_{12}= \begin{pmatrix}
  -h_{{5}}-h_{{6}}&h_{{2}}&h_{{1}}
\\ h_{{5}}&-h_{{2}}-h_{{4}}&h_{{3}}
\\ h_{{6}}&h_{{4}}&-h_{{1}}-h_{{3}}
\end{pmatrix} ,
\end{equation*}
and
\begin{equation*}
D_{22}= \begin{pmatrix} h_{{7}}&-h_{{1}}-h_{{3}}&-h_{{2}}-h_{{4}}
\\ -h_{{1}}-h_{{3}}&h_{{8}}&-h_{{5}}-h_{{6}}
\\ -h_{{2}}-h_{{4}}&-h_{{5}}-h_{{6}}&h_{{9}}
\end{pmatrix}
\end{equation*}

Using this parametrization, $\underline{D}$ has the following invariance properties:

\begin{menumerate}
  \item $\ZZ_{2}$-invariant iif $h_{2} = h_4 = h_5 = h_6 = 0$\,,
  \item $\DD_{2}$-invariant iif $h_k = 0$ pour $1\leq k \leq 6$\,,
  \item $\DD_{3}$-invariant iif $h_1=h_{2}=h_{3}=h_4=h_6=0$ et $h_7=h_8=-4h_9$\,,
  \item $\DD_4$-invariant iif $h_k = 0$ pour $1\leq k \leq 6$ et $h_7=h_8$\,,
  \item $\OO(2)$-invariant iff $h_k = 0$ pour $1\leq k \leq 6$ et $h_7=h_8=-4h_9$\,,
  \item $\octa$-invariant iff $h_k = 0$ pour $1\leq k \leq 6$ et $h_7=h_8=h_9$\,,
  \item $\SO(3)$-invariant iff $h_k = 0$ pour $1\leq k \leq 9$\,.
\end{menumerate}

In the following subsections, the fundamental invariants of $4$-th order harmonic tensors will be used. As these results are not well-known we will, in the next lines, summarize their expressions as obtained by Boelher and al.\cite{BKO94}. For technical details and historical considerations, we therefore refer to this publication, and the references therein.

\begin{prop}
Let $D \in \HH^{4}$. The 2nd-order tensors $\mathbf{d}_{2}, \dotsc , \mathbf{d}_{10}$:
\begin{equation} \label{eq:boehler_invariants}
\begin{array} {lll}
     \mathbf{d}_{2} = \tr_{13}(D^{2})     &   \mathbf{d}_{3} = \tr_{13}(D^{3})    & \mathbf{d}_{4} = \mathbf{d}_{2}^{2} \\
     \mathbf{d}_{5} = \mathbf{d}_{2} D \mathbf{d}_{2}     &   \mathbf{d}_{6} = \mathbf{d}_{2}^{3}          & \mathbf{d}_{7} = \mathbf{d}_{2}^{2} D \mathbf{d}_{2} \\
     \mathbf{d}_{8} = \mathbf{d}_{2}^{2} D^{2} \mathbf{d}_{2} & \mathbf{d}_{9} = \mathbf{d}_{2}^{2} D \mathbf{d}_{2}^{2}  & \mathbf{d}_{10} = \mathbf{d}_{2}^{2} D^{2} \mathbf{d}_{2}^{2}
\end{array}
\end{equation}
are covariant to $D$. % as well as the traceless quadratic form $d_{s} = D\mathbf{d}_{2}$.
The nine fundamental invariants are defined by:
\begin{equation*}
    J_{k} := \tr(\mathbf{d}_{k}), \qquad k = 2, \dotsc ,10.
\end{equation*}
\end{prop}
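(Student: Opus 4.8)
The plan is to reduce the whole statement to the single elementary fact that the Euclidean metric $q_{ij}=\delta_{ij}$ is $\SO(3)$-invariant, so that contraction of indices is an $\SO(3)$-equivariant operation. Recall from \autoref{subsec:covariants} that a polynomial map $P\colon\HH^{4}\to S^{2}(\RR^{3})$ is a $\rho_{2}$-covariant of $D$ exactly when $P(\rho_{4}(g)\cdot D)=\rho_{2}(g)\cdot P(D)$ for all $g\in\SO(3)$, and that an invariant is just a covariant valued in $\HH^{0}=\RR$. Since the standard action $\rho_{1}$ of $\SO(3)$ on $\RR^{3}$ is orthogonal, $q$ is $\rho_{1}\otimes\rho_{1}$-invariant; hence any tensor obtained from a tensor product of several copies of $D$ and of $q$ by performing index contractions transforms under $\SO(3)$ only through its remaining free indices. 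In particular, such a tensor with two free (symmetrized) indices is a $\rho_{2}$-covariant of $D$, and a fully contracted one is an invariant.

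First I would handle $\mathbf{d}_{2}$ and $\mathbf{d}_{3}$. Regarding $D$ as the self-adjoint operator $\underline{D}$ on $S^{2}(\RR^{3})$, the powers $D^{2}=D_{ijpq}D_{pqkl}$, $D^{3}$, are obtained from $D\otimes D$ (resp. $D\otimes D\otimes D$) by contracting consecutive pairs of indices against $q$, and $\tr_{13}$ is one further $q$-contraction of a pair of $\RR^{3}$-indices; so $\mathbf{d}_{2}$ and $\mathbf{d}_{3}$ are covariants, and a short check using the total index symmetry of $D$ shows that their two free indices may be symmetrized, i.e. $\mathbf{d}_{2},\mathbf{d}_{3}\in S^{2}(\RR^{3})$. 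Next, each of $\mathbf{d}_{4},\dots,\mathbf{d}_{10}$ is by construction a word in $\mathbf{d}_{2}$ (acting on $\RR^{3}$) and $D$ (acting on $S^{2}(\RR^{3})$), e.g. $\mathbf{d}_{5}=\mathbf{d}_{2}D\mathbf{d}_{2}$; read off in matrix form this is again a tensor obtained from copies of $D$ and of $q$ by tensoring and contracting, so covariance propagates. One also reads off the grading: $\mathbf{d}_{k}$ is homogeneous of degree $k$ in $D$, each factor $D$ counting once and each factor $\mathbf{d}_{2}$ counting twice — which is the reason for the labelling.

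Finally, to obtain the invariants I would note that $\rho_{2}(g)$ acts on $S^{2}(\RR^{3})$ by the congruence $M\mapsto gMg^{t}=gMg^{-1}$, so the trace is invariant under it; equivalently $\tr(M)=q^{ij}M_{ij}$ is one more $q$-contraction. Hence $J_{k}=\tr(\mathbf{d}_{k})$ is an $\SO(3)$-invariant polynomial on $\HH^{4}$, homogeneous of degree $k$ for $k=2,\dots,10$, polynomiality being clear since each $\mathbf{d}_{k}$ is a monomial-type polynomial in the nine entries of $\underline{D}$.

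I do not expect any real obstacle: the argument is a bookkeeping of contraction patterns together with the triviality that contracting against $q$ commutes with $\SO(3)$. The only mild point requiring care is checking that each $\mathbf{d}_{k}$ genuinely lands in the symmetric subspace $S^{2}(\RR^{3})$ (so that ``covariant'' refers to $\rho_{2}$ rather than to the reducible action on $\RR^{3}\otimes\RR^{3}$), which follows from the symmetries of $D$. The deep fact hovering in the background — that these nine invariants are \emph{fundamental}, i.e. that they generate the whole invariant algebra of $\HH^{4}$ — is not part of the present claim and is only invoked by reference to \cite{BKO94}.
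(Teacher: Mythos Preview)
Your argument is correct, and in fact the paper does not supply a proof of this proposition at all: the text immediately preceding it makes explicit that the result is merely being \emph{summarized} from Boehler--Kirillov--Onat~\cite{BKO94}, with the reader referred there ``for technical details''. So there is nothing to compare against; you have filled in what the authors deliberately left out.

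The content of your proof is exactly the standard one and is sound: contraction against the $\SO(3)$-invariant metric $q$ is an equivariant operation on tensor powers, so any expression built from copies of $D$ by tensoring and contracting transforms through its free indices only, yielding a $\rho_{2}$-covariant when two indices survive and an invariant when none do. Your bookkeeping of the degree is also right. Two small remarks. First, your concern about landing in $S^{2}(\RR^{3})$ is slightly misplaced: in the paper's conventions (see the Notations paragraph), bold minuscule letters denote general second-order tensors in $\RR^{3}\otimes\RR^{3}$, not necessarily symmetric ones, and covariance with respect to the natural $\SO(3)$-action on $\RR^{3}\otimes\RR^{3}$ is all that is being asserted; so you need not worry about symmetrizing, though in fact $\mathbf{d}_{2}$ and $\mathbf{d}_{3}$ are symmetric by the total symmetry of $D$. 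Second, you are right to flag that the word ``fundamental'' --- i.e.\ that $J_{2},\dots,J_{10}$ generate the full invariant algebra of $\HH^{4}$ --- is not part of what is being proved here; the paper itself treats that as an imported fact from~\cite{BKO94} and~\cite{Shi67}.
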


The first 6 invariants $J_{2}, \dotsc , J_{7}$ are algebraically independent. The last 3 ones $J_{8},J_{9},J_{10}$ are linked to the formers by polynomial relations. These \emph{fundamental syzygies} were computed in~\cite{Shi67}.

\begin{rem}
Notice that the first invariant $J_{2}(D) = \sprod{D}{D}$ is the squared norm of $D$, it corresponds to the squared \emph{Frobenius norm} of $\underline{D}$. In particular, $J_{2}(D)=0$ if and only if $D=0$.
\end{rem}

% ----------------------------------------------------------------

\subsection{Cubic symmetry ($[\octa]$)}
\label{subsec:cubic}

A $\octa$-invariant tensor $D \in \HH^{4} $ has the following matrix representation:
\begin{equation}\label{eq:cubic_case}
  \underline{D} = \left(
    \begin {array}{cccccc}
    8\,\delta & -4\,\delta & -4\,\delta & 0 & 0 & 0 \\
    -4\,\delta & 8\,\delta & -4\,\delta & 0 & 0 & 0 \\
    -4\,\delta & -4\,\delta & 8\,\delta & 0 & 0 & 0 \\
    0 & 0 & 0& -8\,\delta & 0 & 0 \\
    0 & 0 & 0 & 0 & -8\,\delta & 0 \\
    0 & 0 & 0 & 0 & 0 & -8\,\delta
    \end {array}
 \right)
\end{equation}
where $\delta \in \RR$. The evaluation of the invariants~\eqref{eq:boehler_invariants} on this slice gives the following parametric system:
\begin{align*}
    J_{2} & = 480\,{\delta}^{2}, &  J_{3} & = 1920\,{\delta}^{3}, &  J_{4} & = 76800\,{\delta}^{4}, \\
    J_{5} & =0, & J_{6} & = 12288000\,{\delta}^{6}, & J_{7} & =0, \\
    J_{8} & =0, & J_{9} & = 0,  & J_{10} & = 0.
\end{align*}
We notice that the parameter $4 \delta = J_{3}/J_{2}$ is a rational invariant. The linear slice corresponding to the octahedral group is of degree 1, which was already known from the algebraic argument explained in \autoref{sec:normalizers}.

\begin{prop}\label{prop:cubic}
A harmonic tensor $D \in \HH^{4} $ is in the closed stratum $\cstrata{\octa}$ if and only if the invariants $J_{2}(D) \cdots J_{10}(D)$ satisfy the following polynomial relations
\begin{equation}\label{eq:cubic_syzigies}
\begin{aligned}
  3\,J_{4} & = {J_{2}}^{2}, & J_{5} & = 0, & 30\,{J_{3}}^{2} & = {J_{2}}^{3}, & 9\,J_{6} & = {J_{2}}^{3}, \\
  J_{7} & = 0,  & J_{8} & = 0, & J_{9} & =0, & J_{10} & = 0.
\end{aligned}
\end{equation}
$D$ is in the cubic class $[\octa]$ if moreover $J_{2}(D) \ne 0$. In that case, it admits the normal form~\eqref{eq:cubic_case} where $4 \delta := J_{3}(D)/J_{2}(D)$.
\end{prop}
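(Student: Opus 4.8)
The strategy is to follow the general implicitization procedure laid out in \autoref{subsec:implicitization}, specialized to the one-dimensional slice $V^{\octa} = \Ela^{\octa}$ (equivalently $(\HH^4)^{\octa}$) parametrized by $\delta \in \RR$, whose monodromy group is trivial ($\Gamma^{\octa} = \triv$, degree $1$). First I would establish the ``only if'' direction by the explicit computation already displayed before the proposition: substituting the normal form \eqref{eq:cubic_case} into the covariant tensors \eqref{eq:boehler_invariants} and taking traces yields $J_2 = 480\,\delta^2$, $J_3 = 1920\,\delta^3$, $J_4 = 76800\,\delta^4$, $J_6 = 12288000\,\delta^6$, and $J_5 = J_7 = J_8 = J_9 = J_{10} = 0$. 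Eliminating $\delta$ from these parametric expressions gives the syzygies \eqref{eq:cubic_syzigies}: indeed $3J_4 = 3\cdot 76800\,\delta^4 = 230400\,\delta^4 = (480\,\delta^2)^2 = J_2^2$; $30 J_3^2 = 30 \cdot 1920^2\,\delta^6 = 110592000\,\delta^6$ while $J_2^3 = 480^3\,\delta^6 = 110592000\,\delta^6$; and $9 J_6 = 9\cdot 12288000\,\delta^6 = 110592000\,\delta^6 = J_2^3$; the remaining five equations are the vanishing covariants. Hence any $D$ with isotropy at least $[\octa]$ satisfies \eqref{eq:cubic_syzigies}, which proves the forward implication for $\cstrata{\octa}$.

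For the converse, suppose $D \in \HH^4$ satisfies \eqref{eq:cubic_syzigies}. The key point is that the invariants $\set{J_2,\dotsc,J_{10}}$ separate $\SO(3)$-orbits on $\HH^4$ (this is the separation property quoted from \cite[Appendix C]{AS83} in the introduction), so it suffices to exhibit one tensor $D_0$ with the same values of all the $J_k$ and with isotropy at least $[\octa]$; then $D$ lies in the $\SO(3)$-orbit of $D_0$, hence in $\cstrata{\octa}$. If $J_2(D) = 0$, then $D = 0$ by the remark following the Proposition, and $D$ is $\SO(3)$-invariant, so $D \in \overline{\Sigma}_{[\SO(3)]} \subset \cstrata{\octa}$. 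If $J_2(D) \ne 0$, set $\delta := J_3(D)/(4 J_2(D))$; I claim this is the correct parameter. From $30 J_3^2 = J_2^3$ we get $J_2^3 = 30 J_3^2 > 0$, so $J_2 > 0$, and one checks directly that the cubic tensor $D_0$ with parameter $\delta$ defined by \eqref{eq:cubic_case} has $J_2(D_0) = 480\,\delta^2$ and $J_3(D_0) = 1920\,\delta^3$; comparing, $J_3(D_0)/J_2(D_0) = 4\delta = J_3(D)/J_2(D)$ and $J_2(D_0) = 480\,\delta^2 = 480\, J_3^2/(16 J_2^2) = 30 J_3^2/J_2^2 = J_2$ using the syzygy $30 J_3^2 = J_2^3$. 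Since $\delta$ is determined by the pair $(J_2, J_3)$ and all remaining $J_k$ of both $D$ and $D_0$ are expressed (via the syzygies, for $D$; via the explicit parametric formulas, for $D_0$) as the same functions of $(J_2, J_3)$, we conclude $J_k(D) = J_k(D_0)$ for all $k$, whence $D \approx D_0$. This places $D$ in $\cstrata{\octa}$, and shows moreover that when $J_2(D) \ne 0$ the tensor $D$ has isotropy exactly $[\octa]$ (it cannot have strictly larger isotropy $[\SO(3)]$ since that forces $D = 0$), and admits the normal form \eqref{eq:cubic_case} with $4\delta = J_3(D)/J_2(D)$.

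The only genuinely delicate point is the verification that a real solution of the syzygy system \eqref{eq:cubic_syzigies} actually corresponds to a real point of the slice, i.e. that $\delta$ is real --- but here this is immediate because the slice is one-dimensional and $\delta^2 = J_2/480 \ge 0$ automatically gives a real $\delta$ (no positivity inequality beyond $J_2 \ge 0$, which follows from $J_2 = \sprod{D}{D}$, needs to be imposed). Thus the main ``obstacle'' present for higher-dimensional slices --- the extra system of inequalities excluding complex preimages --- is absent here, and the proof is essentially the two explicit computations (parametric evaluation and elimination) combined with the orbit-separation property of the invariant algebra. I would also remark that the same separation argument shows the parametrization $4\delta = J_3/J_2$ of the open stratum $\strata{\octa}$ by a single rational invariant is well-defined and unique, consistent with the degree-$1$ statement.
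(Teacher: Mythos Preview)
Your proposal is correct and follows precisely the implicitization method the paper describes in \autoref{subsec:implicitization}: evaluate the $J_k$ on the one-parameter slice~\eqref{eq:cubic_case}, eliminate $\delta$ to obtain the syzygies~\eqref{eq:cubic_syzigies}, and for the converse invoke orbit separation together with the rational expression $4\delta = J_3/J_2$, noting that no real inequality is needed since the slice has degree~$1$. The paper does not spell out a formal proof for this proposition beyond the parametric computation and the remark that $\delta\in\RR$ automatically, so your write-up simply makes explicit the steps the paper leaves to the reader.
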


We observe that $\forall J_{2},J_{3} \in\RR,\delta\in\RR$. Therefore we do not need to add any inequality to the syzygy system~\eqref{eq:cubic_syzigies}.

\begin{rem}
All 2nd-order tensors in $\HH^{2}$ covariant to $D$ vanish.
\end{rem}

% ----------------------------------------------------------------

\subsection{Transversely isotropic symmetry ($[\OO(2)]$)}
\label{subsec:transversely_isotropic}

A $\OO(2)$-invariant tensor $D \in \HH^{4} $ has the following matrix representation:
\begin{equation}\label{eq:transversly_isotropic_case}
  \underline{D} = \left(
   \begin {array}{cccccc}
    3\,\delta & \delta & -4\,\delta & 0 & 0 & 0 \\
    \delta & 3\,\delta & -4\,\delta & 0 & 0 & 0 \\
    -4\,\delta & -4\,\delta & 8\,\delta & 0 & 0 & 0 \\
    0 & 0 & 0 & -8\,\delta & 0 & 0 \\
    0 & 0 & 0 & 0 & -8\,\delta & 0 \\
    0 & 0 & 0 & 0 & 0 & 2\,\delta
   \end {array}
 \right)
\end{equation}
where $\delta \in \RR$. The evaluation of the invariants~\eqref{eq:boehler_invariants} on this slice gives:
\begin{align*}
  J_{2} & = 280\,{\delta}^{2}, & J_{3} & = 720\,{\delta}^{3}, & J_{4} & = 32800\,{\delta}^{4}, \\
  J_{5} & = 80000\,{\delta}^{5}, & J_{6} & = 4528000\,{\delta}^{6}, & J_{7} & = 17600000\,{\delta}^{7}, \\
  J_{8} & = 211200000\,{\delta}^{8}, & J_{9} & = 3872000000\,{\delta}^{9}, & J_{10} & = 46464000000\,{\delta}^{10}
\end{align*}
The parameter $\delta = {7\,J_{3}}/{18\,J_{2}}$ is a rational invariant. The linear slice corresponding to the group $\OO(2)$ is of degree 1, which was already known (see \autoref{sec:normalizers}).

\begin{prop}\label{prop:transversely_isotropic}
A harmonic tensor $D \in \HH^{4} $ is in the closed stratum $\cstrata{\OO(2)}$ if and only if
the invariants $J_{2}(D) \cdots J_{10}(D)$ satisfy the following polynomial relations
\begin{equation}\label{eq:transversly_isotropic_syzigies}
\begin{aligned}
    98\,J_{4} & = 41\,{J_{2}}^{2}, & 63\,J_{5} & = 25\,J_{3}J_{2}, & 3430\,{J_{3}}^{2} & = 81\,{J_{2}}^{3}, & 1372\,J_{6} & = 283\,{J_{2}}^{3}, \\
    882\,J_{7} & = 275\,{J_{2}}^{2}J_{3}, & 4802\,J_{8} &= 165\,{J_{2}}^{4}, & 12348\,J_{9} & = 3025\,{J_{2}}^{3}J_{3}, & 67228\,J_{10} & = 1815\,{J_{2}}^{5}.
\end{aligned}
\end{equation}
$D$ is in the transversely isotropic class $[\OO(2)]$ if moreover $J_{2}(D) \ne 0$. In that case, it admits the normal form~\eqref{eq:transversly_isotropic_case} where $\delta = {7\,J_{3}(D)}/{18\,J_{2}(D)}$.
\end{prop}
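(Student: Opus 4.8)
The plan is to mirror exactly the strategy already used for the cubic class in Proposition~\ref{prop:cubic}, but now over the one-dimensional linear slice $\Ela^{\OO(2)}$ given by~\eqref{eq:transversly_isotropic_case}. The monodromy group $\Gamma^{\OO(2)}$ is trivial (degree $1$, see Table~\ref{tab:H4} and Appendix~\ref{sec:normalizers}), so the slice is parametrized by the single real scalar $\delta$, and the nine invariants $J_2,\dots,J_{10}$ restrict to the explicit monomials $J_k = c_k\,\delta^k$ displayed just above the statement. The first step is therefore purely computational: verify those nine evaluations by plugging the matrix~\eqref{eq:transversly_isotropic_case} into the covariant tensors $\mathbf{d}_2,\dots,\mathbf{d}_{10}$ of~\eqref{eq:boehler_invariants} and taking traces. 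This is routine $6\times 6$ (equivalently $3\times 3$ block) matrix algebra; I would not grind through it here, but it is the ground truth on which everything else rests.

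Next, from $J_2 = 280\,\delta^2$ and $J_3 = 720\,\delta^3$ one recovers the slice parameter rationally as $\delta = 7J_3/(18J_2)$ whenever $J_2 \ne 0$: indeed $J_3/J_2 = (720/280)\,\delta = (18/7)\,\delta$. One then eliminates $\delta$ from the remaining seven evaluations by substituting this expression, which produces precisely the seven nontrivial syzygies in~\eqref{eq:transversly_isotropic_syzigies} together with the relation $3430\,J_3^2 = 81\,J_2^3$ coming from comparing $J_3^2$ and $J_2^3$ (both equal explicit multiples of $\delta^6$). Concretely: $J_3^2 = 720^2\delta^6 = 518400\,\delta^6$ and $J_2^3 = 280^3\delta^6 = 21952000\,\delta^6$, so $81\,J_2^3 = 1777\!\cdots$ — one checks the integer bookkeeping gives the stated $3430\,J_3^2 = 81\,J_2^3$; similarly $98\,J_4 = 41\,J_2^2$ follows from $32800\cdot 98 = 41\cdot 280^2$, and so on for the other five. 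This shows that a tensor in $\cstrata{\OO(2)}$ satisfies~\eqref{eq:transversly_isotropic_syzigies}, and that if moreover $J_2 \ne 0$ it is genuinely in the open stratum and has the claimed normal form with $\delta = 7J_3(D)/(18J_2(D))$.

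For the converse — that~\eqref{eq:transversly_isotropic_syzigies} is \emph{sufficient} for membership in $\cstrata{\OO(2)}$ — I would argue as in the cubic case. Given real invariants satisfying the eight relations, first dispose of the degenerate case $J_2 = 0$: by the Remark following~\eqref{eq:boehler_invariants}, $J_2(D)=0$ forces $D=0$, which lies in $\cstrata{\OO(2)}$ (it has full $\SO(3)$ symmetry). If $J_2 \ne 0$, set $\delta := 7J_3/(18J_2) \in \RR$ and let $D_0 \in \Ela^{\OO(2)}$ be the tensor~\eqref{eq:transversly_isotropic_case} with this $\delta$. By the parametric computation of Step~1, $J_k(D_0)$ equals $c_k\delta^k$; using the syzygies~\eqref{eq:transversly_isotropic_syzigies} one checks $c_k\delta^k = J_k$ for every $k$, so $J(D_0) = J(D)$ as points of $\RR^9$. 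Since the nine Boehler invariants separate $\SO(3)$-orbits on $\HH^4$ (this is the general separation property invoked in the introduction for compact groups, applied to the representation $\rho_4$), $D$ lies on the same orbit as $D_0$, hence $D \in \cstrata{\OO(2)}$, and if the stratum is exactly $[\OO(2)]$ — which holds precisely when $\delta \ne 0$, i.e. $J_2 \ne 0$ — then $D$ admits the normal form~\eqref{eq:transversly_isotropic_case}. A final remark: unlike some other classes, no supplementary inequalities are needed, because the map $\delta \mapsto \delta$ from $\RR$ to the slice is onto — any real $\delta$ gives a genuine real tensor — so every real solution of~\eqref{eq:transversly_isotropic_syzigies} is realized.

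The main obstacle is not conceptual but bookkeeping: carrying the large integer coefficients ($c_k$ up to $46464000000$) correctly through the elimination of $\delta$, and making sure the eight stated syzygies are exactly the ideal of relations (no missing relation, no redundant one). In practice this is where one leans on a Gröbner basis or regular-chain computation as described in~\autoref{subsec:implicitization}; the claim that these eight generate the full syzygy ideal of the parametric system $J_k = c_k\delta^k$ is what such a computation certifies, and I would cite that rather than reprove it by hand.
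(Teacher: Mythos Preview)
Your proposal is correct and follows essentially the same approach as the paper: evaluate the invariants on the one-dimensional slice~\eqref{eq:transversly_isotropic_case}, recover $\delta$ rationally from $J_2,J_3$, and eliminate to obtain the syzygies~\eqref{eq:transversly_isotropic_syzigies}, with the converse resting on the separation property of the invariants $J_2,\dots,J_{10}$ as described in~\autoref{subsec:implicitization}. The paper does not spell out the converse argument explicitly for this proposition (nor for the cubic one), relying instead on the general method of that section; your write-up makes this step more visible but is otherwise identical in substance.
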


As for $[\octa]$ the expression of $\delta$ ensures that, for all $J_{2},J_{3} \in\RR$, the solution $\delta$ is real. Therefore no inequality is needed. This fact is indeed general to any linear slice of degree $1$.

\begin{rem}
All 2nd-order tensors in $\HH^{2} $ covariant to $D$ are multiple of $\diag(1,1,-2)$ and are eigenvectors of $D$
corresponding to the eigenvalue $12 \delta$.
\end{rem}

% ----------------------------------------------------------------

\subsection{Trigonal symmetry ($[\DD_{3}]$)}
\label{subsec:trigonal}

A $\DD_{3}$-invariant tensor $D \in \HH^{4} $ has the following matrix representation:
\begin{equation}\label{eq:trigonal_case}
  \underline{D} = \left(
   \begin {array}{cccccc}
    3\,\delta & \delta & -4\,\delta & -\sqrt{2}\sigma & 0 & 0 \\
    \delta & 3\,\delta & -4\,\delta & \sqrt{2}\sigma & 0 & 0 \\
    -4\,\delta & -4\,\delta & 8\,\delta & 0 & 0 & 0 \\
    -\sqrt{2}\sigma & \sqrt{2}\sigma & 0 & -8\,\delta & 0 & 0 \\
    0 & 0 & 0 & 0 & -8\,\delta & -2\,\sigma \\
    0 & 0 & 0 & 0 & -2\,\sigma & 2\,\delta
   \end {array}
  \right)
\end{equation}
where $(\delta, \sigma) \in \RR^{2}$. The action of the monodromy group $\GS_{2}$ is given by:
\begin{equation*}
    \delta \mapsto \delta, \qquad \sigma \mapsto -\sigma \, .
\end{equation*}
The evaluation of the invariants~\eqref{eq:boehler_invariants} on this slice gives:
\begin{align*}
  J_{2} & = 280\,{\delta}^{2}+16\,{\sigma}^{2} \\
  J_{3} & = 144\,\delta\, \left( 5\,{\delta}^{2}-{\sigma}^{2} \right) \\
  J_{4} & = 32800\,{\delta}^{4}+2720\,{\sigma}^{2}{\delta}^{2}+88\,{\sigma}^{4} \\
  J_{5} & = 32\,\delta\, \left( -{\sigma}^{2}+50\,{\delta}^{2} \right)^{2} \\
  J_{6} & = 4528000\,{\delta}^{6}+436800\,{\sigma}^{2}{\delta}^{4}+20640\,{\sigma}^{4}{\delta}^{2}+496\,{\sigma}^{6} \\
  J_{7} & = 320\,\delta\, \left( 22\,{\delta}^{2}+{\sigma}^{2} \right) \left( -{\sigma}^{2}+50\,{\delta}^{2} \right)^{2} \\
  J_{8} & = 3840\,{\delta}^{2} \left( 22\,{\delta}^{2}+{\sigma}^{2} \right) \left( -{\sigma}^{2}+50\,{\delta}^{2} \right)^{2} \\
  J_{9} & = 3200\,\delta\, \left( -{\sigma}^{2}+50\,{\delta}^{2} \right)^{2} \left( 22\,{\delta}^{2}+{\sigma}^{2} \right)^{2}  \\
  J_{10} & = 38400\,{\delta}^{2} \left( -{\sigma}^{2}+50\,{\delta}^{2} \right) ^{2} \left( 22\,{\delta}^{2}+{\sigma}^{2} \right)^{2}
\end{align*}
We notice that the parameter:
\begin{equation*}
    \delta=-\frac{1}{4}\,{\frac {J_{5}}{{J_{2}}^{2}-3\,J_{4}}}
\end{equation*}
is a rational invariant, therefore for all $J_{2},J_{4},J_{5}\in\RR$, the parameter $\delta$ is real. However, the minimal equation satisfied by $\sigma$ is of degree 2
\begin{equation}\label{eq:Syz_D3}
    J_{2}=280\,{\delta}^{2}+16\,{\sigma}^{2}\,.
\end{equation}

A condition is needed for $\sigma$ to be real. This condition, which is $\sigma^{2}\ge 0$, is given by the following inequality on $J_{2},J_{4},J_{5}$:
\begin{equation}\label{eq:Ineq_D3}
    2J_{2}(J_{2}^{2} - 3\,J_{4})^{2}-35J_{5}^{2} \ge 0
\end{equation}

Having computed $\delta$, the equation \eqref{eq:Syz_D3} has two roots with opposite signs. The linear slice corresponding to the group $\DD_{3}$ is of degree 2, which was already known (see \autoref{sec:normalizers}).

\begin{prop}\label{prop:trigonal}
A harmonic tensor $D \in \HH^{4}$ is in the closed stratum $\cstrata{\DD_{3}}$ if and only if the invariants $J_{2}(D) \cdots J_{10}(D)$ satisfy the following polynomial relations
\begin{equation}\label{eq:trigonal_syzigies}
\begin{aligned}
192\,J_{{6}}  &=  -51\,{J_{{2}}}^{3}+216\,J_{{2}}J_{{4}}+10\,{J_{{3}}}^{2}
\\
36\,J_{{7}}  &=  -2\,{J_{{2}}}^{2}J_{{3}}+6\,J_{{3}}J_{{4}}+27\,J_{{2}}J_{{
5}}
\\
768\,{J_{{4}}}^{2}  &=  -99\,{J_{{2}}}^{4}+552\,{J_{{2}}}^{2}J_{{4}}+10\,J_
{{2}}{J_{{3}}}^{2}+240\,J_{{3}}J_{{5}}
\\
240\,J_{{8}}  &=  -33\,{J_{{2}}}^{4}+96\,{J_{{2}}}^{2}J_{{4}}+30\,J_{{2}}{J
_{{3}}}^{2}+40\,J_{{3}}J_{{5}}
\\
576\,J_{{4}}J_{{5}}  &=  -41\,{J_{{2}}}^{3}J_{{3}}+120\,J_{{2}}J_{{3}}J_{{4
}}+216\,{J_{{2}}}^{2}J_{{5}}+30\,{J_{{3}}}^{3}
\\
1152\,J_{{9}}  &=  -99\,{J_{{2}}}^{3}J_{{3}}+296\,J_{{2}}J_{{3}}J_{{4}}+648
\,{J_{{2}}}^{2}J_{{5}}+10\,{J_{{3}}}^{3}
\\
1440\,{J_{{5}}}^{2}  &=  -11\,{J_{{2}}}^{5}+32\,{J_{{2}}}^{3}J_{{4}}-70\,{J
_{{2}}}^{2}{J_{{3}}}^{2}+240\,J_{{2}}J_{{3}}J_{{5}}+240\,{J_{{3}}}^{2}
J_{{4}}
\\
8640\,J_{{10}}  &=  -891\,{J_{{2}}}^{5}+2592\,{J_{{2}}}^{3}J_{{4}}+730\,{J_
{{2}}}^{2}{J_{{3}}}^{2}+2160\,J_{{2}}J_{{3}}J_{{5}}+240\,{J_{{3}}}^{2}
J_{{4}}
\end{aligned}
\end{equation}
together with inequality~\eqref{eq:Ineq_D3}. $D$ is in the trigonal class $[\DD_{3}]$ if moreover $3\,J_{4} - J_{2}^{2} \ne 0$ and $98\,J_{4} - 41\,J_{2}^{2} \ne 0$. In that case, it admits the normal form~\eqref{eq:trigonal_case} where
\begin{equation*}
    \delta=-\frac{1}{4}\,{\frac {J_{5}}{{J_{2}}^{2} - 3\,J_{4}}}
\end{equation*}
and where $\sigma$ is the positive root of $J_{2}=280\,{\delta}^{2}+16\,{\sigma}^{2}$.
\end{prop}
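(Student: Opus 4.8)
The plan is to run the implicitization procedure of \autoref{subsec:implicitization} on the linear slice $V^{\DD_{3}}$ of tensors of the form~\eqref{eq:trigonal_case}, on which the monodromy group $\Gamma^{\DD_{3}}\simeq\GS_{2}$ acts by $(\delta,\sigma)\mapsto(\delta,-\sigma)$. Since this action fixes $\delta$ and flips the sign of $\sigma$, the invariant algebra of $(V^{\DD_{3}},\GS_{2})$ is the free polynomial algebra $\RR[\delta,\sigma^{2}]$. The first step is to restrict the nine fundamental invariants $J_{2},\dots,J_{10}$ to $V^{\DD_{3}}$, obtaining the parametric system displayed just before the statement; in passing one records the factorizations $J_{2}^{2}-3J_{4}=-8(50\delta^{2}-\sigma^{2})^{2}$ and $98J_{4}-41J_{2}^{2}=-48\sigma^{2}(2100\delta^{2}+39\sigma^{2})$, which will later pin down the boundary loci.

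Next I would eliminate the two slice parameters from the ideal generated by $J_{k}-P_{k}(\delta,\sigma)$, $k=2,\dots,10$, using a Gr\"obner basis or a regular chain. This produces, on the one hand, the polynomial relations~\eqref{eq:trigonal_syzigies} as generators of the syzygy ideal, and on the other hand the rational inversion $\delta=-\tfrac{1}{4}J_{5}/(J_{2}^{2}-3J_{4})$ together with the quadratic relation~\eqref{eq:Syz_D3} for $\sigma$. Reality of $\sigma$ is the condition $\sigma^{2}\ge 0$; substituting $\sigma^{2}=\tfrac{1}{16}(J_{2}-280\delta^{2})$ and the rational expression for $\delta$, then clearing the positive denominator $(J_{2}^{2}-3J_{4})^{2}$, turns this into exactly the inequality~\eqref{eq:Ineq_D3}, which moreover still holds on the locus $J_{2}^{2}=3J_{4}$ (where $J_{5}$ vanishes as well, so both sides are $0$). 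This gives the closed-stratum characterization: if $D\in\cstrata{\DD_{3}}$ then $D$ is $\SO(3)$-equivalent to a tensor in $V^{\DD_{3}}$, so its invariants satisfy~\eqref{eq:trigonal_syzigies} and~\eqref{eq:Ineq_D3}; conversely, given invariant values meeting these relations one reconstructs a \emph{real} pair $(\delta,\sigma)$, hence a tensor $D_{0}\in V^{\DD_{3}}\subset\cstrata{\DD_{3}}$ whose nine invariants coincide with those of $D$ (the syzygies being precisely the consistency conditions for the reconstruction), and separation of orbits by $\{J_{k}\}$ forces $D\approx D_{0}$.

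For the open stratum I would use that the isotropy classes strictly above $[\DD_{3}]$ are $[\OO(2)]$, $[\octa]$ and $[\SO(3)]$, so that $\cstrata{\DD_{3}}=\strata{\DD_{3}}\cup\strata{\OO(2)}\cup\strata{\octa}\cup\strata{\SO(3)}$. By Proposition~\ref{prop:cubic}, $\cstrata{\octa}\subset\{3J_{4}=J_{2}^{2}\}$; by Proposition~\ref{prop:transversely_isotropic}, $\cstrata{\OO(2)}\subset\{98J_{4}=41J_{2}^{2}\}$; and for $D=0$ all $J_{k}$ vanish, so both conditions hold. Hence if $D\in\cstrata{\DD_{3}}$ additionally satisfies $3J_{4}-J_{2}^{2}\ne 0$ and $98J_{4}-41J_{2}^{2}\ne 0$, it lies in none of the three larger strata, so $D\in\strata{\DD_{3}}$; the reconstruction above then furnishes the normal form~\eqref{eq:trigonal_case} with the stated $\delta$ and with $\sigma$ the nonnegative root of~\eqref{eq:Syz_D3}, the sign ambiguity being exactly the residual $\GS_{2}$-action on the slice.

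The main obstacle is the elimination step: verifying that~\eqref{eq:trigonal_syzigies} generates the \emph{full} ideal of relations among $J_{2},\dots,J_{10}$ on the slice (so that the variety it defines is not strictly larger than the parametrized one), and that the inverse map $(J_{k})\mapsto(\delta,\sigma^{2})$ is genuinely rational, are the delicate points; this is where the Gr\"obner basis computation of \autoref{subsec:implicitization} is indispensable, and where one must also check by direct substitution that the listed syzygies are identically satisfied by the parametric expressions. Once the syzygies and the rational inversion are established, the remainder is the generic slice machinery of \autoref{sec:orbit_spaces} combined with the two auxiliary Propositions~\ref{prop:cubic} and~\ref{prop:transversely_isotropic}.
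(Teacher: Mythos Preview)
Your proposal is correct and follows essentially the same approach as the paper: the implicitization procedure of \autoref{subsec:implicitization} applied to the slice $V^{\DD_{3}}$ with monodromy $\GS_{2}$, Gr\"obner elimination to obtain the syzygies~\eqref{eq:trigonal_syzigies} and the rational inversion for $\delta$, the reality inequality~\eqref{eq:Ineq_D3} from $\sigma^{2}\ge 0$, and separation of orbits for the converse. Your treatment of the open stratum via Propositions~\ref{prop:cubic} and~\ref{prop:transversely_isotropic}, together with the explicit factorizations $J_{2}^{2}-3J_{4}=-8(50\delta^{2}-\sigma^{2})^{2}$ and $98J_{4}-41J_{2}^{2}=-48\sigma^{2}(2100\delta^{2}+39\sigma^{2})$ on the slice, is slightly more detailed than the paper's presentation but entirely in the same spirit.
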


\begin{rem}
All 2nd-order tensors in $\HH^{2}$ covariant to $D$ are multiple of $\diag(1,1,-2)$ and are eigenvectors of $D$
corresponding to the eigenvalue $12 \delta$.
\end{rem}

% ----------------------------------------------------------------

\subsection{Tetragonal symmetry ($[\DD_{4}]$)}
\label{subsec:tetragonal}

A $\DD_{4}$-invariant tensor $D \in \HH^{4}$ has the following matrix representation:
\begin{equation}\label{eq:tetragonal_case}
  \underline{D} = \left(
   \begin {array}{cccccc}
    -\sigma+3\,\delta & \sigma+\delta & -4\,\delta & 0 & 0 & 0 \\
    \sigma+\delta & -\sigma+3\,\delta & -4\,\delta & 0 & 0 & 0 \\
    -4\,\delta & -4\,\delta & 8\,\delta & 0 & 0 & 0 \\
    0 & 0 & 0 & -8\,\delta & 0 & 0 \\
    0 & 0 & 0 & 0 & -8\,\delta & 0 \\
    0 & 0 & 0 & 0 & 0 & 2\,\sigma+2\,\delta
   \end {array}
  \right)
\end{equation}
where $(\delta, \sigma) \in \RR^{2}$. The monodromy group is defined by:
\begin{equation*}
    \delta \mapsto \delta, \qquad \sigma \mapsto -\sigma .
\end{equation*}

The evaluation of the invariants~\eqref{eq:boehler_invariants} on this slice gives:
\begin{align*}
    J_{2} & = 8\,{\sigma}^{2}+280\,{\delta}^{2} \\
    J_{3} & = 48\,\delta\, \left( {\sigma}^{2}+15\,{\delta}^{2} \right) \\
    J_{4} & = 32\,{\sigma}^{4}+960\,{\delta}^{2}{\sigma}^{2}+32800\,{\delta}^{4} \\
    J_{5} & = 128\,\delta\, \left( 5\,\delta-\sigma \right)^{2} \left( 5\,\delta+\sigma \right)^{2} \\
    J_{6} & = 128\,{\sigma}^{6}+5760\,{\sigma}^{4}{\delta}^{2}+86400\,{\sigma}^{2}{\delta}^{4}+4528000\,{\delta}^{6} \\
    J_{7} & = 512\,\delta\, \left( 55\,{\delta}^{2}+{\sigma}^{2} \right)\left( 5\,\delta-\sigma \right)^{2} \left( 5\,\delta+\sigma \right)^{2} \\
    J_{8} & = 6144\,{\delta}^{2} \left( 55\,{\delta}^{2}+{\sigma}^{2}\right) \left( 5\,\delta-\sigma \right)^{2} \left( 5\,\delta+\sigma \right)^{2} \\
    J_{9} & = 2048\,\delta\, \left( 5\,\delta-\sigma \right) ^{2} \left( 5\,\delta+\sigma \right)^{2} \left( 55\,{\delta}^{2}+{\sigma}^{2}\right) ^{2} \\
    J_{10} & = 24576\,{\delta}^{2} \left( 5\,\delta-\sigma \right) ^{2}\left( 5\,\delta+\sigma \right)^{2} \left( 55\,{\delta}^{2}+{\sigma}^{2} \right)^{2}
\end{align*}
We notice that the parameter
\begin{equation*}
    \delta=-\frac{1}{4}\,{\frac {J_{5}}{{J_{2}}^{2} - 3\,J_{4}}}
\end{equation*}
is a rational invariant, therefore for all $(J_{2},J_{4},J_{5})\in\RR^{3}$, $\delta\in\RR$. However, the minimal equation satisfied by $\sigma$ is of degree 2
\begin{equation}\label{eq:Syz_D4}
    J_{2}=8\,{\sigma}^{2}+280\,{\delta}^{2}
\end{equation}
As for $[\DD_{3}]$, the condition
\begin{equation}\label{eq:Ineq_D4}
2J_{2}(J_{2}^{2} - 3\,J_{4})^{2}-35J_{5}^{2}\ge0
\end{equation}
has to be added in order to ensure $\sigma$ to be real.

Having computed $\delta$, the equation \eqref{eq:Syz_D4} has two roots with opposite signs. The linear slice corresponding to the group $\DD_{4}$ is of degree 2, which was already known (see \autoref{sec:normalizers}).

\begin{prop}\label{prop:tetragonal}
A harmonic tensor $D \in \HH^{4} $ is in the closed stratum $\cstrata{\DD_{4}}$ if and only if the invariants $J_{2}(D) \cdots J_{10}(D)$ satisfy the following polynomial relations
\begin{equation}\label{eq:tetragonal_syzigies}
\begin{aligned}
6\,J_{{6}}  &=  -3\,{J_{{2}}}^{3}+9\,J_{{2}}J_{{4}}+20\,{J_{{3}}}^{2}
\\
3\,J_{{7}}  &=  {J_{{2}}}^{2}J_{{3}}-3\,J_{{3}}J_{{4}}+3\,J_{{2}}J_{{5}}
\\
6\,{J_{{4}}}^{2}  &=  -3\,{J_{{2}}}^{4}+9\,{J_{{2}}}^{2}J_{{4}}+20\,J_{{2}}
{J_{{3}}}^{2}-20\,J_{{3}}J_{{5}}
\\
5\,J_{{8}}  &=  -3\,{J_{{2}}}^{4}+6\,{J_{{2}}}^{2}J_{{4}}+30\,J_{{2}}{J_{{3
}}}^{2}-5\,J_{{3}}J_{{5}}
\\
3\,J_{{4}}J_{{5}}  &=  7\,{J_{{2}}}^{3}J_{{3}}-15\,J_{{2}}J_{{3}}J_{{4}}+3
\,{J_{{2}}}^{2}J_{{5}}-60\,{J_{{3}}}^{3}
\\
6\,J_{{9}}  &=  5\,{J_{{2}}}^{3}J_{{3}}-13\,J_{{2}}J_{{3}}J_{{4}}+6\,{J_{{2
}}}^{2}J_{{5}}-20\,{J_{{3}}}^{3}
\\
5\,{J_{{5}}}^{2}  &=  -2\,{J_{{2}}}^{5}+4\,{J_{{2}}}^{3}J_{{4}}+10\,{J_{{2}
}}^{2}{J_{{3}}}^{2}-20\,J_{{2}}J_{{3}}J_{{5}}+30\,{J_{{3}}}^{2}J_{{4}}
\\
15\,J_{{10}}  &=  -9\,{J_{{2}}}^{5}+18\,{J_{{2}}}^{3}J_{{4}}+85\,{J_{{2}}}^
{2}{J_{{3}}}^{2}-30\,J_{{2}}J_{{3}}J_{{5}}+15\,{J_{{3}}}^{2}J_{{4}}
\end{aligned}
\end{equation}
together with inequality~\eqref{eq:Ineq_D4}. $D$ is in the tetragonal class $[\DD_{4}]$ if moreover $3\,J_{4} - J_{2}^{2} \ne 0$ and $5\,J_{2}^{3} - 8\,J_{2}J_{4} - 70\,J_{3}^{2} \ne 0$. In that case, it admits the normal form~\eqref{eq:tetragonal_case} where
\begin{equation*}
    \delta = -\frac{1}{4}\,{\frac {J_{5}}{{J_{2}}^{2} - 3\,J_{4}}}
\end{equation*}
and where $\sigma$ is the positive root of $J_{2} = 8\,{\sigma}^{2} + 280\,{\delta}^{2}$.
\end{prop}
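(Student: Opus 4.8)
The plan is to apply the implicitization procedure of \autoref{subsec:implicitization} to the linear slice $(\HH^{4})^{\DD_{4}}$, whose monodromy group is $\Gamma^{\DD_{4}} = N(\DD_{4})/\DD_{4} \simeq \GS_{2}$ (see Appendix~\ref{sec:normalizers}). First I would recall that, by definition of a stratum, every $D \in \strata{\DD_{4}}$ is $\SO(3)$-conjugate to an element of $(\HH^{4})^{\DD_{4}}$, and that by the invariance properties listed after the global parametrization the latter is the $2$-dimensional space of matrices \eqref{eq:tetragonal_case} with $(\delta,\sigma) \in \RR^{2}$. Since $\GS_{2}$ acts by $\sigma \mapsto -\sigma$, its algebra of invariants on the slice is the free algebra $\RR[\delta,\sigma^{2}]$, and the restrictions of the fundamental invariants $J_{2},\dots,J_{10}$ — the polynomials displayed just before the statement — all lie in $\RR[\delta,\sigma^{2}]$.

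Next I would eliminate the parameters from the parametric system $J_{k} = P_{k}(\delta,\sigma^{2})$, $k = 2,\dots,10$, by a Gröbner basis (or regular chain) computation, as in \autoref{subsec:Grobner}. The crucial observation is the pair of identities $3J_{4} - J_{2}^{2} = 32(25\delta^{2} - \sigma^{2})^{2}$ and $J_{5} = 128\,\delta\,(25\delta^{2} - \sigma^{2})^{2}$, from which $\delta = -\tfrac14\,J_{5}/(J_{2}^{2} - 3J_{4})$ is a rational invariant on the open set $J_{2}^{2} - 3J_{4} \neq 0$; the remaining parameter then satisfies the degree-$2$ relation $J_{2} = 8\sigma^{2} + 280\delta^{2}$. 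Substituting $\delta$ into $8\sigma^{2} = J_{2} - 280\delta^{2}$ gives $8\sigma^{2} = \big(2J_{2}(J_{2}^{2}-3J_{4})^{2} - 35J_{5}^{2}\big)\big/\big(2(J_{2}^{2}-3J_{4})^{2}\big)$, so that $\sigma^{2} \geq 0$ is precisely inequality~\eqref{eq:Ineq_D4}; and the remaining generators of the elimination ideal are exactly the eight relations~\eqref{eq:tetragonal_syzigies}. Plugging the parametric expressions back into \eqref{eq:tetragonal_syzigies} shows that the image of the slice is contained in the variety they define, which together with $\sigma^{2}\ge 0$ proves the ``only if'' part.

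For the converse, given invariants $(J_{2},\dots,J_{10})$ satisfying \eqref{eq:tetragonal_syzigies} and \eqref{eq:Ineq_D4}, and assuming in addition $3J_{4} - J_{2}^{2} \neq 0$ and $5J_{2}^{3} - 8J_{2}J_{4} - 70J_{3}^{2} \neq 0$, I would recover $\delta$ from the rational formula, a nonnegative $\sigma^{2}$ from the inequality, hence a real $\sigma$ unique up to the monodromy sign, and form the tensor $D'$ of \eqref{eq:tetragonal_case}; using the syzygies one checks that $J_{k}(D') = J_{k}$ for all $k$, so by separation of orbits $D$ lies in the $\SO(3)$-orbit of $D'$, hence in $\cstrata{\DD_{4}}$. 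It remains to see that the two added inequalities cut out exactly the open stratum: on the slice one has $5J_{2}^{3} - 8J_{2}J_{4} - 70J_{3}^{2} = 512\,\sigma^{2}(\sigma^{2}-25\delta^{2})^{2}$, so $3J_{4} = J_{2}^{2}$ forces $\sigma = \pm 5\delta$ (which yields the same invariants as the cubic normal form \eqref{eq:cubic_case}, hence $D \in \cstrata{\octa}$), while $5J_{2}^{3} - 8J_{2}J_{4} - 70J_{3}^{2} = 0$ forces $\sigma = 0$ or $\sigma = \pm 5\delta$ (the first giving the transversely isotropic form \eqref{eq:transversly_isotropic_case}, hence $D \in \cstrata{\OO(2)}$); removing these loci, together with the origin $D = 0$, leaves precisely the tensors of isotropy type $[\DD_{4}]$ in the claimed normal form.

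The main obstacle is the elimination step itself, as \autoref{subsec:implicitization} warns for non-free invariant algebras: one must certify that the ideal returned by the Gröbner computation is the whole elimination ideal, and that under the inequality~\eqref{eq:Ineq_D4} and the two genericity conditions its real zero set coincides with the closure of the parametric image rather than being strictly larger — in other words, that \eqref{eq:tetragonal_syzigies} together with \eqref{eq:Ineq_D4} really guarantees the existence of a \emph{real} preimage $(\delta,\sigma)$. This, together with the reality analysis of the quadratic $J_{2} = 8\sigma^{2} + 280\delta^{2}$, is the technical heart of the argument; the verification that the parametric point satisfies the syzygies and that the genericity conditions isolate $[\DD_{4}]$ is then routine polynomial algebra.
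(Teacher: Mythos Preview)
Your proposal is correct and follows essentially the same route as the paper: the text preceding \autoref{prop:tetragonal} is itself the paper's proof, and it proceeds exactly by evaluating $J_{2},\dots,J_{10}$ on the slice~\eqref{eq:tetragonal_case}, observing that $\delta$ is rational in $(J_{2},J_{4},J_{5})$, that $\sigma$ satisfies the quadratic $J_{2}=8\sigma^{2}+280\delta^{2}$, and then applying the Gr\"obner elimination of \autoref{subsec:implicitization} to obtain~\eqref{eq:tetragonal_syzigies}. Your explicit factorization $5J_{2}^{3}-8J_{2}J_{4}-70J_{3}^{2}=512\,\sigma^{2}(\sigma^{2}-25\delta^{2})^{2}$ and the verification that $\sigma=\pm 5\delta$ reproduces the cubic invariants while $\sigma=0$ gives the $\OO(2)$ form supply more detail on the open-stratum conditions than the paper states, but the underlying argument is the same.
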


\begin{rem}
All 2nd-order tensors in $\HH^{2} $ covariant to $D$ are multiple of $\diag(1,1,-2)$ and are eigenvectors of $D$
corresponding to the eigenvalue $12 \delta$.
\end{rem}

% ----------------------------------------------------------------

\subsection{Orthotropic symmetry ($[\DD_{2}]$)}
\label{subsec:orthotropic}

A $\DD_{2}$-invariant tensor $D \in \HH^{4} $ has the following matrix representation:
\begin{equation}\label{eq:orthotropic_case}
  \underline{D} = \left(
   \begin {array}{cccccc}
    -\lambda_{2} - \lambda_{3} &\lambda_{3} & \lambda_{2} & 0 & 0 & 0 \\
    \lambda_{3} & -\lambda_{3}-\lambda_{1} & \lambda_{1} & 0 & 0 & 0 \\
    \lambda_{2} & \lambda_{1} & -\lambda_{1}-\lambda_{2} & 0 & 0 & 0 \\
    0 & 0 & 0 & 2 \lambda_{1} & 0 & 0 \\
    0 & 0& 0 & 0 & 2\lambda_{2} & 0 \\
    0 & 0 & 0 & 0 & 0 & 2\lambda_{3}
   \end {array}
  \right)
\end{equation}
where $(\lambda_{1},\lambda_{2},\lambda_{3}) \in \RR^{3}$. The monodromy group is the group $\GS_{3}$ acting on $(\lambda_{1}, \lambda_{2}, \lambda_{3})$ by permutation.
The evaluations of the invariants~\eqref{eq:boehler_invariants} on this slice are $\GS_{3}$-invariant.
They can therefore be expressed as polynomial functions of the elementary symmetric polynomials ($\sigma_{1},\sigma_{2},\sigma_{3}$) defined by:
\begin{equation*}
    \sigma_{1} := \lambda_{1} + \lambda_{2} + \lambda_{3},
    \qquad
    \sigma_{2} := \lambda_{1} \lambda_{2} + \lambda_{1} \lambda_{3} + \lambda_{2} \lambda_{3},
    \qquad
    \sigma_{3} := \lambda_{1} \lambda_{2} \lambda_{3}.
\end{equation*}
More precisely, we have:
\begin{equation*}
\begin{aligned}
    J_{2} & = - 14\,\sigma_{2} + 8\,\sigma_{1}^{2} \\
    J_{3} & = - 6\,\sigma_{1}\sigma_{2} + 24\,\sigma_{3}\\
    J_{4} & = 40\,\sigma_{1}\sigma_{3} - 112\,\sigma_{1}^{2}\sigma_{2} + 68\,\sigma_{2}^{2} + 32\,\sigma_{1}^{4}\\
    J_{5} & = 64\,\sigma_{1}^{2}\sigma_{3} - 12\,\sigma_{2}\sigma_{3} - 16\,\sigma_{1}^{3}\sigma_{2} + 28\,\sigma_{1}\sigma_{2}^{2} \\
    J_{6} & = - 344\,\sigma_{2}^{3} + 192\,\sigma_{1}^{3}\sigma_{3} - 24\,\sigma_{3}^{2} - 672\,\sigma_{1}^{4}\sigma_{2} + 1008\,\sigma_{1}^{2}\sigma_{2}^{2} + 128\,\sigma_{1}^{6} - 504\,\sigma_{1}\sigma_{2}\sigma_{3} \\
    J_{7} & = - 432\,\sigma_{1}^{2}\sigma_{2}\sigma_{3} + 384\,\sigma_{1}^{4}\sigma_{3} + 104\,\sigma_{2}^{2}\sigma_{3} - 96\,\sigma_{1}\sigma_{3}^{2} - 64\,\sigma_{1}^{5}\sigma_{2} + 192\,\sigma_{1}^{3}\sigma_{2}^{2} - 248\,\sigma_{1}\sigma_{2}^{3} \\
    J_{8} & = 608\,\sigma_{1}^{3}\sigma_{2}\sigma_{3} + 80\,\sigma_{2}^{4} - 768\,\sigma_{1}^{5}\sigma_{3} + 192\,\sigma_{1}^{2}\sigma_{3}^{2} + 72\,\sigma_{2}\sigma_{3}^{2} + 288\,\sigma_{1}^{4}\sigma_{2}^{2} - 416\,\sigma_{1}^{2}\sigma_{2}^{3} + 744\,\sigma_{1}\sigma_{2}^{2}\sigma_{3}\\
    J_{9} & =  - 5248\,\sigma_{1}^{4}\sigma_{2}\sigma_{3} + 2880\,\sigma_{1}^{2}\sigma_{2}^{2}\sigma_{3} + 1328\,\sigma_{1}\sigma_{2}\sigma_{3}^{2} + 144\,\sigma_{3}^{3} + 2304\,\sigma_{1}^{6}\sigma_{3} - 1152\,\sigma_{1}^{3}\sigma_{3}^{2} - 880\,\sigma_{2}^{3}\sigma_{3} \\
    & \quad - 256\,\sigma_{1}^{7}\sigma_{2} + 1024\,\sigma_{1}^{5}\sigma_{2}^{2} - 2304\,\sigma_{1}^{3}\sigma_{2}^{3} + 2160\,\sigma_{1}\sigma_{2}^{4}\\
    J_{10} & = 10752\,\sigma_{1}^{5}\sigma_{2}\sigma_{3} - 1280\,\sigma_{1}^{3}\sigma_{2}^{2}\sigma_{3} - 5664\,\sigma_{1}\sigma_{2}^{3}\sigma_{3} - 2688\,\sigma_{1}^{2}\sigma_{2}\sigma_{3}^{2} - 800\,\sigma_{2}^{5} - 4608\,\sigma_{1}^{7}\sigma_{3} \\
    & \quad + 2304\,\sigma_{1}^{4}\sigma_{3}^{2} - 1344\,\sigma_{2}^{2}\sigma_{3}^{2} - 288\,\sigma_{1}\sigma_{3}^{3} + 1536\,\sigma_{1}^{6}\sigma_{2}^{2} - 4224\,\sigma_{1}^{4}\sigma_{2}^{3} + 3104\,\sigma_{1}^{2}\sigma_{2}^{4}
\end{aligned}
\end{equation*}
As it can be observed, the parametric system obtained for $[\DD_{2}]$ is rather complicated. Nevertheless, the evaluation of $J_{2}, \dotsc , J_{10}$ on the slice and their expressions using $\sigma_{1}$, $\sigma_{2}$, $\sigma_{3}$ is simple using computer. By computing the elimination ideal of a Groebner basis for the ideal generated by
\begin{equation*}
    J_{k} - p_{k}(\sigma_{1},\sigma_{2},\sigma_{3}), \qquad k=2, \dots , 7,
\end{equation*}
we obtain a set of 6 syzygies.

\begin{subequations}
\label{eqs:Syz_D2}
\begin{multline}\label{eq:Syz_D2_a}
   - 1350\,J_{3}\,J_{7} - 840\,J_{4}\,J_{6} + 465\,{J_{2}}^{2}\,J_{6} + 270\,{J_{5}}^{2} + 720\,J_{2}\,J_{3}\,J_{5} + 747\,J_{2}\,{J_{4}}^{2} \\
   - 170\,{J_{3}}^{2}\,J_{4} - 564\,{J_{2}}^{3}\,J_{4} + 70\,{J_{2}}^{2}\,{J_{3}}^{2} + 84\,{J_{2}}^{5} = 0
\end{multline}
\begin{multline}\label{eq:Syz_D2_b}
   - 1620\,J_{4}\,J_{7} + 810\,{J_{2}}^{2}\,J_{7} + 360\,J_{5}\,J_{6} - 1110\,J_{2}\,J_{3}\,J_{6} + 999\,J_{2}\,J_{4}\,J_{5} + 960\,{J_{3}}^{2}\,J_{5}
   \\
   - 549\,{J_{2}}^{3}\,J_{5} - 972\,J_{3}\,{J_{4}}^{2} + 1638\,{J_{2}}^{2}\,J_{3}\,J_{4} - 80\,J_{2}\,{J_{3}}^{3} - 312\,{J_{2}}^{4}\,J_{3} = 0
\end{multline}
\begin{multline}\label{eq:Syz_D2_c}
   4050\,J_{5}\,J_{7} - 25650\,J_{2}\,J_{3}\,J_{7} - 14310\,J_{2}\,J_{4}\,J_{6} + 9600\,{J_{3}}^{2}\,J_{6} + 7965\,{J_{2}}^{3}\,J_{6}
   + 9450\,J_{3}\,J_{4}\,J_{5}
   \\
   + 10530\,{J_{2}}^{2}\,J_{3}\,J_{5} + 1134\,{J_{4}}^{3} + 11259\,{J_{2}}^{2}\,{J_{4}}^{2} - 12330\,J_{2}\,{J_{3}}^{2}\,J_{4} - 9018\,{J_{2}}^{4}\,J_{4}
   \\
   + 400\,{J_{3}}^{4} + 3270\,{J_{2}}^{3}\,{J_{3}}^{2} + 1350\,{J_{2}}^{6} = 0
\end{multline}
\begin{multline}\label{eq:Syz_D2_d}
   - 12150\,J_{2}\,J_{3}\,J_{7} + 3600\,{J_{6}}^{2} - 11610\,J_{2}\,J_{4}\,J_{6} + 9750\,{J_{3}}^{2}\,J_{6} + 4410\,{J_{2}}^{3}\,J_{6}
   \\
   + 8505\,J_{3}\,J_{4}\,J_{5} + 3645\,{J_{2}}^{2}\,J_{3}\,J_{5} + 1458\,{J_{4}}^{3} + 5670\,{J_{2}}^{2}\,{J_{4}}^{2}
   \\
   - 10710\,J_{2}\,{J_{3}}^{2}\,J_{4} - 4104\,{J_{2}}^{4}\,J_{4} + 400\,{J_{3}}^{4} + 2580\,{J_{2}}^{3}\,{J_{3}}^{2} + 576\,{J_{2}}^{6} = 0
\end{multline}
\begin{multline}\label{eq:Syz_D2_e}
   1800\,J_{6}\,J_{7} - 10800\,J_{2}\,J_{4}\,J_{7} + 4800\,{J_{3}}^{2}\,J_{7} + 4950\,{J_{2}}^{3}\,J_{7} + 4020\,J_{3}\,J_{4}\,J_{6} - 8370\,{J_{2}}^{2}\,J_{3}\,J_{6}
   \\
   + 162\,{J_{4}}^{2}\,J_{5} + 7371\,{J_{2}}^{2}\,J_{4}\,J_{5} + 2880\,J_{2}\,{J_{3}}^{2}\,J_{5} - 3483\,{J_{2}}^{4}\,J_{5} - 9216\,J_{2}\,J_{3}\,{J_{4}}^{2}
   \\
   + 640\,{J_{3}}^{3}\,J_{4} + 11946\,{J_{2}}^{3}\,J_{3}\,J_{4} - 720\,{J_{2}}^{2}\,{J_{3}}^{3} - 2160\,{J_{2}}^{5}\,J_{3} = 0
\end{multline}
\begin{multline}\label{eq:Syz_D2_f}
   60750\,{J_{7}}^{2} + 178200\,J_{3}\,J_{4}\,J_{7} - 546750\,{J_{2}}^{2}\,J_{3}\,J_{7} + 3780\,{J_{4}}^{2}\,J_{6} - 246780\,{J_{2}}^{2}\,J_{4}\,J_{6}
   \\
    + 348000\,J_{2}\,{J_{3}}^{2}\,J_{6} + 137025\,{J_{2}}^{4}\,J_{6} + 116640\,J_{2}\,J_{3}\,J_{4}\,J_{5} - 75600\,{J_{3}}^{3}\,J_{5}
   \\
    + 223560\,{J_{2}}^{3}\,J_{3}\,J_{5} + 29808\,J_{2}\,{J_{4}}^{3} + 82170\,{J_{3}}^{2}\,{J_{4}}^{2} + 177660\,{J_{2}}^{3}\,{J_{4}}^{2}
   \\
   - 438390\,{J_{2}}^{2}\,{J_{3}}^{2}\,J_{4} - 148014\,{J_{2}}^{5}\,J_{4} + 17200\,J_{2}\,{J_{3}}^{4} + 102000\,{J_{2}}^{4}\,{J_{3}}^{2} + 22221\,{J_{2}}^{7} = 0
\end{multline}
\end{subequations}

If these relations are satisfied, the elementary symmetric polynomials $\sigma_{1}$, $\sigma_{2}$ and $\sigma_{3}$ can be expressed, on the open stratum defined by
\begin{equation*}
    6\,J_{6}-9\,J_{2}\,J_{4}-20\,{J_{3}}^{2}+3\,{J_{2}}^{3} \ne 0 ,
\end{equation*}
as rational expressions of $J_{2},\dotsc,J_{7}$. More precisely, we have
\begin{equation*}
	\sigma_{{2}}= \frac{4}{7}\,{\sigma_{{1}}}^{2}-\frac{1}{14}\,J_{{2}}, \quad
	\sigma_{{3}}= \frac{1}{24}\,J_{{3}} + \frac{1}{7}\,{\sigma_{{1}}}^{3}-{\frac {1}{56}}\,\sigma_{{1}}\,J_{{2}},
\end{equation*}
and
\begin{equation*}
    \sigma_{1} = - \frac{9\,\left( 3\,J_{7}-3\,J_{2}\,J_{5}+3\,J_{3}\,J_{4}-{J_{2}}^{2}\,J_{3}\right) }{2\,\left( 6\,J_{6}-9\,J_{2}\,J_{4}-20\,{J_{3}}^{2}+3\,{J_{2}}^{3}\right) }.
\end{equation*}

Then $\lambda_{1},\lambda_{2},\lambda_{3}$ are recovered (up to their monodromy group) as the roots of the third degree polynomial in $\lambda$
\begin{equation*}
    p(\lambda)=\lambda^{3}-\sigma_{1}\lambda^{2}+\sigma_{2}\lambda-\sigma_{3}
\end{equation*}
But to ensure that given $J_{2},\dotsc ,J_{7}\in\RR$, we can find $\lambda_{1},\lambda_{2},\lambda_{3}\in\RR$, the following inequalities have to be satisfied
\begin{equation*}
    \Delta_{3} = \sigma_{1}^{2}\sigma_{2}^{2}-4\sigma_{1}^{3}\sigma_{3}+18\sigma_{1}\sigma_{2}\sigma_{3}^{2}-27\sigma_{3}^{2} \ge 0 \, ,
\end{equation*}
as well as
\begin{equation*}
    \Delta_{2} = 2\sigma_{1}^{2} - 6\sigma_{2} \ge 0 \, .
\end{equation*}
The discriminant $\Delta_{3}$ can be rewritten, using $J_{2},\dotsc,J_{6}$, as
\begin{equation*}
    \Delta_{3} = \frac{6\,J_{6}-9\,J_{2}\,J_{4}-20\,{J_{3}}^{2}+3\,{J_{2}}^{3}}{432} \, .
\end{equation*}
whereas $\Delta_{2}$ can be rewritten, using $J_{2},\dotsc,J_{7}$, as
\begin{equation*}
    \Delta_{2} = \frac{N_{2}}{14\,{ \left( 6\,J_{6} - 9\,J_{2}\,J_{4} - 20\,{J_{3}}^{2} + 3\,{J_{2}}^{3}\right) }^{2}}
\end{equation*}
where
\begin{equation*}
\begin{split}
N_{2} & = 3645\,{J_{7}}^{2} + 7290\,J_{2}\,J_{5}\,J_{7} - 7290\,J_{3}\,J_{4}\,J_{7} + 2430\,{J_{2}}^{2}\,J_{3}\,J_{7} - 36\,J_{2}\,{J_{6}}^{2}\\
      & \quad  + 108\,{J_{2}}^{2}\,J_{4}\,J_{6} + 240\,J_{2}\,{J_{3}}^{2}\,J_{6} - 36\,{J_{2}}^{4}\,J_{6} - 3645\,{J_{2}}^{2}\,{J_{5}}^{2} + 7290\,J_{2}\,J_{3}\,J_{4}\,J_{5} \\
      & \quad - 2430\,{J_{2}}^{3}\,J_{3}\,J_{5} - 3645\,{J_{3}}^{2}\,{J_{4}}^{2} - 81\,{J_{2}}^{3}\,{J_{4}}^{2} + 2070\,{J_{2}}^{2}\,{J_{3}}^{2}\,J_{4} \\
      & \quad + 54\,{J_{2}}^{5}\,J_{4} - 400\,J_{2}\,{J_{3}}^{4} - 285\,{J_{2}}^{4}\,{J_{3}}^{2} - 9\,{J_{2}}^{7}.
\end{split}
\end{equation*}

We will summarize these results in the following proposition.

\begin{prop}\label{prop:orthotropic}
A harmonic tensor $D \in \HH^{4} $ is in the closed stratum $\cstrata{\DD_{2}}$ if and only if the invariants $J_{2}(D), \cdots ,J_{10}(D)$ satisfy the polynomial relations~\eqref{eqs:Syz_D2}, completed with the 3 equations
\begin{equation}\label{eq:Syz_D2_g}
    S_{k}(J_{2}, \dotsc J_{10}) =0, \qquad k = 8, 9, 10
\end{equation}
obtained from $J_{k} - p_{k}(\sigma_{1},\sigma_{2},\sigma_{3})$, where we have substituted the $\sigma_{i}$ by their corresponding rational expression, and we have cleared the denominators. The inequalities
\begin{equation}\label{eq:discriminant}
    6\,J_{6}-9\,J_{2}\,J_{4}-20\,{J_{3}}^{2}+3\,{J_{2}}^{3} \ge 0
\end{equation}
as well as
\begin{equation}
\begin{split}
 & 3645\,{J_{7}}^{2} + 7290\,J_{2}\,J_{5}\,J_{7} - 7290\,J_{3}\,J_{4}\,J_{7} + 2430\,{J_{2}}^{2}\,J_{3}\,J_{7} - 36\,J_{2}\,{J_{6}}^{2} + 108\,{J_{2}}^{2}\,J_{4}\,J_{6} \\
 & + 240\,J_{2}\,{J_{3}}^{2}\,J_{6} - 36\,{J_{2}}^{4}\,J_{6} - 3645\,{J_{2}}^{2}\,{J_{5}}^{2} + 7290\,J_{2}\,J_{3}\,J_{4}\,J_{5} - 2430\,{J_{2}}^{3}\,J_{3}\,J_{5} \\
 & - 3645\,{J_{3}}^{2}\,{J_{4}}^{2} - 81\,{J_{2}}^{3}\,{J_{4}}^{2} + 2070\,{J_{2}}^{2}\,{J_{3}}^{2}\,J_{4} + 54\,{J_{2}}^{5}\,J_{4} \\
 &  - 400\,J_{2}\,{J_{3}}^{4} - 285\,{J_{2}}^{4}\,{J_{3}}^{2} - 9\,{J_{2}}^{7} \ge 0
\end{split}
\end{equation}
are also required. $D$ is in the orthotropic class $[\DD_{2}]$ if moreover inequality~\eqref{eq:discriminant} is strict. In that case, it admits the normal form~\eqref{eq:orthotropic_case} where $\lambda_{1},\lambda_{2},\lambda_{3}$ are the roots of the third degree polynomial
\begin{equation*}
    p(\lambda)=\lambda^{3}-\sigma_{1}\lambda^{2}+\sigma_{2}\lambda-\sigma_{3} .
\end{equation*}
\end{prop}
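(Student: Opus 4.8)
The plan is to apply the general implicitization scheme of \autoref{subsec:implicitization} to the isotropy class $[\DD_{2}]$ inside $V=\HH^{4}$, whose linear slice $V^{\DD_{2}}$ is the three-dimensional space of tensors of the form~\eqref{eq:orthotropic_case} and whose monodromy group is $\Gamma^{\DD_{2}}\simeq\GS_{3}$ acting on $(\lambda_{1},\lambda_{2},\lambda_{3})$ by permutation. Since $D\in\cstrata{\DD_{2}}$ if and only if the $\SO(3)$-orbit of $D$ meets $V^{\DD_{2}}$ (\autoref{subsec:fixed_point_sets}), I would reduce, after acting by $\SO(3)$, to the case where $D$ is already in the normal form~\eqref{eq:orthotropic_case}; the residual freedom is then exactly the $\GS_{3}$-action, and the whole statement amounts to describing the image of $V^{\DD_{2}}$ under $D\mapsto(J_{2}(D),\dots,J_{10}(D))$ and singling out the open sub-stratum.

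For the \emph{necessity} of the relations, I would first record that each $J_{k}$, $2\le k\le 10$, restricted to $V^{\DD_{2}}$ is a $\GS_{3}$-invariant polynomial in $\lambda_{1},\lambda_{2},\lambda_{3}$, hence a polynomial $p_{k}(\sigma_{1},\sigma_{2},\sigma_{3})$ in the elementary symmetric functions --- these are the explicit expressions displayed before the proposition, obtained by direct, computer-assisted evaluation of~\eqref{eq:boehler_invariants}. The six relations~\eqref{eqs:Syz_D2} are then exhibited as generators of the elimination ideal $\langle J_{k}-p_{k}:2\le k\le 7\rangle\cap\RR[J_{2},\dots,J_{7}]$ extracted from a Gr\"obner basis, and the three extra syzygies~\eqref{eq:Syz_D2_g} are obtained by substituting the rational formulas for $\sigma_{1},\sigma_{2},\sigma_{3}$ into $J_{k}-p_{k}$ for $k=8,9,10$ and clearing denominators. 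All nine vanish identically on $V^{\DD_{2}}$, hence by continuity on $\cstrata{\DD_{2}}$. As for the inequalities, on $V^{\DD_{2}}$ one has $\Delta_{3}=(\lambda_{1}-\lambda_{2})^{2}(\lambda_{2}-\lambda_{3})^{2}(\lambda_{1}-\lambda_{3})^{2}\ge 0$ and $\Delta_{2}=\sum_{i<j}(\lambda_{i}-\lambda_{j})^{2}\ge 0$ by inspection; rewriting $\Delta_{3}$ and $\Delta_{2}$ through the invariants (as in the displayed formulas preceding the statement) yields exactly~\eqref{eq:discriminant} and the second required inequality, which therefore hold on the whole closed stratum.

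For \emph{sufficiency}, I would start from a tuple $(J_{2},\dots,J_{10})\in\RR^{9}$ satisfying~\eqref{eqs:Syz_D2}, \eqref{eq:Syz_D2_g} and the two inequalities. On the locus where~\eqref{eq:discriminant} is strict, i.e.\ $6J_{6}-9J_{2}J_{4}-20J_{3}^{2}+3J_{2}^{3}\ne 0$, the displayed rational formulas recover $\sigma_{1},\sigma_{2},\sigma_{3}$ unambiguously (two rational expressions for the same $\sigma_{i}$ differ by an element of the syzygy ideal, c.f.\ \autoref{subsec:implicitization}). I would then form the cubic $p(\lambda)=\lambda^{3}-\sigma_{1}\lambda^{2}+\sigma_{2}\lambda-\sigma_{3}$ and apply Hermite's criterion (\autoref{subsec:implicitization}): its three roots are real iff the associated Hankel form is non-negative, i.e.\ iff its principal minors $\Delta_{2}$ and $\Delta_{3}$ are non-negative --- precisely the two imposed inequalities. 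Feeding the recovered $\lambda_{1},\lambda_{2},\lambda_{3}$ into~\eqref{eq:orthotropic_case} produces a genuine $\DD_{2}$-invariant harmonic tensor $D\in\cstrata{\DD_{2}}$ whose invariants $J_{2},\dots,J_{7}$ are the prescribed ones (and $J_{8},J_{9},J_{10}$ then agree by~\eqref{eq:Syz_D2_g}), well defined as an $\SO(3)$-orbit since the ambiguity in the $\lambda_{i}$ is exactly the $\GS_{3}$-monodromy. When~\eqref{eq:discriminant} is moreover strict, $\Delta_{3}>0$ forces $\lambda_{1},\lambda_{2},\lambda_{3}$ pairwise distinct; comparing~\eqref{eq:orthotropic_case} with the normal forms~\eqref{eq:transversly_isotropic_case}, \eqref{eq:tetragonal_case}, \eqref{eq:cubic_case} of the strata lying above $[\DD_{2}]$ in \autoref{fig:lattice1} --- each of which forces a coincidence among the $\lambda_{i}$, a condition stable under the $\GS_{3}$-monodromy --- shows that the isotropy of $D$ is then exactly $[\DD_{2}]$, so $D\in\strata{\DD_{2}}$. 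Finally the residual locus $\Delta_{3}=0$ inside $\cstrata{\DD_{2}}$ corresponds to repeated roots, and the associated orbits belong to one of the larger closed strata $\cstrata{\DD_{4}}$, $\cstrata{\OO(2)}$, $\cstrata{\octa}$, $\cstrata{\SO(3)}$, all contained in $\cstrata{\DD_{2}}$; this completes the description of the closed stratum.

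The main obstacle will be the elimination step producing~\eqref{eqs:Syz_D2} and~\eqref{eq:Syz_D2_g}: besides the size of the Gr\"obner basis computation, the conceptual point to control is the one flagged in \autoref{subsec:implicitization}, namely that the real algebraic set cut out by the syzygies does not strictly contain the parametric image of $V^{\DD_{2}}$. This is handled by observing that the rational inversion~\eqref{eq:method_solutions} provides a one-sided inverse to the parametrization on the open stratum --- pinning the variety down there --- while the residual locus $\Delta_{3}=0$ is lower-dimensional and, as noted, absorbed into the already-treated larger closed strata. A secondary, milder subtlety is the passage from $\CC$ to $\RR$: the syzygies alone characterize complex solutions, and the two inequalities are exactly Hermite's conditions forcing $p(\lambda)$ to have three real roots.
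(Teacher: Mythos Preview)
Your proposal is correct and follows essentially the same approach as the paper: the computations in \autoref{subsec:orthotropic} preceding the proposition (evaluation of the $J_{k}$ on the $\DD_{2}$-slice in terms of $\sigma_{1},\sigma_{2},\sigma_{3}$, Gr\"obner-basis elimination to obtain~\eqref{eqs:Syz_D2}, rational inversion for the $\sigma_{i}$, and Hermite's criterion for the reality of the $\lambda_{i}$) constitute the paper's argument, and your necessity/sufficiency structuring is simply an explicit packaging of that same scheme from \autoref{subsec:implicitization}. Your additional remark that the boundary locus $\Delta_{3}=0$ is absorbed into the higher closed strata $\cstrata{\DD_{4}}$, $\cstrata{\OO(2)}$, $\cstrata{\octa}$, $\cstrata{\SO(3)}$ makes explicit a point the paper leaves implicit.
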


\begin{rem}
Even if it is the core of the method, it worths to note that, instead of $\lambda_{1},\lambda_{2},\lambda_{3}$, the 3 numbers $\sigma_{1},\sigma_{2},\sigma_{3}$ are uniquely defined on the open stratum: they are invariants of $D$ and can be written as rational expressions of $(J_{2},\dotsc,J_{7})$. It should be noticed that it is also possible to express $\sigma_{1},\sigma_{2},\sigma_{3}$ as rational functions of \emph{only} $(J_{2},\dotsc,J_{5})$ on an \emph{open subset} of $\strata{\DD_{2}}$. We will, however, not give these formulas here.
\end{rem}

\begin{rem}
All second rank tensors in $\HH^{2}$ covariant to $D$ commute together and the commutators of two of any of them vanishes.
\end{rem}

% ----------------------------------------------------------------

\subsection{Bifurcation conditions for tensors in $\HH^{4}$}
\label{subsec:bifurcation_H4}

An important observation has to be made. Mathematical results \cite{Shi67} tell us that the invariant algebra of $\HH^{4}$ is generated by the 9 fundamental invariants $(J_{2},\dotsc,J_{10})$. In this set 6 invariants $(J_{2}, \dotsc , J_{7})$ are algebraically independent, meanwhile the others are linked to the formers by polynomial relations.
As discussed before, it is a well-known fact that this algebra separates the orbits. But as demonstrated exhaustively in this section, for any class with a finite monodromy group\footnote{i.e. for all symmetry classes, except the monoclinic and the triclinic classes}, the first 6 invariants $(J_{2}, \dotsc , J_{7})$ are necessary and sufficient to separate the orbits inside each class. This observation can be summed-up by the following  important theorem.

\begin{thm}
For the following isotropy classes: the \emph{cubic}, the \emph{transversely isotropic}, the \emph{trigonal} the \emph{tetragonal} (and the trivial \emph{isotropic}) classes in $\HH^{4}(\RR^{3})$, the first 4 invariants $(J_{2}, \dotsc , J_{5})$ separate the orbits inside the class. For the \emph{orthotropic} class, the first 6 invariants $(J_{2}, \dotsc , J_{7})$ separate the orbits inside the class.
\end{thm}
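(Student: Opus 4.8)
The plan is to read off the statement from the explicit normal forms and parametrizations established in Propositions~\ref{prop:cubic}, \ref{prop:transversely_isotropic}, \ref{prop:trigonal}, \ref{prop:tetragonal} and~\ref{prop:orthotropic}. The underlying principle is the monodromy reduction of \autoref{subsec:linear_slices}: for each isotropy class $[H]$ appearing in the statement, the set of $\SO(3)$-orbits contained in $\strata{H}$ is in bijection with $(V^{H}\cap\strata{H})/\Gamma^{H}$, where $V^{H}$ is the linear slice and $\Gamma^{H}=N(H)/H$ the (finite) monodromy group. Since the $\Gamma^{H}$-action on the slice is realized by elements of $N(H)\subset\SO(3)$, two tensors $D,D'\in\strata{H}$ lie in the same $\SO(3)$-orbit if and only if, once brought to normal form by rotations, their slice parameters agree modulo $\Gamma^{H}$. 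Hence it suffices to verify, class by class, that the listed invariants determine the slice parameters modulo $\Gamma^{H}$ — and this computation has, in effect, already been performed in the preceding subsections.

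First I would dispose of the cubic and transversely isotropic classes, where $V^{H}$ is a line with coordinate $\delta$ and $\Gamma^{H}$ is trivial. The parametric evaluations preceding Propositions~\ref{prop:cubic} and~\ref{prop:transversely_isotropic} give $\delta=J_{3}/(4J_{2})$, respectively $\delta=7J_{3}/(18J_{2})$, which is well defined on the open stratum (where $J_{2}\neq0$); so the orbit is recovered already from $(J_{2},J_{3})$. Next, for the trigonal and tetragonal classes, $V^{H}$ has coordinates $(\delta,\sigma)$ and $\Gamma^{H}=\GS_{2}$ acts by $\sigma\mapsto-\sigma$. The formulas before Propositions~\ref{prop:trigonal} and~\ref{prop:tetragonal} give $\delta=-\tfrac14 J_{5}/(J_{2}^{2}-3J_{4})$, a rational expression valid on the open stratum since there $J_{2}^{2}-3J_{4}\neq0$; then $\sigma^{2}$ is pinned down by the degree-two relation~\eqref{eq:Syz_D3} (resp.~\eqref{eq:Syz_D4}). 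Thus $(\delta,\sigma)$ is determined up to sign by $(J_{2},J_{4},J_{5})$, which is exactly its $\GS_{2}$-orbit. In all four cases the invariants used belong to $\{J_{2},\dots,J_{5}\}$, and the trivial isotropic class contributes only the single orbit $\{0\}$.

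For the orthotropic class, $V^{H}$ has coordinates $(\lambda_{1},\lambda_{2},\lambda_{3})$ with $\Gamma^{H}=\GS_{3}$ acting by permutation, so an orbit in $\strata{\DD_{2}}$ is the same datum as the unordered triple $\{\lambda_{1},\lambda_{2},\lambda_{3}\}$, equivalently as $(\sigma_{1},\sigma_{2},\sigma_{3})$. The discussion before Proposition~\ref{prop:orthotropic} writes $\sigma_{1}$ as a rational function of $(J_{2},\dots,J_{7})$, well defined on the open stratum where $6J_{6}-9J_{2}J_{4}-20J_{3}^{2}+3J_{2}^{3}\neq0$, and then $\sigma_{2},\sigma_{3}$ as polynomials in $\sigma_{1}$ and $J_{2},J_{3}$. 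Hence $(\sigma_{1},\sigma_{2},\sigma_{3})$, and therefore the orbit, is determined by $(J_{2},\dots,J_{7})$. Assembling the five cases yields the theorem.

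The step requiring the most care is not any individual calculation but the point, already built into the genericity hypotheses of Propositions~\ref{prop:trigonal}, \ref{prop:tetragonal} and~\ref{prop:orthotropic}, that the denominators of these rational expressions for the slice parameters are nowhere zero on the relevant open stratum, so that the parameters are genuinely recovered on all of $\strata{H}$ and not merely on a dense open subset; equivalently, one must confirm that the residual invariants $J_{6},J_{7}$ (for the first four classes) and $J_{8},J_{9},J_{10}$ are functions of the smaller set throughout the stratum, as the syzygy systems of the cited propositions show. The remaining ingredient — that matching slice parameters modulo $\Gamma^{H}$ forces an $\SO(3)$-equivalence because $\Gamma^{H}$ is induced from $N(H)\subset\SO(3)$ — is exactly \autoref{subsec:linear_slices} and needs only to be invoked.
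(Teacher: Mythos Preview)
Your proposal is correct and follows precisely the paper's approach: the theorem is stated as a summary of what was ``demonstrated exhaustively in this section,'' and the paper offers no separate proof beyond the case-by-case analyses of Propositions~\ref{prop:cubic}--\ref{prop:orthotropic}, from which you correctly read off the rational parametrizations of the slice coordinates modulo the monodromy action. Your attention to the non-vanishing of the denominators on each open stratum is exactly the point that makes the argument work globally on $\strata{H}$ rather than just generically.
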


We will now summarized the previous results by giving bifurcation conditions on the $J_{k}$ to explicit how to ``travel'' from a given isotropy class to another. More precisely, so far, we have given \emph{necessary and sufficient} conditions to belong to a closed stratum $\cstrata{H}$ with finite monodromy group. As we have seen, $\cstrata{H}$ is characterized by a set of polynomial relations (syzygies) involving the invariants $J_{k}$. These relations define \emph{implicit equations} which characterize the closed stratum~\footnote{Notice that a rigorous and complete classification of these closed strata would require also some real inequalities that we have not completely detailed here.}.
The set of relations that characterizes a given (closed) stratum generates an \emph{ideal} (c.f. \autoref{subsec:Grobner}), $I_{[H]}$ in $\RR[J_{2},\dotsc,J_{10}]$, and we have
\begin{equation*}
    [H_{1}]\preceq[H_{2}] \quad \Rightarrow \quad I_{[H_{2}]} \subset I_{[H_{1}]}
\end{equation*}
The bifurcation conditions, detailed on figure~\ref{fig:bifurcation}, correspond to relations (generators) which need to be added to $I_{[H_{1}]}$ to obtain $I_{[H_{2}]}$.

\begin{figure}[h]
\begin{center}
    \includegraphics{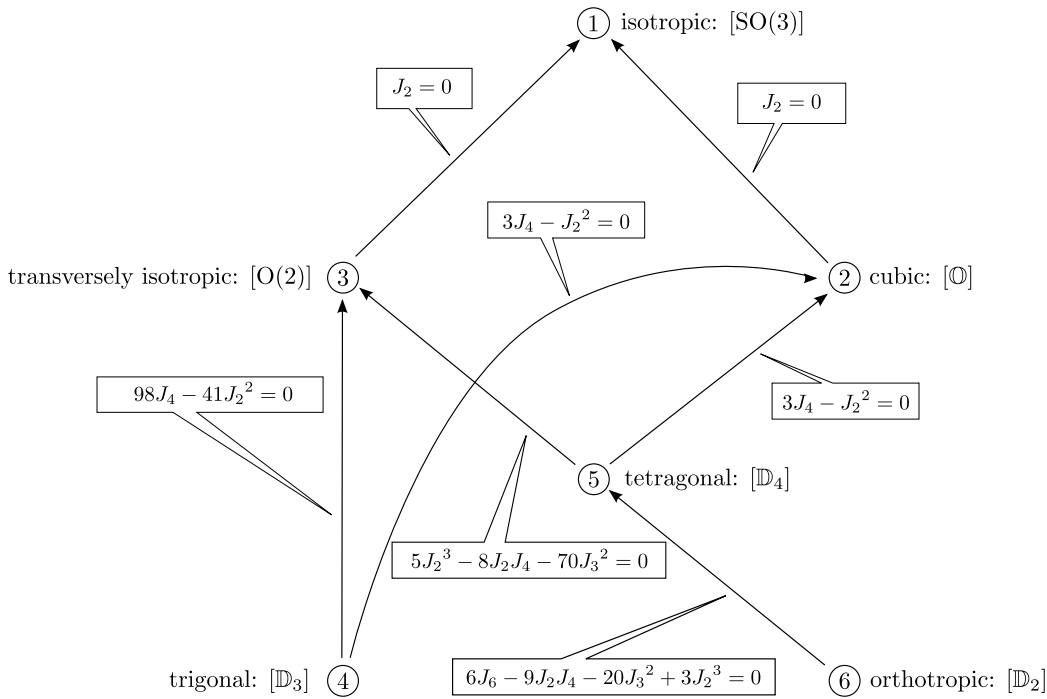}
    \caption{Bifurcation paths for finite monodromy isotropy classes in $\HH^{4}$.}
    \label{fig:bifurcation}
\end{center}
\end{figure}

\begin{rem}
These bifurcation relations were obtained with the following algorithm. Starting with some subgroup $H$ and the corresponding fixed point space $V^{H}$, we look for linear relations needed to define the subspace $V^{K}$ of $V^{H}$, where $H \subset K$. Given such a relation $\alpha = 0$ (in fact, in all the cases considered, only one relation was needed), we compute a $\Gamma^{H}$-invariant polynomial
\begin{equation*}
    p_{\alpha} := \prod_{\gamma\in \Gamma^{H}} \gamma \cdot \alpha \, ,
\end{equation*}
provided $\Gamma^{H}$ is finite. It is then clear that $p_{\alpha}$ vanishes on a point $v\in V^{H}$ if and only if $v$ belongs to $V^{K}$ or $V^{\gamma K \gamma^{-1}}$, for some $\gamma \in \Gamma^{H}$, that is if and only if $v$ has at least isotropy $[K]$. This $\Gamma^{H}$-invariant polynomial is then expressed using the $J_{k}$.
\end{rem}

% ----------------------------------------------------------------

\subsection{Relation with the characteristic polynomial}
\label{subsec:characteristic_polynomial}

To conclude this subsection, some links that exists between the invariants of $\HH^{4}$ and the characteristic polynomial of $\underline{D}$ will be investigated. The first link is formulated by the following proposition:
\begin{prop}\label{eq:characteristic_prop}
Let $\underline{D} \in \HH^{4}$.
The coefficients of the characteristic polynomial of $\underline{D}$ are {$\SO(3)$}-invariants of $D$.
They can be expressed in terms of the invariants $J_{2}, \dotsc , J_{5}$ defined above.
We have
\begin{equation}\label{eq:characteristic_pol}
    \chi_{D}(z) = {z}^{6} - \frac{1}{2}\,J_{2}{z}^{4} - \frac{1}{3}\,J_{3}{z}^{3} + \frac{1}{5}\,\left( {J_{2}}^{2} - 2\,J_{4} \right){z^{2}} + {\frac {2}{25}} \left( \,J_{2}J_{3} - {3}\,J_{5} \right)z .
\end{equation}
\end{prop}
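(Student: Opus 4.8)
\emph{Approach.} The plan is to show that each coefficient of $\chi_{D}$ is an $\SO(3)$-invariant polynomial of $D$ of a prescribed degree of homogeneity, to use the known generators of the invariant algebra of $\HH^{4}$ to pin down the possible shape of these coefficients, and finally to fix the numerical constants by evaluating on the explicit normal forms already computed in this section. The first point is immediate: the $\SO(3)$-action on $\mathbf{C}$, hence on its harmonic component $D$, corresponds at the level of the matrix $\underline{D}$ to the conjugation $\underline{D}\mapsto R(g)\underline{D}R(g)^{-1}$, where $R(g)$ is the operator induced by $g$ on $S^{2}(\RR^{3})$ endowed with $\sprod{a}{b}=\tr(ab)$ and is therefore orthogonal. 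Thus $\chi_{D}$ is $\SO(3)$-invariant, and writing
\begin{equation*}
    \chi_{D}(z)=z^{6}-e_{1}z^{5}+e_{2}z^{4}-e_{3}z^{3}+e_{4}z^{2}-e_{5}z+e_{6},
\end{equation*}
each $e_{k}(D)$ is an $\SO(3)$-invariant polynomial, homogeneous of degree $k$ because $\underline{D}$ depends linearly on $D$. It is equivalent, and more convenient, to work with the power sums $p_{k}:=\tr(\underline{D}^{k})$, which are manifestly invariant and homogeneous of degree $k$; Newton's identities recover the $e_{k}$ from them.

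\emph{Constraining the coefficients.} Two structural remarks cut the problem down. First, $\HH^{4}$ being an irreducible, non-trivial $\SO(3)$-module, there is no non-zero $\SO(3)$-invariant of degree one, so $p_{1}=\tr\underline{D}=0$ and $e_{1}=0$. Second, harmonicity of $D$, i.e.\ $D_{ijkl}\delta_{kl}=0$, says exactly that $\mathbf{q}=(\delta_{ij})\in\ker\underline{D}$; hence $\det\underline{D}=0$, i.e.\ $e_{6}=0$. For the intermediate degrees I would invoke the description of the invariant algebra recalled above (generated by $J_{2},\dots,J_{10}$ with $J_{k}$ homogeneous of degree $k$, and $J_{2},\dots,J_{7}$ algebraically independent, so with no syzygy in degrees $\le 5$): the homogeneous invariants of degrees $2,3,4,5$ are spanned respectively by $J_{2}$; by $J_{3}$; by $J_{4}$ and $J_{2}^{2}$; and by $J_{5}$ and $J_{2}J_{3}$ (there being no degree-one generator to form further products). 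Hence there are universal constants with
\begin{equation*}
    p_{2}=\alpha\,J_{2},\qquad p_{3}=\beta\,J_{3},\qquad p_{4}=\gamma\,J_{4}+\gamma'\,J_{2}^{2},\qquad p_{5}=\mu\,J_{5}+\mu'\,J_{2}J_{3}.
\end{equation*}

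\emph{Fixing the constants and assembling.} I would determine $\alpha,\beta,\gamma,\gamma',\mu,\mu'$ by evaluating both $p_{k}$ and $J_{k}$ on explicit slices already at hand. The transversely isotropic normal form~\eqref{eq:transversly_isotropic_case}, whose invariants are listed alongside it, yields $\alpha=\beta=1$ and one linear relation among $\gamma,\gamma'$ (resp.\ $\mu,\mu'$); since $J_{2}^{2}$ is proportional to $J_{4}$ and $J_{2}J_{3}$ to $J_{5}$ on any one-parameter slice, I would separate them by also evaluating on a two-parameter slice — the trigonal~\eqref{eq:trigonal_case} or the tetragonal~\eqref{eq:tetragonal_case} normal form — or, once and for all, on the generic nine-parameter parametrization of \autoref{subsec:global_parametrization}. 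This gives $p_{2}=J_{2}$, $p_{3}=J_{3}$, $p_{4}=\tfrac{8}{5}J_{4}-\tfrac{3}{10}J_{2}^{2}$ and $p_{5}=\tfrac{6}{5}J_{5}+\tfrac{13}{30}J_{2}J_{3}$. Finally, Newton's identities with $p_{1}=e_{1}=e_{6}=0$ read $e_{2}=-\tfrac12 p_{2}$, $e_{3}=\tfrac13 p_{3}$, $e_{4}=\tfrac18(p_{2}^{2}-2p_{4})$, $e_{5}=\tfrac15 p_{5}-\tfrac16 p_{2}p_{3}$, and substituting the values above into $\chi_{D}(z)=z^{6}+e_{2}z^{4}-e_{3}z^{3}+e_{4}z^{2}-e_{5}z$ yields the claimed formula~\eqref{eq:characteristic_pol}.

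\emph{Main obstacle.} The conceptual content is light, and the only step demanding genuine care is the last one: a single one-parameter normal form cannot distinguish $J_{4}$ from $J_{2}^{2}$ (nor $J_{5}$ from $J_{2}J_{3}$), so one must really use a two-parameter slice or the full parametrization to pin down $\gamma,\gamma',\mu,\mu'$; and one relies on the classical fact — already used throughout this section — that the invariant algebra of $\HH^{4}$ carries no syzygy in degrees $\le 5$, which is what makes the spanning sets of invariants exact in those degrees.
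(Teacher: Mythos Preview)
Your proposal is correct and follows essentially the same strategy as the paper's proof: both argue that the coefficients of $\chi_{D}$ are homogeneous $\SO(3)$-invariants of the appropriate degree (with the constant term vanishing because $\mathbf{q}\in\ker\underline{D}$), invoke the known generating set $J_{2},\dots,J_{10}$ to conclude that each coefficient is a linear combination of the obvious monomials in $J_{2},\dots,J_{5}$, and then pin down the numerical constants by specialization on the explicit normal forms already computed in this section. Your route through the power sums $p_{k}=\tr(\underline{D}^{k})$ and Newton's identities is a cosmetic variation, and your observation that a single one-parameter slice cannot separate $J_{4}$ from $J_{2}^{2}$ (nor $J_{5}$ from $J_{2}J_{3}$) is a useful piece of care that the paper's terser proof leaves implicit---note, however, that two \emph{different} one-parameter slices (e.g.\ cubic and transversely isotropic, as the paper suggests) already suffice, since the ratios $J_{2}^{2}/J_{4}$ and $J_{2}J_{3}/J_{5}$ differ between them.
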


\begin{proof}
Notice first that $z=0$ is always an eigenvalue of $\underline{D}$. Indeed, if $\mathbf{q}$ stands for the standard metric on $\RR^{3}$, we have $\underline{D}\mathbf{q} = \tr_{12}(D) = 0$. The coefficient of $z^{k}$ in $\chi_{D}$ is a homogeneous invariant polynomial of degree $6-k$ for $1\leq k \leq 4$. Therefore, it can be expressed as a polynomial  in $J_{2}(D),\cdots J_5(D)$. The computation of this last expression reduces therefore to identify some real coefficients. This can be done by specializing on particular values or on some particular strata where it is easy to compute, the cubic or the transverse isotropic stratum, for instance.
\end{proof}

Besides the characteristic polynomial, Betten \cite{Bet87} has introduced a two variables, invariant polynomial of the elasticity tensor. It is defined as
\begin{equation*}
    \mathcal{B}_{C}(\lambda, \mu) = \det( \underline{C} - \underline{C}_{\lambda,\mu})
\end{equation*}
where
\begin{equation*}
    \underline{C}_{\lambda,\mu} =
    \left(
      \begin{array}{cccccc}
        \lambda + 2\,\mu & \lambda & \lambda & 0 & 0 & 0 \\
        \lambda & \lambda + 2\,\mu & \lambda & 0 & 0 & 0 \\
        \lambda & \lambda & \lambda + 2\,\mu & 0 & 0 & 0 \\
        0 & 0 & 0 & 2\,\mu & 0 & 0 \\
        0 & 0 & 0 & 0 & 2\,\mu & 0 \\
        0 & 0 & 0 & 0 & 0 & 2\,\mu \\
      \end{array}
    \right)
\end{equation*}
is the totally isotropic tensor. Notice that $\mathcal{B}_{D}(0, \mu) = \chi_{D}(2\,\mu)$ and that $\mathcal{B}_{C}$ is of degree lower than 1 in $\lambda$ (indeed, this can be observed by subtracting the first line to the second line and to the third line in the determinant). The supposed interest of this polynomial is to obtain more invariants than the ones of the characteristic polynomial. However, it seems that it does not bring any new invariants for a tensor $D\in\HH^{4}(\RR^{3})$. Indeed, we have the following result.

\begin{lem}\label{lem:betten}
The Betten polynomial of a harmonic tensor $D\in \HH^{4}(\RR^{3})$ is related to its characteristic polynomial by the formula
\begin{equation*}
    \mathcal{B}_{D}(\lambda, \mu) = (3\,\lambda + 2\, \mu)\,\chi_{D}^{r}(2\,\mu)
\end{equation*}
where $\chi_{D}(\lambda) = \lambda\, \chi_{D}^{r}(\lambda)$.
\end{lem}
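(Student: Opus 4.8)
The plan is to exploit the elementary structure of the isotropic matrix $\underline{C}_{\lambda,\mu}$ together with the fact, already noted in the proof of Proposition~\ref{eq:characteristic_prop}, that $\underline{D}\,\mathbf{q} = \tr_{12}(D) = 0$. The first step is to rewrite $\underline{C}_{\lambda,\mu}$ as a rank-one perturbation of a scalar matrix. In the Kelvin representation used in \autoref{subsec:global_parametrization}, the metric $\mathbf{q}$ is the column vector $(1,1,1,0,0,0)^{t}$, and inspecting the entries of $\underline{C}_{\lambda,\mu}$ gives
\[
    \underline{C}_{\lambda,\mu} = 2\mu\,\Id_{6} + \lambda\,\mathbf{q}\,\mathbf{q}^{t}, \qquad \mathbf{q}^{t}\mathbf{q} = \tr(\mathbf{q}^{2}) = 3 .
\]
Hence $\mathcal{B}_{D}(\lambda,\mu) = \det\!\bigl(\underline{D} - 2\mu\,\Id_{6} - \lambda\,\mathbf{q}\,\mathbf{q}^{t}\bigr)$, and the whole lemma reduces to computing this determinant.

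Next I would diagonalize the situation by blocks using the orthogonal decomposition $\RR^{6} = \RR\mathbf{q}\oplus\mathbf{q}^{\perp}$. Since $\underline{D}$ is symmetric and $\mathbf{q}$ is an eigenvector with eigenvalue $0$, both summands are $\underline{D}$-invariant; the rank-one operator $\mathbf{q}\,\mathbf{q}^{t}$ acts by $\|\mathbf{q}\|^{2}=3$ on $\RR\mathbf{q}$ and by $0$ on $\mathbf{q}^{\perp}$; and $\Id_{6}$ preserves both. Thus $\underline{D} - 2\mu\,\Id_{6} - \lambda\,\mathbf{q}\,\mathbf{q}^{t}$ is block-diagonal, equal to the scalar $-(2\mu+3\lambda)$ on $\RR\mathbf{q}$ and to $\underline{D}|_{\mathbf{q}^{\perp}} - 2\mu\,\Id_{5}$ on $\mathbf{q}^{\perp}$, so that
\[
    \mathcal{B}_{D}(\lambda,\mu) = -(2\mu + 3\lambda)\,\det\!\bigl(\underline{D}|_{\mathbf{q}^{\perp}} - 2\mu\,\Id_{5}\bigr) .
\]

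The third step identifies the $5\times 5$ determinant with the reduced characteristic polynomial. From the same splitting, $\chi_{D}(z) = \det(z\,\Id_{6} - \underline{D}) = z\cdot\det(z\,\Id_{5} - \underline{D}|_{\mathbf{q}^{\perp}})$, hence $\chi_{D}^{r}(z) = \det(z\,\Id_{5} - \underline{D}|_{\mathbf{q}^{\perp}})$ and therefore $\det(\underline{D}|_{\mathbf{q}^{\perp}} - 2\mu\,\Id_{5}) = (-1)^{5}\chi_{D}^{r}(2\mu) = -\chi_{D}^{r}(2\mu)$. Substituting into the previous line yields $\mathcal{B}_{D}(\lambda,\mu) = (3\lambda + 2\mu)\,\chi_{D}^{r}(2\mu)$, which is the claim.

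There is no genuine obstacle in this argument; the only place that demands attention is the normalization, namely that in the Kelvin convention $\mathbf{q}^{t}\mathbf{q} = \tr(\mathbf{q}^{2}) = 3$, which is exactly what produces the coefficient $3\lambda$ (rather than $\lambda$) in the final formula. As a cross-check one could instead apply the matrix determinant lemma, $\det(M - \lambda\,\mathbf{q}\,\mathbf{q}^{t}) = \det(M)\,(1 - \lambda\,\mathbf{q}^{t}M^{-1}\mathbf{q})$ with $M = \underline{D} - 2\mu\,\Id_{6}$ and $M\mathbf{q} = -2\mu\,\mathbf{q}$, obtaining the identity for generic $(\lambda,\mu)$ and then everywhere by polynomiality; the invariant-subspace computation above has the advantage of establishing the polynomial identity directly.
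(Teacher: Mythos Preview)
Your argument is correct. It rests on the same single fact the paper uses, namely $\underline{D}\mathbf{q}=0$, but you organize the computation differently. The paper performs the row operation $L_{1}\to L_{1}+L_{2}+L_{3}$ to factor out $(3\lambda+2\mu)$, then invokes the separately observed bound $\deg_{\lambda}\mathcal{B}_{D}\le 1$ to conclude that the remaining factor is independent of $\lambda$, and finally evaluates at $\lambda=0$ to identify it with $\chi_{D}^{r}(2\mu)$. You instead recognize $\underline{C}_{\lambda,\mu}=2\mu\,\Id_{6}+\lambda\,\mathbf{q}\mathbf{q}^{t}$ as a rank-one perturbation and block-diagonalize along $\RR\mathbf{q}\oplus\mathbf{q}^{\perp}$; this yields both the factor $(3\lambda+2\mu)$ and the identification of the cofactor with $\chi_{D}^{r}(2\mu)$ in one stroke, with no need for the degree argument or the evaluation at $\lambda=0$. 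Your route is slightly more conceptual and self-contained; the paper's is closer to a bare determinant manipulation. Either way the content is the same.
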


\begin{proof}
Let $L_{k}$ ($1\le k \le 6$) denote the lines of the determinant $\det( \underline{D} - \underline{C}_{\lambda,\mu})$. For a harmonic tensor $D$, we have
\begin{equation*}
    \sum_{k=1}^{3} d_{p,k} = 0, \qquad 1 \le p \le 6 .
\end{equation*}
Thus, substituting $L_{1} \to L_{1} + L_{2} + L_{3}$ in the determinant $\det( \underline{D} - \underline{C}_{\lambda,\mu})$, we get
\begin{equation*}
    \mathcal{B}_{D}(\lambda, \mu) = (3\,\lambda + 2\, \mu)\,P(\lambda, \mu)
\end{equation*}
but since $\mathcal{B}_{D}(\lambda, \mu)$ is of degree lower than 1 in $\lambda$ we have $P(\lambda, \mu) = P(0, \mu)$. Now for $\lambda = 0$ we have
\begin{equation*}
    \mathcal{B}_{D}(0, \mu) = \chi_{D}(2\,\mu) = (2\, \mu)\,P(0, \mu)
\end{equation*}
from which we get $P(0, \mu) = \chi_{D}^{r}(2\,\mu)$. This achieves the proof.
\end{proof}

\begin{cor}
The invariants defined by the coefficients of the Betten polynomial do not separate the orbits of $\Ela$.
\end{cor}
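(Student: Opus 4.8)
The plan is to reduce the statement to the $\SO(3)$-invariant subspace $\HH^{4} \subset \Ela$ and then to run a short dimension count. Recall from the harmonic decomposition $\Ela \simeq 2\HH^{0} \oplus 2\HH^{2} \oplus \HH^{4}$ that the set of tensors of the form $\mathbf{C} = (0,0,0,0,D)$ is an $\SO(3)$-invariant copy of $\HH^{4}$ inside $\Ela$, and that for such a tensor the matrix $\underline{\mathbf{C}}$ coincides with $\underline{D}$ by \eqref{eq:decomposition_boehler}; hence its Betten polynomial is precisely the polynomial $\mathcal{B}_{D}$ of Lemma~\ref{lem:betten}. It therefore suffices to produce two harmonic tensors $D, D' \in \HH^{4}$ lying in distinct $\SO(3)$-orbits but having the same Betten polynomial: viewing them inside $\Ela$, they are still in distinct $\SO(3)$-orbits (the orbit of an element of $\HH^{4}$ stays in $\HH^{4}$, by equivariance of the harmonic decomposition), so the invariants extracted from the Betten polynomial fail to separate the orbits of $\Ela$.

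Next I would identify exactly what data the Betten polynomial of a tensor $D\in\HH^{4}$ records. By Lemma~\ref{lem:betten} one has $\mathcal{B}_{D}(\lambda,\mu) = (3\lambda + 2\mu)\,\chi_{D}^{r}(2\mu)$, so the coefficients of $\mathcal{B}_{D}$ are polynomial functions of the coefficients of the characteristic polynomial $\chi_{D}$. By Proposition~\ref{eq:characteristic_prop}, those coefficients are in turn polynomials in $J_{2}(D), J_{3}(D), J_{4}(D), J_{5}(D)$ alone. Consequently the map sending $D$ to the list of its Betten invariants factors through $\Phi := (J_{2}, J_{3}, J_{4}, J_{5}) \colon \HH^{4} \to \RR^{4}$.

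It then remains to observe that $\Phi$ cannot separate $\SO(3)$-orbits. Here I would invoke table~\ref{tab:H4}: the generic stratum $\strata{\triv}\subset\HH^{4}$ satisfies $\dim \strata{\triv}/\SO(3) = 6$. If $\Phi$ separated orbits it would induce an injective (continuous, indeed algebraic) map from the $6$-dimensional manifold $\strata{\triv}/\SO(3)$ into $\RR^{4}$, which is impossible (by invariance of domain, or simply because the image of a $6$-dimensional set under a polynomial map into $\RR^{4}$ cannot be $6$-dimensional). Hence there exist $D, D' \in \HH^{4}$ with $D \not\approx D'$ and yet $J_{k}(D) = J_{k}(D')$ for $k = 2,3,4,5$; by the previous paragraph $\mathcal{B}_{D} = \mathcal{B}_{D'}$, and the embedding $\HH^{4}\hookrightarrow\Ela$ of the first paragraph finishes the proof. (If an explicit witness is preferred over the dimension argument, one can extract one directly from the parametrizations of the finite-monodromy strata computed earlier in this section, choosing two tensors whose invariants $J_{6}$ or $J_{7}$ differ while $J_{2},\dots,J_{5}$ agree.)

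I do not expect any genuine obstacle: the statement is a short corollary of Lemma~\ref{lem:betten} and Proposition~\ref{eq:characteristic_prop}. The only step carrying real content — rather than bookkeeping — is the last one, namely the assertion that four invariants cannot separate the orbits of a space whose orbit space is six-dimensional; the cleanest justification is the dimension count above (equivalently, that $\{J_{2},\dots,J_{5}\}$ has transcendence degree at most $4$, too small since $J_{2},\dots,J_{7}$ are algebraically independent and the orbit space is $6$-dimensional).
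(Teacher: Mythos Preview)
Your proposal is correct and follows essentially the same route as the paper: restrict to $\HH^{4}\subset\Ela$, use Lemma~\ref{lem:betten} and Proposition~\ref{eq:characteristic_prop} to see that the Betten invariants factor through $(J_{2},J_{3},J_{4},J_{5})$, and then observe that these four invariants cannot separate the orbits of $\HH^{4}$. The paper's proof simply asserts this last point as ``which is false'' (implicitly because $J_{2},\dotsc,J_{7}$ are algebraically independent), while you spell it out via the dimension count $\dim\strata{\triv}/\SO(3)=6>4$; both justifications are fine.
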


\begin{proof}
If it was the case, it would also separate the orbits of $\HH^{4}$ but then lemma~\ref{lem:betten} and proposition~\ref{eq:characteristic_prop} would imply that $J_{2}, J_{3}, J_{4}, J_{5}$ separate the orbits of $\HH^{4}$, which is false.
\end{proof}

% ----------------------------------------------------------------
% ----------------------------------------------------------------

\section{Towards an effective determinacy of elasticity classes}
\label{sec:stratifaction_ela}

In this last section, an application of the concepts previously introduced will be given. This application will concern the identification of the symmetry class of a tensor which is not expressed in one of its natural bases. As our objective is mainly to illustrate the tools we propose, we will only provide a set of sufficient conditions. This approach is based on a hypothesis of \emph{genericity} for the anisotropic tensors. Such hypothesis, which amounts to exclude any exceptional case, will be detailed and criticized in the next subsection. Therefore, in order to avoid any misunderstanding, our aim is not to provide a complete answer to identification problem. A full treatment of this question would require a detailed analysis of each particular non generic case that may occur, and constitutes a full and complex study on its own. As the core of this first paper is focused on the symmetry class identification in terms of $\HH^{4}$ invariants, we have rather choose a simpler problem to treat.

% ----------------------------------------------------------------

\subsection{Generic anisotropic elasticity tensors}
\label{subsec:generecity}

A well-known result of anisotropic elasticity~\cite{FV96} states that the symmetry group of an elasticity tensor is the intersection of the symmetry group of each one of its irreducible components. In  other terms:
\begin{equation*}
G_{\mathbf{C}}=G_{a}\cap G_{b}\cap G_{D}
\end{equation*}
since we trivially have $G_{\lambda}=G_{\mu}=\SO(3)$.
In the most general situation, all the former tensors are independent and therefore a complete analysis of the intersection of symmetry group has to be conducted. Such a problem is a complex one, and is a subject of a study on its own. For the physical situations considered here, a simplifying hypothesis will be made. As the tensor symmetry is a consequence of the material symmetries, a weak coupling between the tensors of the set $(a,b,D)$ will be considered.
To that end let us consider the triplet of quadratic forms $\mathbf{L}:=(a,b,\mathbf{d}_{2})$, where $\mathbf{d}_{2}$ is the fundamental covariant of degree $2$ introduced in~\eqref{eq:boehler_invariants}.
We have the following direct lemma
\begin{lem}\label{lem:LD}
\begin{equation*}
G_{\mathbf{C}}=G_{\mathbf{L}}\cap G_{D}
\end{equation*}
\end{lem}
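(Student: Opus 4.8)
The plan is to deduce the lemma directly from the Forte--Vianello identity $G_{\mathbf{C}} = G_{a} \cap G_{b} \cap G_{D}$ recalled just above (valid since $G_{\lambda} = G_{\mu} = \SO(3)$), together with the elementary fact that the symmetry group of a finite family of tensors is the intersection of the individual symmetry groups. Applying the latter to the triplet $\mathbf{L} := (a, b, \mathbf{d}_{2})$ gives $G_{\mathbf{L}} = G_{a} \cap G_{b} \cap G_{\mathbf{d}_{2}}$, hence
\begin{equation*}
  G_{\mathbf{L}} \cap G_{D} = G_{a} \cap G_{b} \cap G_{\mathbf{d}_{2}} \cap G_{D}.
\end{equation*}
So it suffices to show that the factor $G_{\mathbf{d}_{2}}$ is redundant once $G_{D}$ is present, i.e. that $G_{D} \subseteq G_{\mathbf{d}_{2}}$.

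This last inclusion is the only substantive point, and it follows from the fact that $\mathbf{d}_{2} = \tr_{13}(D^{2})$ is a \emph{covariant} of $D$, as recorded in~\eqref{eq:boehler_invariants}: it is obtained from $D$ by $\SO(3)$-equivariant operations only (pass to the symmetric operator $\underline{D}$, square it, take the partial trace $\tr_{13}$). Concretely, $\mathbf{d}_{2}(g \cdot D) = g \cdot \mathbf{d}_{2}(D)$ for every $g \in \SO(3)$, so any $g$ fixing $D$ also fixes $\mathbf{d}_{2}$, whence $G_{D} \subseteq G_{\mathbf{d}_{2}}$. Therefore $G_{\mathbf{d}_{2}} \cap G_{D} = G_{D}$ and
\begin{equation*}
  G_{\mathbf{L}} \cap G_{D} = G_{a} \cap G_{b} \cap G_{D} = G_{\mathbf{C}},
\end{equation*}
which is the claim.

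There is essentially no obstacle: the statement is, as the authors put it, a ``direct lemma'', and the proof is a one-line manipulation of subgroup intersections once one notes the covariance of $\mathbf{d}_{2}$. If desired one can spell out the two inclusions separately --- $G_{\mathbf{C}} \subseteq G_{\mathbf{L}} \cap G_{D}$ follows from $G_{\mathbf{C}} = G_{a} \cap G_{b} \cap G_{D}$ and $G_{D} \subseteq G_{\mathbf{d}_{2}}$, while $G_{\mathbf{L}} \cap G_{D} \subseteq G_{a} \cap G_{b} \cap G_{D} = G_{\mathbf{C}}$ is immediate --- but no genericity hypothesis (of the kind introduced afterwards) is needed for this particular statement.
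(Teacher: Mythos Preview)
Your proof is correct and follows essentially the same argument as the paper: use $G_{\mathbf{L}} = G_{a}\cap G_{b}\cap G_{\mathbf{d}_{2}}$, invoke the covariance of $\mathbf{d}_{2}$ with respect to $D$ to get $G_{D}\subseteq G_{\mathbf{d}_{2}}$ (hence $G_{D}\cap G_{\mathbf{d}_{2}} = G_{D}$), and conclude via $G_{\mathbf{C}} = G_{a}\cap G_{b}\cap G_{D}$. The paper's proof is the same one-line manipulation, just stated more tersely.
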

\begin{proof}
As  $\mathbf{d}_{2}$ is covariant to  $D$, $G_{D}\subseteq G_{\mathbf{d}_{2}}$ and therefore $G_{D}\cap G_{\mathbf{d}_{2}}=G_{D}$.
Thus
\begin{equation*}
G_{\mathbf{L}}\cap G_{D} =  G_{a}\cap G_{b}\cap G_{\mathbf{d}_{2}}\cap G_{D} = G_{a}\cap G_{b}\cap G_{D}=G_{\mathbf{C}}
\end{equation*}
\end{proof}
The hypothesis that will be made here is the following
\begin{hyp}[Material Genericity Assumption (MGA)]
 If $[G_{\mathbf{L}}]\succeq[D_{2}]$ then $G_{\mathbf{L}}=G_{\mathbf{d}_{2}}$
\end{hyp}
Such an hypothesis is equivalent to the following:
\begin{equation*}
    \text{if } [G_{\mathbf{L}}]\succeq[D_{2}] \text{ then } G_{\mathbf{d}_{2}}\subseteq G_{a} \text{ and } G_{\mathbf{d}_{2}}\subseteq G_{b}.
\end{equation*}
From a physical point of view, this means that the symmetries of $a$ and $b$ are compatible with the ones of $D$ in the sense that adding $G_{a}$ and $G_{b}$ to the intersection will not lower $G_{D}$. In other words :
\begin{lem}
\begin{equation*}
G_{\mathbf{L}}=G_{\mathbf{d}_{2}}\Rightarrow G_{C}= G_{D}
\end{equation*}
\end{lem}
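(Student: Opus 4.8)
The statement is a formal consequence of Lemma~\ref{lem:LD} together with the fact that $\mathbf{d}_{2}$ is a covariant of $D$. The plan is to combine the identity $G_{\mathbf{C}} = G_{\mathbf{L}} \cap G_{D}$ with the inclusion $G_{D} \subseteq G_{\mathbf{d}_{2}}$ and then use the hypothesis $G_{\mathbf{L}} = G_{\mathbf{d}_{2}}$ to see that the intersection collapses onto $G_{D}$.

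\textbf{Steps.} First I would recall, exactly as in the proof of Lemma~\ref{lem:LD}, that since $\mathbf{d}_{2}$ is obtained from $D$ by an $\SO(3)$-equivariant construction (see~\eqref{eq:boehler_invariants}), every element of $\SO(3)$ fixing $D$ also fixes $\mathbf{d}_{2}$; hence $G_{D} \subseteq G_{\mathbf{d}_{2}}$. Second, I would invoke Lemma~\ref{lem:LD}, which gives $G_{\mathbf{C}} = G_{\mathbf{L}} \cap G_{D}$. Third, substituting the hypothesis $G_{\mathbf{L}} = G_{\mathbf{d}_{2}}$ yields $G_{\mathbf{C}} = G_{\mathbf{d}_{2}} \cap G_{D}$, and by the inclusion from the first step this intersection equals $G_{D}$. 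Therefore $G_{\mathbf{C}} = G_{D}$.

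\textbf{Main obstacle.} There is essentially no obstacle here: the result is purely set-theoretic once the covariance of $\mathbf{d}_{2}$ with respect to $D$ is acknowledged, and that covariance is already recorded in the Proposition preceding~\eqref{eq:boehler_invariants} and reused in Lemma~\ref{lem:LD}. The only point requiring a word of care is the implication ``covariant $\Rightarrow$ inclusion of isotropy groups'', which is immediate from the definition of an equivariant map but should be stated explicitly so the reader sees why $G_{D}\subseteq G_{\mathbf{d}_{2}}$ rather than merely $[G_{D}]\preceq[G_{\mathbf{d}_{2}}]$.
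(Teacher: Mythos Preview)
Your proposal is correct and follows essentially the same approach as the paper's own proof: invoke Lemma~\ref{lem:LD} to get $G_{\mathbf{C}}=G_{\mathbf{L}}\cap G_{D}$, substitute the hypothesis $G_{\mathbf{L}}=G_{\mathbf{d}_{2}}$, and use the covariance of $\mathbf{d}_{2}$ with respect to $D$ to collapse the intersection to $G_{D}$. Your version is in fact slightly more explicit than the paper's in spelling out why covariance yields the inclusion $G_{D}\subseteq G_{\mathbf{d}_{2}}$, which is a welcome clarification.
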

\begin{proof}
From the lemma \ref{lem:LD}, we have $G_{\mathbf{C}}=G_{\mathbf{L}}\cap G_{D}$. Therefore, according to the hypothesis $G_{\mathbf{C}}=G_{\mathbf{d}_{2}}\cap G_{D}$, and as $\mathbf{d}_{2}$ is covariant to $D$ the intersection reduces to $G_{\mathbf{C}}=G_{D}$.
\end{proof}
If this genericity assumption fails, and without a deeper study, all it can be said is $G_{\mathbf{C}}\subseteq G_{\mathbf{D}}$.
From a mathematical point of view, studying these cases are necessary to close the problem.
On the other hand, from a material point of view and for a first approach, we believe that the genericity assumption to hold true for a broad class of classical materials.
Anyway situations where this hypothesis may fail can be conceived:
\begin{description}
\item[Tailored materials] As the symmetry class of an elasticity tensor is driven by the architecture of the microstructure, it can be conceived to tailor the microstructure in order to give the material some non-conventional properties \cite{He04}. Therefore for such a specific design the genericity can not be assumed. Such kind of very specific architecture might not only  be found in engineered materials, but also in some biomaterials.
\item[Damaged materials] Genericity may also fail for damaged-materials, according to certain model \cite{Fra11} the symmetry breaking induces by the damage process may result in non generic anisotropic properties.
\end{description}
Therefore, bearing in mind the aforementioned limitations induced by the genericity assumption, we can propose, in the following section, an identification algorithm.

% ----------------------------------------------------------------

\subsection{Symmetry class identification}
\label{subsec:identification}

Based on the results obtained in this paper, the algorithm provided figure \ref{fig:organigram} allows to identify the symmetry class under the MGA assumption.
\begin{figure}
  \centering
  \includegraphics{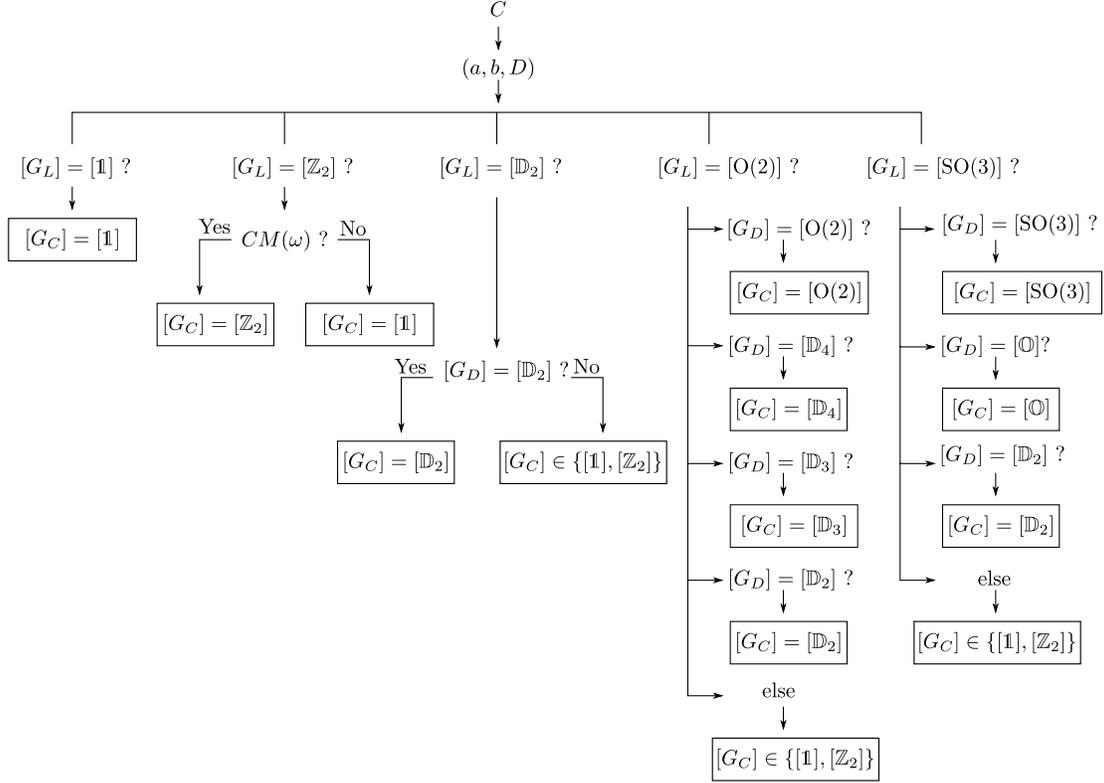}\\
  \caption{Symmetry identification for a generic tensor in $\Ela$.}
  \label{fig:organigram}
\end{figure}
Concerning this algorithm some remarks are in order.
\begin{itemize}
  \item The first step of the algorithm is based on the symmetry analysis of the triplet of quadratic forms $\mathbf{L}$. This symmetry identification is based on the conditions given in \autoref{subsec:n_quadratic_form}. As indicated on the diagram if $G_{\mathbf{L}}=\ZZ_{2}$, the symmetry class of $\mathbf{C}$ is decided using the Cowin-Mehrabadi criterion for the non vanishing vector $\omega$, obtained from case (1) of Theorem~\ref{thm:nS2_bifurcations} applied to the triplet $\mathbf{L}$. This test is denoted by $CM(\omega)?$ in figure~\ref{fig:organigram}.
  \item For classes strictly greater than $[\DD_{2}]$ we use the syzygy relations given in \autoref{sec:stratification_H4}, to check the isotropy class of $D$.
\end{itemize}

Some tests can be avoided \textit{a priori} (and will thus not be included in figure~\ref{fig:organigram}) by the following observations which result from lemma~\ref{lem:scalar_d2} combined with the bifurcation conditions given in figure~\ref{fig:bifurcation}.

\begin{enumerate}
  \item If $G_{L} = \OO(2) = G_{d_{2}}$, then the isotropy class of $D$ cannot belong to $\set{[\octa],[\SO(3)]}$.
  \item If $G_{L} = \SO(3) = G_{d_{2}}$, then $J_2^2 - 3J_4 = 0$ and the isotropy class of $D$ cannot belong to $\set{[\DD_{3}], [\DD_{4}], [\OO(2)]}$.
\end{enumerate}

\begin{lem}\label{lem:scalar_d2}
If $\mathbf{d}_2 =\lambda\mathrm{Id}$ then $J_2^2 - 3J_4 = 0$.
\end{lem}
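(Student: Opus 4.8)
The plan is to simply unwind the definitions of $J_2$ and $J_4$ in terms of the covariant $\mathbf{d}_2$ and use the hypothesis $\mathbf{d}_2 = \lambda\,\Id$. Recall from \eqref{eq:boehler_invariants} that $\mathbf{d}_4 = \mathbf{d}_2^{2}$ and that $J_k = \tr(\mathbf{d}_k)$; in particular $J_2 = \tr(\mathbf{d}_2)$ and $J_4 = \tr(\mathbf{d}_2^{2})$. The only structural fact needed is that $\mathbf{d}_2$ is a second-order tensor on the three-dimensional space $\RR^{3}$, so that $\tr(\Id) = 3$.

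\textbf{Main step.} Under the hypothesis $\mathbf{d}_2 = \lambda\,\Id$ we get immediately $J_2 = \tr(\lambda\,\Id) = 3\lambda$, and $\mathbf{d}_4 = \mathbf{d}_2^{2} = \lambda^{2}\,\Id$, whence $J_4 = \tr(\lambda^{2}\,\Id) = 3\lambda^{2}$. Therefore
\begin{equation*}
    J_2^{2} - 3\,J_4 = 9\lambda^{2} - 3\cdot 3\lambda^{2} = 0 .
\end{equation*}

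\textbf{Obstacle.} There is essentially no obstacle: the statement is a one-line consequence of the definitions, the only thing to keep track of being the dimension factor $\tr(\Id)=3$ coming from $\RR^{3}$. (One may note, as a sanity check, that this is consistent with the vanishing of $J_2^{2}-3J_4$ on the cubic slice \eqref{eq:cubic_syzigies}, where indeed all covariant deviators vanish.)
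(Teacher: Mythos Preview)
Your proof is correct and follows essentially the same approach as the paper: both compute $J_{2}=\tr(\mathbf{d}_{2})=3\lambda$ and $J_{4}=\tr(\mathbf{d}_{2}^{2})=3\lambda^{2}$ directly from the definitions in \eqref{eq:boehler_invariants}, then observe $J_{2}^{2}-3J_{4}=9\lambda^{2}-9\lambda^{2}=0$. Your write-up is simply a slightly more explicit version of the paper's one-line argument.
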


\begin{proof}
If $\mathbf{d}_2 $ is a scalar multiple of the identity, $J_2^2=9\lambda^2$ and $J_{4}=\tr(\mathbf{d}_{4})=\tr(\mathbf{d}^2_{2})=3\lambda^{2}$. The conclusion follows.
\end{proof}

% ----------------------------------------------------------------
% ----------------------------------------------------------------

\section{Conclusion}
\label{sec:conclusion}

In this paper a method aiming at identifying the symmetry class of an elasticity tensor has been provided. Conversely to other approaches based on the spectral decomposition of the elasticity tensor, this method is based on the specific relations between invariants in each symmetry class. A complete set of relations were given for the couple of quadratic forms appearing in the harmonic decomposition of $\Ela$, and $5$ sets of algebraic relations were provided to identify the following symmetry classes from $\HH^{4}$: orthotropic ($[\DD_{2}]$), trigonal ($[\DD_{3}]$), tetragonal ($[\DD_{4}]$), transverse isotropic ($[\SO(2)]$) and cubic ($[\octa$]). Completed by a \emph{genericity} assumption on the elasticity tensor under study this approach was shown to be able to identify the $8$ possible elasticity classes.

We consider this paper as a first step towards a more systematic uses of invariant-based methods in continuum mechanics. Using the geometrical framework introduced in this paper some extensions of the current method can be envisaged :
\begin{itemize}
\item The first point would be to get rid of the genericity assumption we made, such a point constitutes a full study on its own ;
\item We aim at extending the current sygyzy-characterization to group having continuous monodromy group ;
\item To propose a complete set of normal forms of elasticity tensor constructed using this invariants approach ;
\item To tackle the increasing degree of the algebraic relations for low symmetry group, it might be interesting to characterize classes in terms of covariants.
\end{itemize}

Such an approach was conducted here in the context of classical elasticity, but the framework used allows a direct extension of this method to other kinds of anisotropic tensorial behaviors. It can be suited for the study of anisotropic features of piezoelasticity \cite{GW02}, flexoelasticity \cite{LQH11} or strain-gradient elasticity \cite{LQAHB12}. Furthermore, and in a more experimental view, we aim at testing our approach on experimental situations. This would allow to compare the proposed approach with the more classical ones.

% ----------------------------------------------------------------
% ----------------------------------------------------------------

\appendix

% ----------------------------------------------------------------
% ----------------------------------------------------------------

\section{Geometric concepts}
\label{sec:geometric_concepts}

% ----------------------------------------------------------------

\subsection{Normal form}
\label{subsec:normal_form}

A \emph{normal form} is a map $\NF: V \to V$ such that
\begin{equation*}
    v_{1} \approx v_{2} \text{ if and only if } \NF(v_{1}) = \NF(v_{2}).
\end{equation*}
The existence of a \emph{computable normal form} allows to decide \emph{whether} two vectors $v_{1}$ and $v_{2}$ \emph{belong to the} same $G$-orbit.

\begin{rem}
In practice, one contents itself usually with a normal form defined up to a \emph{finite group}. For instance, one is satisfied with a diagonal matrix as ``the'' normal form for the action of the rotation group on the space of symmetric matrices. However this diagonal form is not unique. It is defined up to the permutation of the diagonal entries.
\end{rem}

% ----------------------------------------------------------------

\subsection{Covariants}
\label{subsec:covariants}

Given a second representation $(V', \rho')$ of the same group $G$, we say that a linear map $\varphi: V \to V'$ is \emph{equivariant} if the following diagram
\begin{equation*}
    \begin{CD}
      V @> \varphi >> V' \\
      @V\rho(g) VV  @VV\rho'(g)V  \\
      V @> \varphi >> V'
\end{CD}
\end{equation*}
commutes for each $g \in G$. The composition of two equivariant maps is equivariant.

Given $v\in V$ the element $\varphi(v)$ will be said to be \emph{covariant} to $v$. It is clear that every element $v \in V$ is covariant to itself (with $\varphi = \Id$). However, this relation is not symmetric. Indeed, if $v'$ is covariant to $v$, the converse is not necessarily true ($v$ is not necessarily covariant to $v'$) unless the linear map $\varphi: V \longrightarrow V'$ is bijective. In this case, we say that the representations $(V, \rho)$ and $(V', \rho')$ are \emph{isomorphic} or \emph{equivalent}.

When the representation $(V', \rho')$ is \emph{trivial}, i.e. when $\rho'(g) = \Id$ for all $g\in G$, the covariant element $\varphi(v)$ will be called an \emph{invariant}. We have then
\begin{equation*}
    v \approx v' \Rightarrow \varphi(v) = \varphi(v').
\end{equation*}

% ----------------------------------------------------------------

\subsection{Polynomial invariants}
\label{subsec:polynomial_invariants}

The linear action of the group $G$ on the $\RR$-vector space $V$ extends naturally to the vector space of polynomial functions defined on $V$ with values in $\RR$. This extension is given by
\begin{equation*}
    (g\cdot P)(v) := P(g^{-1}\cdot v)
\end{equation*}
for every polynomial $P \in \RR[V]$ and every vector $v \in V$. The set of all \emph{invariant polynomials} is a sub-algebra of the algebra $\RR[V]$ of all polynomials defined on $V$. We denote this sub-algebra by $\RR[V]^{G}$ and call it the \emph{invariant algebra}.

When $G$ is a \emph{compact} Lie group (or more generally a \emph{reductive} group), this algebra $\RR[V]^{G}$ is of \emph{finite type} \cite{Hil93}. This means first that it is possible to find a finite set of invariant polynomials $J_{1}, \dotsc , J_{N}$ which generate (as an algebra) $\RR[V]^{G}$.
But, in general, this algebra is not \emph{free}, which means that it is not always possible to extract from $J_{1}, \dotsc , J_{N}$ a subset of generators which are \emph{algebraically independent}\footnote{Polynomials $P_{1}, \dotsc , P_{N}$ are said to be algebraically independent over $\RR$ if the only polynomial in $N$ variables which satisfies $Q(P_{1}, \dotsc , P_{N}) = 0$ is the zero polynomial.}.
The set of polynomial relations on $J_{1}, \dotsc , J_{N}$ (also known as \emph{syzygies}) is an ideal of $\RR[J_{1}, \dotsc ,J_{N}]$ that is finitely generated.
Such a set of generators is called an \emph{integrity basis} for $\RR[V]^{G}$.
Of course it is not unique and moreover its cardinal may change from one basis to another.

% ----------------------------------------------------------------

\subsection{Ideal and Groebner basis}
\label{subsec:Grobner}

Consider a subset $V$ of $k^{n}$ (where $k = \RR$ or $\CC$) defined implicitly by a finite set of polynomial equations $p_{1}, \dotsc p_{r}$. The fundamental observation is that the set $I$ of polynomials $p \in k[x_{1}, \dotsc , x_{n}]$ such that $p(x) = 0$ for each point $x \in S$ (where $x = (x_{1}, \dotsc , x_{n})$) satisfies the following properties:
\begin{enumerate}
  \item[(i)] $0 \in I$,
  \item[(ii)] If $p,q \in I$, then $p+q \in I$,
  \item[(iii)] If $p \in I$ and $ q \in k[x_{1}, \dotsc , x_{n}]$, then $pq \in I$.
\end{enumerate}
Such a subset $I$ of $k[x_{1}, \dotsc , x_{n}]$ (or more generally any commutative ring) is called an \emph{ideal}. In the example above the ideal $I$ is (finitely) \emph{generated} by $p_{1}, \dotsc p_{r}$ (i.e. every $p \in I$ can be written as a finite sum $\sum_{j=1}^{s} f_{j}p_{j}$ where $f_{j}\in k[x_{1}, \dotsc , x_{n}]$) and we write $I = <p_{1}, \dotsc p_{r}>$.

To solve a system of linear equations in $k^{n}$, a standard algorithm is to reduce the matrix of the system to a triangular form (a normal form, even if not unique). A \emph{Groebner basis} of an ideal $I$ is a set of generators for $I$ which is a normal form for an algebraic system, in a certain sense analog to a triangular system for a set of linear equations. It permits to compute and answer effectively a lot of problems related to ideals $k[x_{1}, \dotsc , x_{n}]$ and algebraic systems. It solves, in particular, the implicitization problem. We refer to \cite{CLO07} for an excellent exposition on Groebner basis (which is very accessible for non specialists).

% ----------------------------------------------------------------

\subsection{Separation of orbits}
\label{subsec:separation_orbits}

Given an algebra $\mathcal{A}$ of invariant functions defined on $V$, we say that $\mathcal{A}$ \emph{separates} the orbits if given two vectors $v_{1}, v_{2} \in V^{2}$ which are not on the same orbit, it is always possible to find a function $f \in \mathcal{A}$ such that $f(v_{1}) \ne f(v_{2})$.
In general, when the group $G$ is not compact, the algebra of polynomial invariants does not separate the orbits\footnote{This can be illustrated by the example of the \emph{adjoint action} of the general linear group $\GL(n,\CC)$ on its Lie algebra $\gl(n,\CC)$. Each orbit is described by a unique \emph{Jordan form} but the algebra of polynomial invariants, which is generated by $\tr (M), \tr (M^{2}), \dotsc \tr (M^{n})$ where $M\in \gl(n,\CC)$, is not able to distinguish between two different Jordan forms which have the same diagonal part.}. However, it is remarkable that for any compact group, the algebra of real invariant polynomials always separates the orbits. This result is attributed to Weyl \cite{Wey97}.

\begin{prop}
Let $(V, \rho)$ be a finite dimensional, linear representation of a \emph{compact group} $G$. Then the algebra of \emph{real invariant polynomials} $\RR[V]^{G}$ separates the orbits.
\end{prop}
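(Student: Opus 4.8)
The plan is the classical one attributed to Weyl: separate the two orbits first on the level of \emph{continuous} functions, then push the separating function down to a genuine \emph{polynomial} invariant by averaging over the group. First I would record two elementary facts. Since $G$ is compact and $\rho$ is continuous, every orbit $G\cdot v$ is the continuous image of $G$, hence a \emph{compact} subset of $V$; and since $V\cong\RR^{\dim V}$ is a metric space it is normal. Thus if $v_{1}$ and $v_{2}$ are not on the same orbit, $G\cdot v_{1}$ and $G\cdot v_{2}$ are disjoint compact sets, and (Urysohn, or simply the metric structure) there is a continuous $f\colon V\to\RR$ with $f\equiv 0$ on $G\cdot v_{1}$ and $f\equiv 1$ on $G\cdot v_{2}$.

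Next I would pass from continuous to polynomial. Fix a closed ball $B$ large enough to contain both orbits. By the Stone--Weierstrass theorem there is a polynomial $p\in\RR[V]$ with $\abs{p}<\tfrac14$ on $G\cdot v_{1}$ and $\abs{p-1}<\tfrac14$ on $G\cdot v_{2}$. Now apply the Reynolds operator
\begin{equation*}
    \bar{p}(v):=\int_{G}p(g^{-1}\cdot v)\,dg ,
\end{equation*}
where $dg$ denotes the normalized Haar measure on $G$. Translation invariance of $dg$ makes $\bar{p}$ $G$-invariant. The essential point is that $\bar{p}$ is again a \emph{polynomial}: if $d=\deg p$, the space $\RR[V]_{\le d}$ of polynomials of degree $\le d$ is finite dimensional and, because $\rho$ is linear, invariant under the action $g\cdot P=P(g^{-1}\cdot)$; hence $\bar{p}$, being a Haar average of elements of the finite-dimensional space $\RR[V]_{\le d}$, lies in $\RR[V]_{\le d}$. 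Concretely, one may expand $p(g^{-1}\cdot v)$ as a polynomial in $v$ whose coefficients are continuous functions of $g$ and integrate the coefficients.

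Finally I would conclude: as $G\cdot v_{1}$ is $G$-invariant and $\abs{p}<\tfrac14$ there, integrating against a probability measure gives $\abs{\bar{p}(v_{1})}\le\tfrac14$; likewise $\abs{\bar{p}(v_{2})-1}\le\tfrac14$, so $\bar{p}(v_{1})\ne\bar{p}(v_{2})$. Thus $\bar{p}\in\RR[V]^{G}$ separates $v_{1}$ and $v_{2}$. The only step carrying real content — and the one I would be most careful with — is the claim that the Reynolds operator sends polynomials to $G$-invariant polynomials; the rest is point-set topology and Stone--Weierstrass. It is also worth flagging precisely where compactness enters: it makes the orbits compact (so a bounded function separates them on each orbit) and it provides a finite normalized Haar measure (so the averaging makes sense). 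Both ingredients fail for general non-compact $G$, which is exactly why this separation property is special to the compact case.
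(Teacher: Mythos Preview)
Your proof is correct and follows essentially the same route as the paper's: separate the two compact orbits by a continuous function, approximate by a polynomial on a ball via Stone--Weierstrass, and average with the Haar measure to obtain a $G$-invariant polynomial that still separates them. The only differences are cosmetic (you invoke Urysohn where the paper writes down an explicit distance-based function, and you are more explicit than the paper about why the Reynolds operator preserves $\RR[V]_{\le d}$).
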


\begin{proof}
Let $v\in V$ and $K$ a compact subset of $V$. Given a norm $\norm{\cdot}$ on $V$, we define the distance of $v$ to $K$ as
\begin{equation*}
    d(v,K) = \min_{w\in K} \norm{v-w}.
\end{equation*}
Now, let $K_{1}$ and $K_{2}$ be two distinct $G$-orbits in $V$. The following function
\begin{equation*}
    f(v) = \frac{d(v,K_{1}) - d(v,K_{2}) }{d(v,K_{1}) + d(v,K_{2})}
\end{equation*}
is continuous on $V$, $f(v) = -1$, for $v \in K_{1}$ and $f(v) = + 1$, for $v \in K_{2}$.

Since the group $G$ is compact, the orbits are compact and it is possible to find a closed ball $B$ in $V$ which contains both orbits. We apply then the \emph{Stone-Weierstrass theorem} which states that \emph{real} polynomial functions are dense in the space of continuous functions defined on $B$ (for the $\sup$ norm). Therefore, a \emph{polynomial function $P$ on $V$ can be found such that}
\begin{equation*}
    P(v) \le - \frac{1}{2}, \: \forall v \in K_{1}, \qquad P(v) \ge \frac{1}{2}, \forall v \in K_{2} .
\end{equation*}
Now, averaging this polynomial $P$
\begin{equation*}
    \bar{P} = \int_{G} g \cdot P \, d\mu
\end{equation*}
where $\mu$ is the Haar measure over $G$, we obtain a $G$-invariant polynomial $\bar{P}$ which separates the orbits $K_{1}$ and $K_{2}$.\qed
\end{proof}

% ----------------------------------------------------------------
% ----------------------------------------------------------------

\section{Normalizers of closed subgroups of $\SO(3)$}
\label{sec:normalizers}

Every closed subgroup of $\SO(3)$ is conjugate to one of the following list \cite{DFN92,GSS88}
\begin{equation}\label{eq:closed_subgroups}
    \SO(3),\, \OO(2),\, \SO(2),\, \DD_{n} (n \ge 2),\, \ZZ_{n} (n \ge 2),\, \tetra,\, \octa,\, \ico,\, \text{and}\, \triv
\end{equation}
where:
\begin{itemize}
  \item $\OO(2)$ is the subgroup generated by all the rotations around the $z$-axis and the order 2 rotation $\sigma : (x,y,z)\mapsto (x,-y,-z)$ around the $x$-axis.
  \item $\SO(2)$ is the subgroup of all the rotations around the $z$-axis.
  \item $\ZZ_{n}$ is the unique cyclic subgroup of order $n$ of $\SO(2)$, the subgroup of rotations around the $z$-axis.
  \item $\DD_{n}$ is the \emph{dihedral} group. It is generated by $\ZZ_{n}$ and $\sigma :(x,y,z)\mapsto (x,-y,-z)$.
  \item $\tetra$ is the \emph{tetrahedral} group, the (orientation-preserving) symmetry group of a tetrahedron. It has order 12.
  \item $\octa$ is the \emph{octahedral} group, the (orientation-preserving) symmetry group of a cube or octahedron. It has order 24.
  \item $\ico$ is the \emph{icosahedral} group, the (orientation-preserving) symmetry group of a icosahedra or dodecahedron. It has order 60.
  \item $\triv$ is the trivial subgroup, containing only the unit element.
\end{itemize}

The \emph{lattice} of conjugacy classes of closed subgroups of $\SO(3)$ is completely described \cite{GSS88} by the following inclusion of subgroups (which generates order relations between conjugacy classes)
\begin{align*}
    & \ZZ_{n} \subset \DD_{n} \subset \OO(2) \qquad (n \ge 2), \\
    & \ZZ_{n} \subset \ZZ_{m} \text{ and } \DD_{n} \subset \DD_{m}, \qquad (\text{if $n$ divides $m$}), \\
    & \ZZ_{2} \subset \DD_{n} \qquad (n \ge 2), \\
    & \ZZ_{n} \subset \SO(2) \subset \OO(2) \qquad (n \ge 2),
\end{align*}
and the arrows in figure~\ref{fig:lattice2} which complete the lattice, taking account of the exceptional subgroups $\octa, \tetra, \ico$ (beware that figure~\ref{fig:lattice2} cannot be realized by an inclusion diagram between representatives of these conjugacy classes).

\begin{figure}[h]
  \centering
  \includegraphics{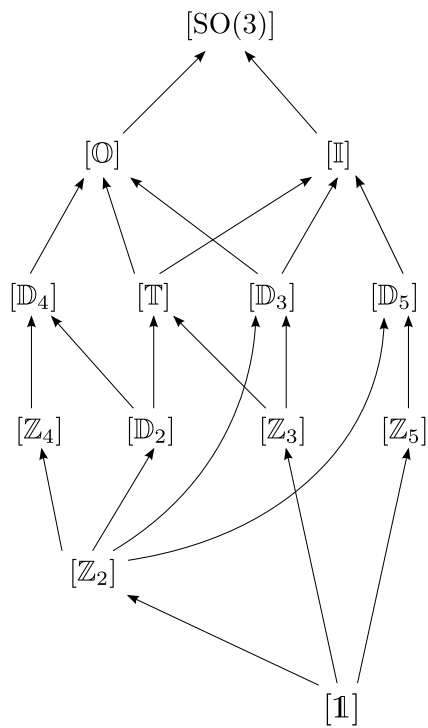}
  \caption{Exceptional classes in the lattice of conjugacy classes of closed subgroup of $\SO(3)$.}
  \label{fig:lattice2}
\end{figure}

\begin{rem}
Notice that $\octa$ and $\ico$ are maximal subgroups of $\SO(3)$ which both contained $\tetra$.
\end{rem}

Recall that the \emph{normalizer} of a subgroup $H$ of a group $G$, denoted by $N(H)$ is the biggest subgroup of $G$ in which $H$ is normal. It is defined by
\begin{equation*}
    N(H) := \set{g\in G;\; gHg^{-1} = H}.
\end{equation*}

\begin{prop}
The normalizers of the subgroups in the list~\eqref{eq:closed_subgroups} are given below.
\begin{align*}
    N(\SO(3)) &= \SO(3), & N(\OO(2)) &= \OO(2), & N(\SO(2)) &= \OO(2), \\
    N(\octa) &= \octa, & N(\ico) &= \ico, & N(\tetra) &= \octa, \\
    N(\triv) &= \SO(3), & N(\ZZ_{n}) &= \OO(2) \text{ for } n\ge 2, & N(\DD_{n}) &= \DD_{2n} \text{ for } n \ge 3,
\end{align*}
and $N(\DD_{2}) = \octa$. Moreover, the quotient groups are given by
\begin{align*}
    N(\SO(3))/\SO(3) & = \triv, & N(\OO(2))/\OO(2) & = \triv, & N(\SO(2))/\SO(2) & = \ZZ_{2}, \\
    N(\octa)/\octa & = \triv, & N(\ico)/\ico &= \triv, & N(\tetra)/\tetra &= \ZZ_{2}, \\
    N(\triv)/\triv &= \SO(3), & N(\ZZ_{n})/\ZZ_{n} & = \OO(2) \text{ for } n\ge 2, & N(\DD_{n})/\DD_{n} & = \ZZ_{2} \text{ for } n \ge 3,
\end{align*}
and $N(\DD_{2})/\DD_{2} = \GS_{3}$, the symmetric group of 3 elements.
\end{prop}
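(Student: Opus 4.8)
The plan is to go through the nine conjugacy classes in the list~\eqref{eq:closed_subgroups} one by one, using repeatedly that $H \subseteq N(H)$ and that $N(H)$ is itself a closed subgroup of $\SO(3)$, hence again appears in that list. The cases $N(\SO(3)) = \SO(3)$ and $N(\triv) = \SO(3)$ are immediate. For the two maximal subgroups $\octa$ and $\ico$ of $\SO(3)$, the group $N(H)$ lies between $H$ and $\SO(3)$; since a generic rotation conjugates the cube- (or dodecahedron-) symmetry group to a genuinely different subgroup, neither $\octa$ nor $\ico$ is normal in $\SO(3)$, so $N(\octa) = \octa$ and $N(\ico) = \ico$. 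Since $\tetra$ has index $2$ in $\octa$ it is normal in $\octa$, so $\octa \subseteq N(\tetra)$; the maximality of $\octa$ together with the non-normality of $\tetra$ in $\SO(3)$ (for the same reason as above) then forces $N(\tetra) = \octa$. Realizing $\DD_{2}$ as the three $\pi$-rotations about the coordinate axes (together with the identity), an element $g$ normalizes $\DD_2$ exactly when it permutes those three axes, i.e.\ exactly when $g$ is a signed permutation matrix of determinant $1$; the group of such matrices is precisely $\octa$, so $N(\DD_{2}) = \octa$.

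For the remaining classes I would use the ``axis'' argument: conjugation by $g$ carries the rotation axis of $R$ to that of $gRg^{-1}$, so $N(H)$ must preserve the configuration of rotation axes of the nontrivial elements of $H$, counted with the number of elements sharing each axis. If $H$ is $\SO(2)$ or $\ZZ_n$ with $n \ge 2$, every nontrivial element of $H$ is a rotation about the $z$-axis, so $N(H)$ fixes that axis as an unoriented line, which means $N(H) \subseteq \OO(2)$; conversely $\OO(2)$ normalizes $H$, since the rotations about $z$ centralize $H$ and the flip $\sigma\colon(x,y,z)\mapsto(x,-y,-z)$ sends a rotation of angle $\theta$ about $z$ to one of angle $-\theta$, preserving both $\SO(2)$ and the inversion-closed set $\ZZ_n$. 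Hence $N(\SO(2)) = N(\ZZ_n) = \OO(2)$ for $n \ge 2$. For $H = \OO(2)$, note that $\SO(2)$ is its identity component; conjugation preserves the connected component of the identity, so every $g \in N(\OO(2))$ satisfies $g\SO(2)g^{-1} = \SO(2)$, whence $g \in N(\SO(2)) = \OO(2)$ and $N(\OO(2)) = \OO(2)$.

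The one genuinely computational case is $\DD_n$ for $n \ge 3$. Here the $z$-axis is the unique axis shared by more than two nontrivial elements of $\DD_n$ (it carries the $n - 1 \ge 2$ nontrivial rotations of the cyclic part $\ZZ_n$, whereas each of the $n$ ``flip'' axes lies in the $xy$-plane and belongs to a single element); therefore $N(\DD_n) \subseteq \OO(2)$. Writing a general element of $\OO(2)$ as $R_\theta$ or $R_\theta\sigma$, a direct computation of $R_\theta(R_{2\pi k/n}\sigma)R_{-\theta}$ and of $(R_\theta\sigma)(R_{2\pi k/n}\sigma)(R_\theta\sigma)^{-1}$ shows that such an element normalizes $\DD_n$ if and only if $\theta$ is an integer multiple of $\pi/n$. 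These elements form the dihedral group generated by the order-$2n$ rotation $R_{\pi/n}$ about $z$ together with $\sigma$, which is $\DD_{2n}$; hence $N(\DD_n) = \DD_{2n}$ for $n \ge 3$. This is the step I expect to require the most care: getting the conjugation formulas and the resulting index right, and noting that $\DD_2$ is genuinely exceptional (it has no distinguished axis, which is exactly why its normalizer jumps up to $\octa$).

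The quotient groups then follow by inspection. $N(H)/H$ is trivial whenever $N(H) = H$, i.e.\ for $\SO(3)$, $\OO(2)$, $\octa$, $\ico$; and $N(\triv)/\triv = \SO(3)$. Next, $N(\SO(2))/\SO(2) = \OO(2)/\SO(2) \cong \ZZ_2$, generated by the class of $\sigma$; $N(\tetra)/\tetra = \octa/\tetra$ has order $2$, hence is $\ZZ_2$; and $N(\DD_n)/\DD_n = \DD_{2n}/\DD_n \cong \ZZ_2$ for $n \ge 3$, being of index $2$. For $\ZZ_n$ with $n \ge 2$ we get $N(\ZZ_n)/\ZZ_n = \OO(2)/\ZZ_n \cong \OO(2)$, because $\ZZ_n$ is a normal subgroup of $\OO(2)$ lying inside $\SO(2)$, with $\SO(2)/\ZZ_n \cong \SO(2)$ and the residual order-$2$ factor still acting by inversion. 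Finally, $N(\DD_2)/\DD_2 = \octa/\DD_2$ acts on the three coordinate axes with kernel exactly the determinant-$1$ diagonal sign matrices, that is $\DD_2$ itself; so this quotient embeds in $\GS_3$ and, having order $6$, equals $\GS_3$.
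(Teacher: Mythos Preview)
Your proof is correct, modulo a small wording slip: in the $\DD_n$ paragraph you say the $z$-axis is ``the unique axis shared by more than two nontrivial elements,'' but for $n=3$ it carries exactly $n-1=2$ of them. You want ``more than one'' (or ``at least two''); the argument you actually run in the parenthetical is the right one.

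Your overall strategy differs from the paper's in an interesting way. The paper's recurring device is \emph{maximality plus simplicity of $\SO(3)$}: once one knows $K \subseteq N(H)$ for some maximal closed subgroup $K$, the only possibilities are $N(H)=K$ or $N(H)=\SO(3)$, and the latter would force $H \trianglelefteq \SO(3)$, which is ruled out because $\SO(3)$ is simple. This single trick disposes of $\OO(2)$, $\octa$, $\ico$ directly, and finishes $\SO(2)$, $\ZZ_n$, $\tetra$, $\DD_2$ once the relevant maximal subgroup is exhibited inside the normalizer. You instead argue geometrically: the normalizer must permute the rotation axes of $H$ with their multiplicities, which immediately bounds $N(\SO(2))$, $N(\ZZ_n)$, $N(\DD_n)$ inside $\OO(2)$; you handle $\OO(2)$ itself via its identity component, and $\DD_2$ via the signed-permutation-matrix description of $\octa$. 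Both routes are clean; the paper's is more uniform, yours is more visual and avoids invoking simplicity of $\SO(3)$.

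For the quotients there are also stylistic differences worth noting. The paper pins down $\OO(2)/\ZZ_n \cong \OO(2)$ by writing an explicit surjective homomorphism $\sigma^\varepsilon r \mapsto \sigma^\varepsilon r^n$ with kernel $\ZZ_n$; your semidirect-product sketch is correct but less concrete. For $\octa/\DD_2$ the paper passes through the classical isomorphism $\octa \cong \GS_4$ (permutations of the cube's diagonals) and identifies $\DD_2$ with the Klein four-group inside $\GS_4$, whereas your argument via the faithful action on the three coordinate axes is more direct and arguably neater.
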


\begin{proof}
That $N(\SO(3)) = \SO(3)$ and $N(\triv) = \SO(3)$ are obvious. The normalizers of $\OO(2)$, $\octa$ and $\ico$ are equal to themselves because each of these groups is a \emph{maximal} subgroup of $\SO(3)$ and because $\SO(3)$ is a \emph{simple} group (it has no non-trivial \emph{normal} subgroups). Given any rotation $r_{\theta} \in \SO(2)$, we have
\begin{equation*}
    \sigma r_{\theta} \sigma = r_{-\theta}
\end{equation*}
where $\sigma :(x,y,z)\mapsto (x,-y,-z)$. Hence $\OO(2) \subset N(\SO(2))$ but since $\OO(2)$ is maximal and $\SO(3)$ is simple we get that $N(\SO(2)) = \OO(2)$. $\tetra$ is a subgroup of index 2 of $\octa$ and is thus a normal subgroup of $\octa$. Therefore, we have $\octa \subset N(\tetra)$ but since $\octa$ is maximal and $\SO(3)$ is simple, this proves that $N(\tetra) = \octa$. Next consider $\ZZ_{n}$ for some $n\ge 2$. Since $\ZZ_{n}$ is a subgroup of the abelian group $\SO(2)$ we have $\SO(2) \subset N(\ZZ_{n})$. Moreover
\begin{equation*}
    \sigma r_{2\pi/n} \sigma = r_{-2\pi/n} \in \ZZ_{n}
\end{equation*}
and therefore $\OO(2) \subset N(\ZZ_{n})$. Using again the maximality argument, we get $N(\ZZ_{n}) = \OO(2)$. Finally, consider $\DD_{n}$ for some $n\ge 2$. Because $\DD_{n}$ is subgroup of index 2 of $\DD_{2n}$, we have $\DD_{2n} \subset N(\DD_{n})$. Suppose first that $n > 2$ and let $g \in N(\DD_{n})$. Then $gr_{2\pi/n}g^{-1}$ is a rotation of angle $\pm 2\pi/n$ and its axis is therefore the $z$-axis. Hence $g(e_{3}) = \pm e_{3}$. Besides, $g\sigma g^{-1}$ is a rotation of order 2 whose axis is not the $z$-axis (otherwise we would have $g(e_{1}) = \pm e_{3}$). Therefore we have
\begin{equation*}
    g\sigma g^{-1} = r_{2k\pi/n}\sigma r_{-2k\pi/n} = r_{4k\pi/n}\sigma,
\end{equation*}
for some $k \in\set{0,1, \dotsc , n-1}$. Now if $g(e_{3}) = e_{3}$, we have $g = r_{\theta}$ and necessarily $2\theta = 4k\pi/n$. If $g(e_{3}) = -e_{3}$ we have $g = \sigma r_{\theta}$ and necessarily $2\theta = 4k\pi/n$. In both cases, this means that $g \in \DD_{2n}$ and we conclude that $N(\DD_{n}) = \DD_{2n}$. Suppose now that $n=2$, in which case $\DD_{2}$ is the \emph{Klein group}
\begin{equation*}
    \DD_{2} := \set{e, R_{e_{1}}, R_{e_{2}}, R_{e_{3}}},
\end{equation*}
where $R_{e_{j}}$ denotes the rotation by angle $\pi$ around $e_{j}$. For each element $g$ of the octahedral group $\octa$, we have $g(e_{i}) = \pm e_{j}$ and hence
\begin{equation*}
    gR_{e_{i}}g^{-1} = R_{e_{j}}.
\end{equation*}
Therefore $\octa \subset N(\DD_{2})$. Using again the maximality argument, we conclude that $N(\DD_{2}) = \octa$.

For the computation of the quotient groups, all of them are obvious but $N(\ZZ_{n})/\ZZ_{n}$ and $N(\DD_{2})/\DD_{2}$. Concerning the first quotient group, we remark that every element in $\OO(2)$ can be written uniquely as
\begin{equation*}
    \sigma^{\varepsilon}r
\end{equation*}
where $\varepsilon = 0,1$ and $r\in \SO(2)$ is a rotation around the $z$-axis. Consider now the map
\begin{equation*}
    \phi : \OO(2) \to \OO(2), \qquad \sigma^{\varepsilon}r \mapsto \sigma^{\varepsilon}r^{n}.
\end{equation*}
One can check that this map is a surjective, group morphism and that its kernel is exactly $\ZZ_{n}$. This establish the isomorphism between $\OO(2)/\ZZ_{n}$ and $\OO(2)$.

For the second quotient group, we recall that there is a well-known isomorphism from the symmetry group of the cube $\octa$ onto the group of permutations of the diagonals of the cube, that is $\GS_{4}$. Under this isomorphism, the subgroup $\DD_{2}$ of $\octa$ is sent to
\begin{equation*}
    \mathcal{D}_{2} := \set{e, (12)(34), (13)(24), (23)(14)},
\end{equation*}
which is a normal subgroup of $\GS_{4}$. Let $\GS_{3}$ be the subgroup of $\GS_{4}$ which fixes the element $4$. Then $\GS_{4}$ is the semi-direct product of $\mathcal{D}_{2}$ by $\GS_{3}$. In particular, the quotient $\GS_{4}/\mathcal{D}_{2}$ is isomorphic to $\GS_{3}$. This subgroup $\GS_{3}$ corresponds geometrically to the subgroup of $\octa$ generated by a rotation of angle $2\pi/3$ around a diagonal and a rotation of angle $\pi$ around any diagonal perpendicular to the first one, that is a conjugate of the group $\DD_{3}$.
\end{proof}

% ----------------------------------------------------------------
% ----------------------------------------------------------------

\section{Trace formulas for $V^{H}$}
\label{sec:TrFrm}

Let $(V, \rho)$ be a finite dimensional linear representation of $\SO(3)$, which admits the following harmonic isotypic decomposition :
\begin{equation}\label{eq:DecHar3D}
    V\cong\bigoplus_{k=0}^{n}\alpha_{k}\HH^{k}
\end{equation}
where $\alpha_{k}$ is the multiplicity of the space $\HH^{k}$ in the decomposition.
Using the trace formula \eqref{eq:trace_formula} the following explicit formulas can easily be obtained \cite{GSS88,Auf10}
\begin{equation}
\begin{split}
  \dim V^{\ZZ_{p}} & = 2\sum_{k=0}^{n} \alpha_{k}\left[ \frac{k}{p}\right]+\sum_{k=0}^{n} \alpha_{k}\\
  \dim V^{\DD_{p}} & = \sum_{k=0}^{n} \alpha_{k}\left[ \frac{k}{p}\right]+\sum_{k=0}^{\left[ \frac{n}{2}\right]}\alpha_{2k}\\
  \dim V^{\tetra} & = \sum_{k=0}^{n} \alpha_{k}\left(2\left[ \frac{k}{3}\right]+\left[ \frac{k}{2}\right]-k+1\right)\\
  \dim V^{\octa} & = \sum_{k=0}^{n} \alpha_{k}\left(\left[ \frac{k}{4}\right]+\left[ \frac{k}{3}\right]+\left[ \frac{k}{2}\right]-k+1\right)\\
  \dim V^{\ico} & = \sum_{k=0}^{n} \alpha_{k}\left(\left[ \frac{k}{5}\right]+\left[ \frac{k}{3}\right]+\left[ \frac{k}{2}\right]-k+1\right)\\
\end{split}
\end{equation}

% ----------------------------------------------------------------
% ----------------------------------------------------------------

\end{document}